\newif\ifdraft\draftfalse  
\newif\ifanon\anonfalse     
\newif\iffull\fullfalse   
\newif\iflongrefs\longrefsfalse 
\newif\ifbackref\backreffalse 
\newif\ifsooner\soonerfalse
\newif\iflater\laterfalse
\newif\ifcamera\cameratrue    
\newif\ifcheckpagebudget\checkpagebudgetfalse
\newcommand{\xxx}{}
\makeatletter \@input{texdirectives.tex} \makeatother
\newcommand\citepos[1]{\citeauthor{#1}'s\ (\citeyear{#1})}
\definecolor{darkblue}{rgb}{0.0,0.0,0.3}
\let\cite=\citep
\definecolor{dkblue}{rgb}{0,0.1,0.5}
\definecolor{dkgreen}{rgb}{0,0.4,0}
\definecolor{dkred}{rgb}{0.6,0,0}
\definecolor{linkColor}{rgb}{0,0,0.5}
\newcommand\maybecolor[1]{\color{#1}}
\definecolor{lightblue}{rgb}{0.2,0.2,1.0}
\definecolor{dkred}{rgb}{0.5, 0.0,0.0}
\definecolor{lightgrey}{rgb}{0.8,0.8,0.8}
\definecolor{dkblue}{rgb}{0,0.1,0.5}
\definecolor{dkgreen}{rgb}{0,0.4,0}
\definecolor{dkred}{rgb}{0.6,0,0}
\definecolor{linkColor}{rgb}{0,0,0.5}
\definecolor{lightgray}{rgb}{.9,.9,.9}
\definecolor{darkgray}{rgb}{.4,.4,.4}
\definecolor{purple}{rgb}{0.65, 0.12, 0.82}
\let\ls\lstinline
\newcommand{\kw}[1]{\ensuremath{\mathsf{#1}}}
\def\lstcode#1#2#3{\lstinputlisting[linerange=#2-#3]{Code/#1}}
\def\lstfrag#1/#2.{\lstcode{#1}{#2Begin}{#2End}}
\newcommand{\atforbindingcolon}{\mathcode`@="003A}
\definecolor{gray}{rgb}{0.8,0.8,0.8}
  \newcommand{\addToLabel}[1]{%
    \protected@edef\@currentlabel{\@currentlabel#1}%
  }
\newaliascnt{theoremI}{propositionI}
\newaliascnt{lemmaI}{propositionI}
\newaliascnt{definitionI}{propositionI}
\newaliascnt{corollary}{propositionI}
\newtheorem{corollary}[corollary]{Corollary}
\newenvironment{lemma}[1][]{%
  \lemmaI[#1]%
  \ifx\relax#1\relax\else
  \fi
}{%
  \endlemmaI
}
\newenvironment{theorem}[1][]{%
  \theoremI[#1]%
  \ifx\relax#1\relax\else
  \fi
}{%
  \endtheoremI
}
\newcommand{\hbra}{
  \hbox to \columnwidth{\vrule width0.3mm height 1.8mm depth-0.3mm
    \leaders\hrule height1.8mm depth-1.5mm\hfill
    \vrule width0.3mm height 1.8mm depth-0.3mm}}
\newcommand{\hket}{
  \hbox to \columnwidth{\vrule width0.3mm height1.5mm
    \leaders\hrule height0.3mm\hfill
    \vrule width0.3mm height1.5mm}}
\newcommand{\ratio}{.35}
\noindent\textbf{#1}\\[-.8ex]\hbra\vspace{-2ex}}
\hket\vspace{1ex}}
\newcounter{rule}
\newcommand{\True}{\kw{true}}   
\newcommand{\sig}[1]{\mathcal{S}}
\newcommand\nquad{\!\!\!\!}
\newcommand\nqquad{\nquad\nquad}
\newcommand\lam[2]{\lambda #1.#2}
\newcommand\tarr[2]{\ensuremath{#1{\longrightarrow}~#2}} 
\def\Snospace~{\S{}}
\newcommand*{\ii}[1]{\ensuremath{\mbox{\textit{#1}}}}
\newcommand\fstar{F$^\star$\xspace}
\newcommand\emf{{\sc emf}$^\star$\xspace}
\newcommand\deflang{{\sc dm}\xspace} 
\newcommand\emfST{{\sc emf}$^\star_{\text {\sc st}}$\xspace} 
\newcommand{\un}[1]{\ensuremath{\underline{#1}}}
\newcommand{\F}[2]{\mathrm{F}_{#1}~#2}
\newcommand{\G}[3]{\mathrm{G}^{#1}_{#2}(#3)}
\newcommand{\returnT}[1]{\mathrm{\bf return}_{\teff}~#1}
\newcommand{\bindT}[3]{\mathrm{\bf bind}_{\teff}~#1~\mathrm{\bf to}~#2~\mathrm{\bf in}~#3}
\newcommand{\sub}{\mathrm{s}}
\newcommand{\sG}{\sub_\Gamma}
\newcommand{\neu}{\ensuremath{n}}
\newcommand{\teff}{\ensuremath{\tau}}
\newcommand{\narr}{\xrightarrow{\neu}}
\renewcommand{\tarr}{\xrightarrow{\teff}}
\newcommand{\earr}{\xrightarrow{\epsilon}}
\newcommand{\dneg}[1]{(#1 -> \typez) -> \typez}
\newcommand{\DmG}{\Delta\mid\Gamma}
\newcommand{\fst}[1]{\mathrm{\bf fst}(#1)}
\newcommand{\snd}[1]{\mathrm{\bf snd}(#1)}
\newcommand{\inl}[1]{\mathrm{\bf inl}(#1)}
\newcommand{\inr}[1]{\mathrm{\bf inr}(#1)}
\newcommand{\mycases}[3]{\mathrm{\bf case}~#1~\mathrm{\bf inl}~#2;~\mathrm{\bf inr}~#3}
\newcommand{\eqdef}{=_{\mathrm{\bf def}}}
\newcommand{\bang}{\,!\,}
\newcommand{\cps}[1]{\ensuremath{#1^{\star}}}          
\newcommand{\cpst}[0]{$\star$-translation\xspace}  
\newcommand{\cpsts}[0]{$\star$-translations\xspace}  
\newcommand{\stg}{\mathrel{\lesssim}}
\definecolor{dkblue}{rgb}{0,0.1,0.5}
\definecolor{dkgreen}{rgb}{0,0.4,0}
\definecolor{dkred}{rgb}{0.6,0,0}
\definecolor{dkpurple}{rgb}{0.7,0,1.0}
\definecolor{olive}{rgb}{0.4, 0.4, 0.0}
\definecolor{teal}{rgb}{0.0,0.4,0.4}
\definecolor{azure}{rgb}{0.0, 0.5, 1.0}
\newcommand{\comm}[3]{\ifcheckpagebudget\else\ifdraft{\maybecolor{#1}[#2: #3]}\fi\fi}
\newcommand{\nik}[1]{\comm{dkpurple}{Nik}{#1}}
\newcommand{\ch}[1]{\comm{teal}{CH}{#1}}
\newcommand{\gm}[1]{\comm{dkgreen}{GM}{#1}} 
\newcommand{\gdp}[1]{\comm{blue}{GDP}{#1}} 
\newcommand{\aseem}[1]{\comm{magenta}{Aseem}{#1}}
\newcommand*{\EG}{e.g.,\xspace}
\newcommand*{\IE}{i.e.,\xspace}
\newcommand*{\ETAL}{et al.\xspace}
\newcommand*{\ETC}{etc.\xspace}
\newcommand\surl[1]{{\small\url{#1}}}
\newcommand\ssurl[1]{{\scriptsize\url{#1}}}
\begin{document}
\toappear{}
\titlebanner{}        
\preprintfooter{Draft\qquad}   

\title{ {
\ifcamera
\fontsize{18}{0}\selectfont
\else
\Huge
\fi
Dijkstra Monads for Free}\ifanon\vspace{-2cm}\fi}

\xxx{}


\authorinfo{
\ifanon\else
            Danel Ahman$^{1,2}$ \quad
            C\u{a}t\u{a}lin Hri\c{t}cu$^{1,3}$ \quad
            Kenji Maillard$^{1,3,4}$ \quad
            Guido Mart\'inez$^{3,5}$ \\[0.5em]
            Gordon Plotkin$^{1,2}$ \quad
            Jonathan Protzenko$^1$ \quad
            Aseem Rastogi\ifcamera$^6$\else$^1$\fi \quad
            Nikhil Swamy$^1$
\fi}
           {
\ifanon\else
\vspace{0.2cm}
            $^1$Microsoft Research\ifcamera, USA\fi\qquad
            $^2$University of Edinburgh\ifcamera, UK\fi\qquad
            $^3$Inria Paris\ifcamera, France\fi\ifcamera\\[0.5em]\else\qquad\fi
            $^4$ENS Paris\ifcamera, France\fi\qquad
            $^5$\ifcamera UNR, Argentina\else Rosario National University\fi\qquad
            \ifcamera$^6$Microsoft Research, India\fi
\fi}{}

\maketitle

\begin{abstract}
{\em Dijkstra monads} enable a dependent type theory to be enhanced
with support for specifying and verifying effectful code via weakest
preconditions.
Together with their closely related counterparts, {\em Hoare monads},
they provide the basis on which verification tools like \fstar, Hoare
Type Theory (HTT), and Ynot are built.

We show that Dijkstra monads can be derived ``for free''
by applying a continuation-passing style (CPS) translation to the
standard monadic definitions of the underlying computational effects.
Automatically deriving Dijkstra monads in this way provides a
correct-by-construction and efficient way of reasoning about
user-defined effects in dependent type theories.

We demonstrate these ideas in \emf, a new dependently typed
calculus, validating it via both formal proof and a prototype
implementation within \fstar. Besides equipping \fstar with a
more uniform and extensible effect system, \emf enables
a novel mixture of intrinsic and extrinsic proofs within \fstar.
\end{abstract}


\category{D.3.1}{Programming Languages}{Formal Definitions and Theory}[Semantics]
\category{F.3.1}{Logics and Meanings of Programs}{Specifying and Verifying and Reasoning about Programs}[Mechanical verification]
\keywords{\hspace{-1.5ex} verification; proof assistants; effectful programming; dependent types}

\section{Introduction}
\label{sec:intro}



In \citepos{Dijkstra75} weakest precondition semantics, stateful
computations transform postconditions, relating results and final
states to preconditions on input states. One can express 
such semantics
via 
a monad of predicate transformers, a so-called ``Dijkstra
monad''~\citep{fstar-pldi13, Jacobs15}. For instance, in the case of state,
the following monad arises:






{\lstset{keepspaces=true}
\begin{lstlisting}
       WP_ST a = post a -> pre        where post a = (a * state) -> Type
                                                                      pre = state -> Type

       return_WP_ST x post s0 = post (x, s0)
       bind_WP_ST f g post s0 = f (fun (x, s1) -> g x post s1) s0
\end{lstlisting}
}

\noindent The {\em weakest precondition (WP)} of a pure term \ls$e$ is
computed to be \ls$return_WP_ST e$,
and the WP of the sequential composition \ls$let x = e1 in e2$ is
computed to be \ls$bind_WP_ST wp1 (fun x. wp2)$, where \ls$wp1$
and \ls$wp2$ are the WPs of \ls$e1$ and \ls$e2$ respectively.

Building on previous work by \citet{nmb08htt},
\citet{fstar-pldi13, mumon}   
designed and implemented \fstar, a
dependently typed programming language whose type system can express 
WPs for higher-order, effectful programs 
via Dijkstra monads.

While this technique of specifying and verifying programs has been
relatively successful, there is 
still 
room for
improvement. Notably, in the version of \fstar described by~\citet{mumon}
  specifying Dijkstra monads in \fstar is 
a tedious, manual
process, requiring delicate meta-theoretic arguments to establish the
soundness of a user-provided predicate-transformer semantics with
respect to the semantics of effectful programs. These typically
require proofs of various correctness and admissibility conditions,
including the correspondence to the operational semantics and the
monad laws.
Furthermore, only a handful of primitively
supported effects are provided by the previous version of \fstar, and extending it with
user-defined effects is not possible.\ch{Why are we using present
  for something that's past?}

%



\ifsooner
\ch{How about calling this just ``Dijkstra monads for free'' and
  the next subsection ``Dijkstra monads for profit: ...''?}
\nik{Not wedded to these sub-headings at all. For fun/for profit has a well-established duality; for free/for profit does not. Rephrased it. But, will keep thinking about better headings ... welcoming more suggestions}
\fi

Rather than being given manually, we show that these predicate
transformers can be automatically derived by CPS'ing purely
functional definitions of monadic effects (with answer
type \ls$Type$).
For instance, rather than defining \ls$WP_ST$, one can simply compute it by
CPS'ing the familiar ST monad
(i.e., \ls$state -> a * state$), deriving 
\begin{lstlisting}
    WP_ST a = ((a * state) -> Type) -> state -> Type   $\qquad\mbox{(unfolded)}$
\end{lstlisting}

We apply this 
technique of deriving Dijkstra monads to \fstar. Our goal
is to make \fstar's effect system easier to configure and extensible
beyond 
its primitive effects. 
To do so,  we proceed as
follows:


\paragraph*{A monadic metalanguage}
We introduce \deflang, a simply typed, pure, monadic
metalanguage in which one can, in the spirit of \citet{Wadler92},
define a variety of monadic effects, ranging from state and
exceptions, to continuations. 

\paragraph*{A core dependent type theory with monadic reflection} To formally
study our improvements to \fstar, we define a new dependently typed
core calculus, \emf (for Explicitly Monadic \fstar) that features an
extensible effect system. \emf is loosely based on the Calculus of
Constructions~\cite{coquand1988} with (among other features): (1) a
predicative hierarchy of non-cumulative universes; (2) a
weakest-precondition calculus for pure programs; (3) refinement types;
and (4) a facility for representing user-defined effects using the
monadic reflection and reification of~\citet{Filinski94}, adapted to
the dependently typed setting. New effects can be introduced into the
language by defining them in terms of the built-in pure constructs,
related to each other via monad morphisms;  
each such effect
obtains a suitable weakest precondition calculus derived from the
underlying pure WPs. We prove the calculus strongly normalizing and
the WP calculus sound for total correctness verification, for both
pure and effectful programs.

\paragraph*{A CPS translation}
We 
give a type-directed CPS
translation from \deflang to \emf.
%
%
This can be used to extend \emf with
a new effect. One starts by defining a monadic effect (say \ls$ST$)
in \deflang.  Next, via the 
translation, 
one obtains the Dijkstra variant of that effect (\ls$WP_ST$) as a monotone, conjunctive predicate \emf transformer monad. 
Finally, a second translation from \deflang produces expression-level
terms representing monadic computations in \emf. A logical
relations proof shows that monadic computations are correctly specified by their
predicate transformers. We 
give examples of
these translations  
for monadic effects, such as 
state,
exceptions, information-flow control, continuations, and some
combinations thereof.

\paragraph*{Intrinsic and extrinsic proofs in \emf}
Effectful programs in \emf can be proven correct using one or both of
two different reasoning styles.
First, using the WP calculus, programs can be proven
intrinsically, by decorating their definitions with specifications
that must be proven to be at least as strong as their WPs. We refer
to this as the \emph{intrinsic} style, already familiar to users
of \fstar, and other tools like HTT~\citep{nmb08htt},
Dafny~\citep{leino10dafny}, and Why3~\cite{filliatre13why3}.

Second, through monadic reification, \emf allows terminating
effectful programs to be revealed as
their underlying pure implementations. Once reified,
one can reason about them
via the computational behavior of
their definitions.\gdp{I don't understand this last sentence}
As such, one may define effectful programs with
relatively uninformative types, and prove properties about them as
needed, via reification. This \emph{extrinsic} style of proving is
familiar to users of systems like Coq or Isabelle, where it is
routinely employed to reason about pure functions; using monads this
style extends smoothly to terminating effectful programs.
As in Coq an Isabelle, this extrinsic style only works for terminating
code; in this paper we do not consider divergent computations and
only discuss divergence as future work (\autoref{sec:conclusion})
%
%
%

\paragraph*{Primitive effects in a call-by-value semantics} We see \emf 
as a meta-language
%
%
in which to analyze and describe the semantics of terms in an
object language, \emfST, a call-by-value programming language with primitive
state. In the spirit of~\citet{Moggi89}, we show that \emf{} programs
that treat their \ls$ST$ effect abstractly soundly model \emfST
reductions---technically, we prove a
simulation between \emfST and \emf.
As such, our work is a strict improvement on the prior
support for primitive effects in \fstar: despite programming and
proving programs in a pure setting, stateful programs can still be
compiled to run efficiently in the primitively effectful \emfST, while
programs with other user-defined effects (\EG information-flow control) can, unlike before, be executed via their
pure encodings.

\paragraph*{A prototype implementation for \fstar}
We have adapted \fstar to benefit from the theory developed in this
paper, using a subset of \fstar itself as an implementation
of \deflang, and viewing \emfST as a model of its existing extraction
mechanism to OCaml.
Programmers can now configure \fstar's effect system using simple
monadic definitions, use \fstar to prove these definitions
correct, and then use our CPS transformation to derive the
Dijkstra monads required to configure \fstar's existing type-checker.
%
%
To benefit from the new extrinsic proving capabilities, we also
extended \fstar with two new typing rules, and changed its normalizer,
to handle monadic reflection and reification.
%
\ifsooner
\ch{Why not explicitly call out the extension of the extraction
  mechanism to support user-defined effects that go beyond ML effects.
  Next phrase seems to rely on it, but only in an implicit way.}
\nik{It's very preliminary and brittle right now ... even more so
than the type-checker. I've only managed to extract and
run \ls$incr$. If it improves in the next few days, I'll be happy to
highlight it further}
\ch{OK, let's see this later then.}
\fi

Several examples show how 
our work allows \fstar to be easily
extended beyond the primitive effects 
already supported, without
compromising its efficient primitive effect compilation strategy;
and how 
the new extrinsic proof style places reasoning about terminating effectful
programs in \fstar on an equal footing with its support for reasoning about
pure programs.

\subsection{Summary of Contributions}

The central contribution of our work is designing three closely
related lambda calculi, studying their metatheory and the connections
between them, and applying them to provide a formal and
practical foundation for a user-extensible effect system
for \fstar. Specifically,

\begin{enumerate}
\item \emf: A new dependent type theory with
      user-extensible, monadic effects; monadic reflection and
      reification; WPs; and refinement types. We prove that \emf is strongly
      normalizing and that its WPs are sound for total
      correctness (\autoref{sec:emf}).
\ifsooner
\ch{Why bring up {\em total}
        correctness here?  In \emf total = partial correctness, but in
        general our WPs won't always enforce termination.  In
        practice, it seems very likely that we will use both Pure and
        Div as the starting points for defining various effects.  My
        impression is that termination is orthogonal to the rest of
        the story, with the exception of extrinsic reasoning, for
        which we really seem to need termination.}
\nik{What would you rather say? Also see footnote 1.}
\ch{We could just remove the ``for total correctness'' part here and
  deemphasize that throughout. I like the footnote, btw.}
\fi

\item \deflang: A simply typed language to define the expression-level monads
      that we use to extend \emf with effects. We define a CPS transformation of \deflang terms to
      derive Dijkstra monads from expression-level monads,
      as well as an elaboration of \deflang terms
       to \emf. Moreover, elaborated terms are proven
      to be in relation with their WPs (\autoref{sec:dm4f}).
      This is the first formal characterization of the relation
      between WPs and CPS at arbitrary order.
 
\item \emfST: A call-by-value language with primitive state,
      whose reductions are simulated by well-typed \emf
      terms (\autoref{sec:stateful}).

\item An implementation of these ideas
      within \fstar (\autoref{sec:emf-impl}, \autoref{sec:dm4f-impl})
      and several examples of free Dijkstra monads for
      user-defined effects (\autoref{sec:examples}). We
      highlight, in particular, the new ability to reason
      extrinsically about effectful terms.
\end{enumerate}

The \ifanon anonymous\fi{} auxiliary materials
\ifanon\else(\url{https://www.fstar-lang.org/papers/dm4free})\fi{}
contain appendices with complete definitions and proofs for the formal results
in \autoref{sec:emf},
\autoref{sec:dm4f}\iffull\;(Appendix \ref{appendix} below)\else\fi,
and \autoref{sec:stateful}.
%
%
%
%
\ifanon
The non-anonymous materials contain the extensions to \fstar{}
from \autoref{sec:emf-impl} and \autoref{sec:dm4f-impl} and the
examples from \autoref{sec:examples}.
\else
The \fstar{} source code (\url{https://github.com/FStarLang/FStar})
now includes the extensions from \autoref{sec:emf-impl} and
\autoref{sec:dm4f-impl} and the examples from \autoref{sec:examples}.
\fi
%

\ifcheckpagebudget\clearpage\fi
\section{Illustrative Examples}
\label{sec:examples}

\ifsooner
\ch{For the final version we also have some nice and simple examples /
  explanations in Nik's ITP slides. There is the incr example
  done intrinsically and extrinsically in full detail. There
  is the divide\_by example for mixing effects and inserting lifts.}
\fi

We illustrate our main ideas using several examples from \fstar,
contrasting with the state of affairs in \fstar prior to our work.
We start by presenting the core WP calculus for pure programs
(\autoref{sec:ex-pure}), then show how state and exceptions can be
added to it
(\autoref{sec:ex-state}, \autoref{sec:ex-cps}, \autoref{sec:ex-reify}
and \autoref{sec:ex-combining}). Thereafter, we present several
additional examples, including modeling dynamically allocated
references (\autoref{sec:state-with-references}), reasoning about
primitive state (\autoref{sec:primitive-heap}), information-flow
control (\autoref{sec:ex-ifc}), and continuations
(\autoref{sec:ex-cont})---sections~\autoref{sec:emf}
and~\autoref{sec:dm4f} may be read mostly independently of these
additional examples.

\paragraph*{Notation:} The syntax
\ls@fun (b$_1$) ... (b$_n$) -> t@ introduces a 
lambda abstraction, where \ls@b$_i$@ ranges over binding
occurrences \ls$x:t$ declaring a variable \ls$x$ at type \ls$t$.  The
type \ls@b$_1$ -> ... -> b$_n$ -> c@ is the type of a curried
function, where \ls$c$ is a computation type---we emphasize the lack
of enclosing parentheses on the \ls@b$_i$@. We write just the type in \ls$b$
when the name is irrelevant, and \ls$t -> t'$ for
\ls$t -> Tot t'$.


\subsection{WPs for Pure Programs}
\label{sec:ex-pure}

Reasoning about purely functional programs is a relatively
well-understood activity: the type theories underlying systems like
Coq, Agda, and \fstar are already well-suited to the task.
Consider proving that pure term \ls$sqr = fun(x:int) -> x * x$ always returns
a non-negative integer.
A natural strategy is an {\em extrinsic} proof, which involves giving
\ls$sqr$ a simple type such as \ls$int -> Tot int$, the type of total
functions on integers, and then proving a lemma
\ls$forall x. sqr x >= 0$. In the case of \fstar, the proof of the lemma 
involves, first, a little computation to turn the goal into
\ls$forall x. x*x >= 0$, and then 
reasoning in the theory of integer arithmetic of the
Z3 SMT solver~\citep{MouraB08} to discharge the proof.

An alternative {\em intrinsic} proof style in \fstar involves
giving \ls$sqr$ type
\ls$x:int -> Pure int (fun post -> forall y. y>=0 ==> post y)$,
a dependent function type of the form \ls@x:t -> $c$@, where the
formal parameter \ls$x:t$ is in scope in the \emph{computation type}
$c$ to the right of the arrow. Computation types $c$ are either \ls$Tot t$
(for some type \ls$t$) or of the
form \ls$M t wp$, where \ls$M$ is an effect label, \ls$t$ is the
result type of the computation, and \ls$wp$ is a predicate transformer
specifying the semantics of the computation.
The computation type we give to \ls$sqr$ is of the form
\ls$Pure t wp$, the type of
\ls$t$-returning pure computations described by the predicate
transformer \ls$wp: (t -> Type) -> Type$, a function taking
postconditions on the result (predicates of type
\ls$t -> Type$), to preconditions. These predicate
transformers form a Dijkstra monad.
In this case,
the \ls$wp$ states that to prove any property \ls$post$ of \ls$sqr x$,
it suffices to prove \ls$post y$, for all non-negative \ls$y$---as
such, it states our goal that \ls$sqr x$ is non-negative. To
prove \ls$sqr$ can be given this type, \fstar infers a weakest
precondition for \ls$sqr x$, namely \ls$fun post -> post (x * x)$ and
aims to prove that the predicate transformer we specified is at least
as strong as the weakest one it inferred:
\ls$forall post. (forall y. y>=0 ==> post y) ==> post (x*x)$,
which is discharged automatically by Z3.
For pure programs, this intrinsic proof style may seem like
overkill and, indeed, it often is. But, as we will see, this mechanism for
reasoning about pure terms via WPs is a basic capability that we can
leverage for reasoning about terms with more complex, effectful
semantics. 

\subsection{Adding WPs for State}
\label{sec:ex-state}

Consider proving that 
\ls$incr _ = let x = get() in put (x + 1)$ produces an output
state greater than its input state. Since this program has the state
effect, a proof by extrinsic reasoning is not completely straightforward,
because reducing an effectful computation within a logic may not be
meaningful. Instead, tools like Ynot~\cite{ynot},
HTT~\cite{nmb08htt}, and \fstar only support the intrinsic proof style.
In the case of \fstar, this involves the use of
a computation type \ls$STATE0 t wp$,
where \ls$wp: WP_ST t$
and for our simple example we take
\ls$WP_ST t = ((t * int) -> Type) -> int -> Type$,
\IE the Dijkstra state monad from
\autoref{sec:intro} with \ls$state=int$.

Using the \ls$STATE0$
computation type in \fstar, one can specify for \ls$incr$ the type
\ls$unit -> STATE0 unit (fun post s0 -> forall s1. s1 > s0 ==> post ((), s1))$.
%
%
That is, to prove any postcondition \ls$post$ of \ls$incr$, it
suffices to prove
\ls$post ((), s1)$ for every \ls$s1$ greater than \ls$s0$, the
initial state---this is the statement of our goal.
The proof in \fstar currently involves:


\begin{enumerate}
\item As discussed already in \autoref{sec:intro}, one must
define \ls$WP_ST t$, its return and bind combinators, proving that
these specifications are sound with respect to the operational
semantics of state.

\item The primitive effectful actions, 
\ls$get$ and \ls$put$ are
assumed to have the types below---again, these types must be proven
sound with respect to the operational semantics of \fstar.
\begin{lstlisting}
get : unit -> STATE0 int (fun post s0 -> post (s0, s0))
put : x:int -> STATE0 unit (fun post _ -> post ((), x))
\end{lstlisting}

\item Following the rule for sequential composition sketched
in \autoref{sec:intro}, \fstar uses the specifications of
\ls$get$ and \ls$put$ to compute
\ls$bind_ST_WP wp_get (fun x -> wp_put (x + 1))$ as the WP of
\ls$incr$, which reduces to \ls$fun post s0 -> post ((), s0 + 1)$.
\ifsooner
\ch{\ls$wp_get$ and \ls$wp_put$ come a bit out of the nowhere here,
  could we go directly to the unfolded form?}
\fi

\item The final step requires proving that the computed WP is
at least as weak as the specified goal, which boils down to showing that
\ls$s0 + 1 > s0$, which \fstar and Z3 handle automatically.
\end{enumerate}

The first two steps above correspond to adding a new effect
to \fstar. The cost of this is amortized by the much more frequent and
relatively automatic steps 3 and 4. However, adding a new effect
to \fstar is currently an expert activity, carried out mainly by the
language designers themselves. This is in large part because the first
two steps above are both tedious and highly technical: a dangerous
mixture that can go wrong very easily.

Our primary goal is to simplify those first two steps,
allowing effects to be added to \fstar more easily and with fewer
meta-level arguments to trust. Besides, although \fstar supports
customization of its effect system, it only allows programmers to
specify refinements of a fixed set of existing effects inherited from ML,
namely,
state, exceptions, and divergence. For example, an \fstar programmer
can refine the state effect into three sub-effects for reading,
writing, and allocation; but, she cannot add a new effect like
alternative combinations of state and exceptions, non-determinism,
continuations, etc.  We aim for a more flexible, trustworthy mechanism
for extending \fstar beyond the primitive effects it currently
supports. Furthermore, we wish to place reasoning about terminating
effectful programs on an equal footing with pure ones,
supporting mixtures of intrinsic and extrinsic proofs for both.

\subsection{CPS'ing Monads to Dijkstra Monads}
\label{sec:ex-cps}

Instead of manually specifying \ls$WP_ST$, we program a traditional \ls$ST$
monad and derive \ls$WP_ST$ using a CPS
transform. In \autoref{sec:deflang} we formally present
\deflang, a simply typed language in which to define monadic effects.
\deflang itself contains a single primitive identity monad \ls$tau$,
which (as will be explained shortly) is used to control the CPS
transform. We have implemented \deflang as a subset of \fstar, and for
the informal presentation here we use the concrete syntax of our
implementation. What follows is an unsurprising definition of a state
monad \ls$st a$, the type of total functions from \ls$s$ to identity
computations returning a pair \ls$(a * s)$.
\ifsooner
\ch{One more silent notation switch: from \ls$state$ to \ls$s$.
  What would be the best way to keep ourselves honest? :-)}
\fi

\begin{lstlisting}
let st a = s -> tau (a * s)
let return (x:a) : st a = fun s0 -> x, s0
let bind (f:st a) (g:a -> st b) : st b = fun s0 -> let x,s1 = f s0 in g x s1
let get () : st a = fun s0 -> s0, s0
let put (x:s) : st unit = fun _ -> (), x
\end{lstlisting}

\noindent This being a subset of \fstar, we can use it to prove that this
definition is indeed a monad: proofs of the three monad laws
for \ls$st$ are discharged automatically by \fstar below (\ls$feq$ is
extensional equality on functions, and \ls$assert p$ requests \fstar to prove
\ls$p$ statically). Other identities relating combinations of
\ls$get$ and \ls$put$ can be proven similarly.

\begin{lstlisting}
let right_unit_st (f:st 'a) = assert (feq (bind f return) f)
let left_unit_st (x:'a) (f:('a -> st 'b)) = assert (feq (bind (return x) f) (f x))
let assoc_st (f:st 'a) (g:('a -> st 'b)) (h:('b -> st 'c))
   = assert (feq (bind f (fun x -> bind (g x) h)) (bind (bind f g) h))
\end{lstlisting}

\noindent We then follow a two-step recipe to add an effect like
\ls$st$
to \fstar:


\paragraph{Step 1}
To derive the Dijkstra monad variant of \ls$st$, we apply a selective
CPS transformation called the \cpst (\autoref{sec:cps}); first, on type
\ls$st a$; then, on the various monadic operations. CPS'ing only those arrows
that have \ls$tau$-computation co-domains, we obtain:

\smallskip
\begin{tabular}{lcl}
$\cps{\kw{(st\ a)}}$ & = & \ls$s -> ((a * s) -> Type) -> Type$ \\
$\cps{\kw{return}}$ & = & \ls$fun x s0 post -> post (x, s0)$ \\
$\cps{\kw{bind}}$   & = & \ls$fun f g s0 post -> f s0 (fun (x,s1) -> g x s1 post)$ \\
$\cps{\kw{get}}$    & = & \ls$fun () s0 post -> post (s0, s0)$ \\
$\cps{\kw{put}}$    & = & \ls$fun x _ post -> post ((), x)$
\end{tabular}
\smallskip

\noindent Except for a reordering of arguments, the terms above are identical
to the analogous definitions for \ls$WP_ST$.\ifsooner\ch{Right, and \ls$state$
  vs \ls$s$ vs \ls$int$}\fi{}
We prove that the \cpst preserves equality: so, having shown the monad
laws for \ls$st a$, we automatically obtain the monad laws for
$\cps{\kw{(st\ a)}}$. We also prove that every predicate transformer
produced by the \cpst is monotone (it maps weaker postconditions to
weaker preconditions) and conjunctive (it distributes over conjunctions
and universals, \IE infinite conjunctions, on the postcondition).

\paragraph{Step 2}
The \cpst yields a predicate transformer semantics for a new monadic
effect, however, we still need a way to extend \fstar with the
computational behavior of the new effect. For this, we define a second
translation, which elaborates the definitions of the new monad and its
associated actions to \ls$Pure$ computations in \fstar.
A first rough approximation of what we prove is that for a
well-typed \deflang computation $\kw{e} : \tau~\kw{t}$, its elaboration
$\un{\kw{e}}$ has type $\kw{Pure}~\un{\kw{t}}~\cps{\kw{e}}$
in \emf.

\newcommand\unst[2]{\ensuremath{\un{\mathsf{st}}~\mathsf{#1}~\mathsf{#2}}}

The first-order cases are particularly simple: for example,
$\un{\kw{return}} = \kw{return}$ has type
\ls@x:a -> Pure a (return$^\star$ x)@ in \emf; and 
$\un{\kw{get}} = \kw{get}$ has type
\ls@u:unit -> Pure s (get$^\star$ u)@ in \emf.
For a higher-order example, we sketch the elaboration of \ls$bind$ below,
writing \unst{t}{wp} for \ls@s0:s -> Pure t (wp s0)@: \\

\begin{tabular}{lcl}
$\un{\mathsf{bind}}$ \nquad&\nquad:\nquad&\nquad  \ls@wpf:(st a)$^\star$ -> f:$\unst{a}{wpf}$@\\
\nquad&\nquad\ls$->$\nquad&\nquad \ls@wpg:(a -> (st b)$^\star$) -> g:(x:a -> $\unst{b}{wpg x}$)@ \\
\nquad&\nquad\ls$->$\nquad&\nquad $\unst{b}{(bind^\star~wpf~wpg)}$ \\
\nquad&\nquad\ls$= $\nquad&\nquad \ls@fun wpf f wpg g s0 -> let x, s1 = f s0 in g x s1@ \\
\end{tabular}\\

Intuitively, a function in \deflang (like \ls$bind$) that abstracts
over computations (\ls$f$ and \ls$g$) is elaborated to a function
($\un{\mathsf{bind}}$) in \emf that abstracts both over those
computations (\ls$f$ and \ls$g$ again, but at their elaborated types)
as well as the WP specifications of those computations (\ls$wpf$
and \ls$wpg$). The result type of $\un{\mathsf{bind}}$ shows that it
returns a computation whose specification matches \ls@bind$^\star$@,
i.e., the result of the CPS'ing \cpst.


In other words, the WPs computed by \fstar for monads implemented
as \ls$Pure$ programs correspond exactly to what one gets by CPS'ing
the monads. At first, this struck us as just a happy coincidence,
although, of course, we now know that it must be so. We see our proof
of this fact as providing a precise characterization of the 
close connection between and WPs and CPS transformations.
%

\subsection{Reify and Reflect, for Abstraction and Proving}
\label{sec:ex-reify}

Unlike prior \fstar formalizations which included divergence,
primitive exception and state effects, the only primitive monad
in \emf is for \ls$Pure$ computations. Except for divergence, we can
encode other effects using their pure representations; we leave
divergence for future work.
%
Although the translations from \deflang yield pure definitions of
monads in \fstar, programming directly against those pure
implementations is undesirable, since this may break abstractions. For
instance, consider an integer-state monad whose state is expected to
monotonically increase: revealing its representation as a pure term
makes it hard to enforce this invariant. We rely
on \citepos{Filinski94} monadic reflection for
controlling abstraction.

\ifsooner
\aseem{reify and reflect are not generated or typed per monad. So, we
could introduce the repr notion here and then show typings of reify
and reflect, unless you were planning to do that later.}
\nik{reify and reflect are typed per monad. In the case of \ls$reify (e:C)$,
we normalize the computation type \ls$C$, check that it is reifiable,
and then type the term as \ls$repr C$. For reflect, we actually write
it as \ls$M.reflect e$ to indicate which aim to return. The type-checker checks that
\ls$e$ has type \ls$repr M$.}

\aseem{Also, I am not
totally clear about the distinction between STATE and ST.}
\nik{Tried sorting it out by using \ls$STATE0$ etc. Hope it helps}
\ch{Is the difference explained anywhere though?}

\aseem{The elaborated types for actions are also in the repr type, rather
than M type (the elaboration is always in Tot and Pure), but may
that's what you already mean by ``are made available'' there ?}
\nik{This is a subtle point and requires further explanation ...
The point is that although elaboration produces pure terms, once we
add those pure terms as an effect in \fstar, the rest of the program
doesn't see those terms as pure. They are seen as actions in the newly
introduced computation type. In other words, to plug in to \fstar, we
take the pure terms produced by elaboration and reflect them. I've
edited the text below ...hope it helps}
\ch{The fact that the reflection happening after elaboration is
  implicit is indeed quite tricky. The way this shows up in the
  semantics is that things like return, bind, actions, lift only
  evaluate when touched by the top-level reify, and even then it's not
  as simple as reify canceling the implicit reflect and going away,
  but actually the reify moves in for things like bind and lift.
  Anyway, this is something to well explain in S3.}
\fi

\iflater
\ch{Wondering: is there maybe a smarter version of the reify-reflect
  rule that would allow us to explicitly use reflect instead of the
  current reify + return, reify + bind, reify + act, reify + lift tricks}
\fi

Continuing our example, introducing the state effect in \fstar
produces a new computation type \ls@ST (a:Type) (wp: (st a)$^\star$)@
and two coercions:

\begin{lstlisting}
reify : ST a wp -> s0:s -> Pure (a * s) (wp s0)
reflect : (s0:s -> Pure (a * s) (wp s0)) -> ST a wp
\end{lstlisting}

\noindent The \ls$reify$ coercion reveals the representation of an
\ls$ST$ computation
as a \ls$Pure$ function, while \ls$reflect$ encapsulates a \ls$Pure$ function
as a stateful computation. As we will see in subsequent sections
(\autoref{sec:ex-combining} and \autoref{sec:state-with-references}), in
some cases to preserve abstractions, one or both of these coercions
will need to be removed, or restricted in various ways
(\autoref{sec:primitive-heap}).

To introduce the actions from \deflang as effectful actions 
in \fstar, we
\ls$reflect$ the pure terms produced by the elaboration
from \deflang to \emf, obtaining actions for the newly introduced
computation type. For example, after reflection the actions \ls$get$
and \ls$put$ appear within \fstar at the types below:

\begin{lstlisting}
get : unit -> ST s (get$^\star$ ())
put : s1:s -> ST unit (put$^\star$ s1)
\end{lstlisting}

\ch{In Section 3 we call these \ls$ST.get$ and \ls$ST.put$. Could
  consider doing that here (and in 2.2) too?}

\ch{The above implicit reification is not specific only to actions;
  \ls$ST.return$, \ls$ST.bind$, \ls$ST.lift$, they all work like this.
  And in fact when you write let below that's actually a \ls$ST.bind$, right?
  Might want to make that explicit or is it too much of a detail?}

As in \autoref{sec:ex-state}, we can still program stateful functions and
prove them intrinsically, by providing detailed specifications to
augment their definitions---of course, the first two steps of the
process there are now automatic. However, we now have a means of doing
extrinsic proofs by reifying stateful programs, as shown below
(taking \ls$s=int$).
\ifsooner
\ch{Is the oscillation between a generic \ls$s$ and \ls$s=int$ really
  needed? Why not replace \ls$s$ with \ls$int$ everywhere in this section?}
\nik{making everything so specific by putting \ls$int$ everywhere makes me sad.
is this \ls$s=int$ really hard to follow? my sense was that the
overhead for the reader is negligible ... but maybe I'm wrong.}
\fi

\begin{lstlisting}
let StNull a = ST a (fun s0 post -> forall x. post x)
let incr _ : StNull unit = let n = get() in put (n + 1)
let incr_increases (s0:s) = assert (snd (reify (incr()) s0) = s0 + 1)
\end{lstlisting}

\noindent The \ls$StNull unit$ annotation on the second line above
gives a weak specification for \ls$incr$.
However, later, when a particular property
of \ls$incr$ is required, we can recover it by reasoning extrinsically
about the reification of \ls$incr()$ as a pure term.

\subsection{Combining Monads: State and Exceptions, in Two Ways}
\label{sec:ex-combining}

To add more effects to \fstar, one can simply repeat the methodology
outlined above. For instance, one can use \deflang to define
\ls$exn a = unit -> tau (option a)$ in the obvious way (the \ls$unit$
is necessary, cf. \autoref{sec:deflang}), our automated two-step recipe
extends \fstar with an effect for terminating programs that may raise
exceptions. Of course, we would like to combine the effects to equip
stateful programs with exceptions and, here, we come to a familiar
fork in the road.

State and exceptions can be combined in two mutually incompatible
ways. In \deflang, we can define both \ls$stexn a = s -> tau ((option a) * s)$ and
\ls$exnst a = s -> tau (option (a * s))$. The former is more familiar to
most programmers: raising an exception preserves the state; the latter
discards the state when an exception is raised, which though less
common, is also useful. We focus first on \ls$exnst$ and then discuss
a variant of \ls$stexn$.

\paragraph*{Relating \ls$st$ and \ls$exnst$} Translating \ls$st$ (as before)
and \ls$exnst$ to \fstar gives us two unrelated effects \ls$ST$
and \ls$ExnST$. To promote \ls$ST$ computations to \ls$ExnST$, we
define a \ls$lift$ relating \ls$st$ to \ls$exnst$, their pure
representations in \deflang, and prove that it is a monad morphism.

\begin{lstlisting}
let lift (f:st a) : exnst a = fun s0 -> Some (f s0)
let lift_is_an_st_exnst_morphism =
 assert (forall x. feq (lift (ST.return x)) (ExnST.return x));
 assert (forall f g. feq (lift (ST.bind f g)) (ExtST.bind (lift f) (fun x -> lift (g x))))
\end{lstlisting}

Applying our two-step translation to
\ls$lift$, we obtain in \fstar a computation-type coercion
from 
\ls@ST a wp@ to \ls@ExnST a (lift$^\star$ wp)@.
\gm{Mention that this lift preserves monotonicity/conjunctivity?}
Through this coercion, and through \fstar's existing inference
algorithm~\citep{swamy11coco, mumon}, \ls$ST$ computations are
implicitly promoted to \ls$ExnST$ computations whenever needed. In
particular, the \ls$ST$ actions, \ls$get$ and \ls$put$, are implicitly
available with \ls$ExnST$. All that remains is to define an additional
action, \ls$raise = fun () s0 -> None$, which gets elaborated and
reflected to \fstar at the type
\ls$unit -> ExnST a (fun _ p -> p None)$.
\ch{This is a bit backwards,
  raise has to be defined in DM with exnst (or even exn), so before this point.
  I guess this is only about the elaboration and reification part,
  but then again that's not specific only to actions.}
\ch{Explaining actions in terms of reflection seems quite silly
  for one more reason: some effects do not have reflect. See stexnC below,
  although there this is explained in terms of hiding reflect not removing
  it. Why not take the same stance in the implementation then?}

\ls$ExnST$ programs in \fstar can be verified intrinsically
and extrinsically. For an intrinsic proof, we show \ls$div_intrinsic$
below, which raises an exception on a divide-by-zero. To prove it, we
make use of an abbreviation \ls$ExnSt a pre post$, which lets us write
specifications using pre- and postconditions instead of predicate
transformers.

\begin{lstlisting}
let ExnSt a pre post =
  ExnST a (fun s0 p -> pre s0 /\ forall x. post s0 x ==> p x)
let div_intrinsic i j : ExnSt int
  (requires (fun _ -> True))
  (ensures (fun s0 x -> match x with 
		     | None -> j=0 
		     | Some (z, s1) -> s0 = s1 /\ j <> 0 /\ z = i / j))
  = if j=0 then raise () else i / j
\end{lstlisting}

Alternatively, for an extrinsic proof, we give a weak specification for
\ls$div_extrinsic$ and verify it by reasoning about
its reified definition separately. This time, we add a call
to \ls$incr$ in the \ls$ST$ effect in case of a
division-by-zero. \fstar's type inference lifts \ls$incr$
to \ls$ExnST$ as required by the context. However, as the proof shows,
the \ls$incr$ has no effect, since the raise that follows it discards
the state.

\begin{lstlisting}
let ExnStNull a = ExnST a (fun s0 post -> forall x. post x)
let div_extrinsic i j : ExnStNull int = if j=0 then (incr(); raise ()) else i / j
let lemma_div_extrinsic i j =
  assert (match reify (div_extrinsic i j) 0 with
          | None -> j = 0
	  | Some (z, 0) -> j <> 0 /\ z = i / j)
\end{lstlisting}

Using \ls$reify$ and \ls$reflect$ we can also build exception
handlers, following ideas of Filinski~\cite{Filinski99}. For example,
in \ls$try_div$ below, we use a handler and (under-)specify that it
never raises an exception.
\ifsooner
\nik{The actual example code is a bit
uglier, requiring the closure to be hoisted because of a poor
interaction between type-inference and default effects. Should be
fixable though ...}
\fi

\begin{lstlisting}
let try_div i j : ExnSt int
    (requires (fun _ -> True))
    (ensures (fun _ x -> Option.isSome x))
  = reflect (fun s0 -> match reify (div_intrinsic i j) s0 with
                  | None -> Some (0, s0)
                  | x -> x)
\end{lstlisting}

More systematically, we can first program
a~\citet{BentonKennedy2001Exceptional} exception handler in \deflang,
namely, as a term of type
\begin{center}\ls$exnst a -> (unit -> exnst b) -> (a -> exnst b) -> exnst b$\end{center}
 and then translate it to \fstar, thereby obtaining a weakest
precondition rule for it for free. More generally, adapting
the algebraic effect
handlers of~\citet{PlotkinP09} to user-defined monads \ls$m$,
handlers can be programmed in \deflang as terms of type
\begin{center} \ls$m a -> (m b -> b) -> (a -> b) -> b$ \end{center}
and then imported to \fstar. We
leave a more thorough investigation of 
such effect handlers for Dijkstra monads to the future.

\ifsooner
\ch{Does this show that we don't really need actions but we can use
  reify and reflect to define them? I think this came up before in
  target.txt and the conclusion was yes, but that to use WPs
  (intrinsic proofs) it's still useful to have actions. Might want to
  also explain this here? or somewhere else?}
  \gdp{using reify and reflect one can define everything, not just actions. I suppose the point may be that one wishes to have available both intrinsic and extrinsic proof methods, is that right?}
\fi

\paragraph*{An exception-counting state monad: \ls$stexnC$} For another
combination of state and exceptions, we define \ls$stexnC$, which in
addition to combining state and exceptions (in the familiar way), also
introduces an additional integer output that counts the number of
exceptions that are raised.
%
%
In \deflang, we write:
%

\begin{lstlisting}
let stexnC a = s -> tau (option a * (s * int))
let return (x:a) = fun s -> Some x, (s, 0)
let bind (m:stexnC a) (f:a -> stexnC b) = fun s0 -> let r0 = m s0 in
    match r0 with
    | None, (s1, c1) -> None, (s1, c1)
    | Some r, (s1, c1) -> let res, (s, c2) = f r s1
                       in res (s, c1 + c2)
let raise () : stexnC a = fun s -> None, (s, 1)
let lift (f:st a) : stexnC a = fun s -> let x, s1 = f s in Some x, (s1, 0)
\end{lstlisting}

\noindent Notice that \ls$raise$ returns an exception count of 1.
This count is added up in \ls$bind$, \`{a} la
writer monad. Adding \ls$stexnC$
to \fstar proceeds as before. But, we need to be a bit careful with
how we use reflection. In particular, an implicit invariant
of \ls$stexnC$ is that the exception count field in the result
is non-negative
and actually counts the number of raised exceptions. If a programmer
is allowed to reflect any
\ls$s -> Pure (option a * (s * int))  wp$ into an \ls$stexnC$
computation, then this invariant can be broken. Programmers can
rely on \fstar's module
system to simply forbid the use of \ls$stexnC.reflect$ in client
modules. Depending on the situation, the module providing the effect may
still reveal a restricted version of the reflect operator to a client,
e.g., we may only provide \ls$reflect_nonneg$ to clients, which
only supports reflecting computations whose exception count is not
negative. Of course, this only guarantees that the counter
over-approximates the number of exceptions raised, which may or may
not be acceptable.

\begin{lstlisting}
let reflect_nonneg (f: s -> Pure (option a * (s * int)) wp)
    : stexnC a (fun s0 post ->
      wp s0 (fun (r, (s1, n)) -> post (r, (s1, n)) /\ n >= 0))
    = reflect f
\end{lstlisting}

\ifsooner
\ch{One interesting question (for the future?) is whether we can
  support only part of an effect to be primitive (e.g. the srtexn
  part), while some other parts (e.g. some counting) not. At the
  moment this is presented as a binary choice, but I don't think it
  should be. Maybe related to implementing extraction more than
  anything else.}
\fi

The standard combination of state and exceptions (i.e., \ls$stexn$)
was already provided primitively in \fstar. The other two combinations
shown here were not previously supported, since \fstar only allowed
OCaml effects. In the following more advanced subsections, we
present a heap model featuring dynamic allocation
(\autoref{sec:state-with-references}), the reconciliation of primitive
state and extrinsic reasoning via reify and reflect
(\autoref{sec:primitive-heap}), and encodings of two other
user-defined effects (a dynamic information-flow control monitor in
\autoref{sec:ex-ifc} and continuations in \autoref{sec:ex-cont}).



\subsection{State with References and Dynamic Allocation}
\label{sec:state-with-references}

The state monads that we have seen so far provide global state, with
\ls$get$ and \ls$put$ as the only actions. Using just these actions, we can
encode references and dynamic allocation by choosing a suitable
representation for the global state. There are many choices for this
representation with various tradeoffs, but a (simplified) model of
memory that we use in \fstar is the type \ls$heap$ shown
below:\footnote{Although expressible in \fstar, types like \ls$heap$
are not expressible in the \emf calculus of \S\ref{sec:emf} since
it lacks support for features like inductive types and universe
polymorphism.}

\begin{lstlisting}
type pre_heap = {
  next_addr: nat;
  mem   : nat -> Tot (option (a:Type & a))
}
type heap = h:pre_heap{forall (n:nat). n >= h.next_addr
                                ==> h.mem n==None}
\end{lstlisting}

A \ls$pre_heap$ is a pair of \ls$next_addr$, the next free memory
location and a memory \ls$mem$ mapping locations to possibly allocated
values. (``\ls@a:Type & a@'' is a dependent pair type of some
type \ls@a:Type@ and a value at that type). A \ls$heap$ is
a \ls$pre_heap$ with an invariant (stated as a refinement type) that
nothing is allocated beyond \ls$next_addr$.

By taking \ls$s=heap$ in the \ls$ST$ monad of the previous section, we can
program derived actions for allocation, reading, writing and
deallocating references---we show just \ls$alloc$ below; deallocation
is similar, while reading and writing require their references to be
allocated in the current state. First, however, we define an
abbreviation \ls$St a pre post$, which lets us write specifications
using pre- and postconditions instead of predicate transformers, which
can be more convenient---the \fstar keywords, \ls$requires$
and \ls$ensures$ are only there for readability and have no semantic
content.

\begin{lstlisting}
let St a pre post = ST a (fun h0 p ->
    pre h0 /\ (* pre: a predicate on the input state *)
    forall x h1. post h0 x h1 (* post relates result, initial and final states *)
       ==> p (x, h1))

abstract let ref (a:Type) = nat (* other modules cannot treat ref as nat *)

let alloc (a:Type) (init:a) : St (ref a)
 (requires (fun h -> True)) (* can allocate, assuming infinite mem. *)
 (ensures (fun h0 r h1 ->
   h0.mem r == None /\                            (* the ref r is fresh *)
   h1.mem r == Some (| a, init |) /\              (* initialized to init *)
   (forall s. r$\neq$s ==> h0.mem s == h1.mem s))) (* other refs not modified *)
    = let h0 = get () in                           (* get the current heap *)
      let r  = h0.next_addr in                     (* allocate at next_addr *)
      let h1 = {
        next_addr=h0.next_addr + 1;                (* bump and update mem *)
        mem = (fun r$^\prime$ -> if r = r$^\prime$ then Some (| a , x |) else h0.mem r$^\prime$)
      } in
      put h1; r (* put the new state and return the ref *)
\end{lstlisting}

\paragraph*{Forbidding recursion through the store} The
reader may wonder if adding mutable references would allow stateful
programs to diverge by recursing through the memory. This is forbidden
due to universe constraints. The type \ls$Type$ in \fstar includes an
implicit level drawn from a predicative, countable hierarchy of
universes. Written explicitly, the type \ls$heap$ lives in
universe \ls@Type$_{i + 1}$@ since it contains a map whose co-domain
is in \ls@Type$_i$@, for some universe level $i$. As such, while one
can allocate references like \ls$ref nat$ or
\ls$ref (nat -> Tot nat)$, importantly, \ls$ref (a -> ST b wp)$ is forbidden,
since the universe of \ls$a -> ST b wp$ is the universe of its representation
\ls@a -> h:heap -> Pure (b * heap) (wp h)@, which is
\ls@Type$_{i+1}$@. Thus, our \ls$heap$ model forbids
storing stateful functions altogether. More fine-grained
encodings are possible too, e.g., stratifying the heap into fragments
and storing stateful functions that can only read from lower strata.

\subsection{Relating \ls$heap$ to a Primitive Heap}
\label{sec:primitive-heap}

While one can execute programs using the \ls$ST$ monad instantiated
with \ls$heap$ as its state, in practice, for efficiency, the \fstar
compiler provides primitive support for state via its extraction
facility to OCaml. In such a setting, one needs a leap of faith
to believe that our model of the \ls$heap$ is
faithful to the concrete implementation of the OCaml heap, e.g., the
abstraction of \ls$ref a$ is important to ensure that \fstar programs
are parametric in the representation of OCaml references.\ch{Can't
  references be compared for equality in F*?}

More germane to this paper, compiling \ls$ST$ programs primitively
requires that they do not rely concretely on the representation
of \ls$ST a wp$ as \ls$h:heap -> Pure (a * heap) (wp h)$, since the
OCaml heap cannot be reified to a value. In \S\ref{sec:stateful}, we
show that source programs that are free of \ls$reflect$ and \ls$reify$
can indeed be safely compiled using primitive state
(\autoref{thm:simulation}). Without \ls$reflect$ and \ls$reify$, one
may rely on intrinsic proof to show generally useful properties of
programs. For example, one may use intrinsic proofs to show
that \ls$ST$ computations never use references after they are
deallocated, since reading and writing require their references to be
allocated in the current state. Note, we write \ls@r $\in$ h@ for
\ls$h.mem r == Some _$, indicating that \ls$r$ is allocated in \ls$h$.

\begin{lstlisting}
let incr (r:ref int) : St unit (requires (fun h -> r $\in$ h))
                         (ensures (fun h0 s h1 -> r $\in$ h1))
    = r := !r + 1
\end{lstlisting}

Of course, one would still like to show that \ls$incr$ increments its
reference. In the rest of this section, we show how we can safely
restore \ls$reflect$ and \ls$reify$ in the presence of primitive
state.

\paragraph*{Restoring reify and reflect for extrinsic proofs}
\label{sec:ghost-reify}

\fstar programs using primitive state are forbidden from 
using \ls$reify$ and \ls$reflect$ only in the \emph{executable} part
of a program---fragments of a program that are computationally
irrelevant (aka ``ghost'' code) are erased by the \fstar compiler and
are free to use these operators. As such, within specifications and
proofs, \ls$ST$ programs can be reasoned about extrinsically via
reification and reflection (say, for functional correctness), while
making use of intrinsically proven properties like memory safety.


To restrict their use, as described in \autoref{sec:ex-combining}, we
rely on \fstar's module system to hide both the \ls$reify$
and \ls$reflect$ operators from clients of a module \ls$FStar.State$
defining the \ls$ST$ effect. Instead, we expose to clients
only \ls$ghost_reify$, a function equivalent to \ls$reify$, but at the
signature shown below. Notice that the function's co-domain is marked
with the \ls$Ghost$ effect, meaning that it can only be used within
specifications (e.g., WPs and assertions)---any other use will be
flagged as a typing error by \fstar.

\begin{lstlisting}
  ghost_reify: (x:a -> ST b (wp x))
            -> Ghost (x:a -> s0:s -> Pure (b * s) (wp x s0))
\end{lstlisting}

The \ls$FStar.State$ module also provides 
\ls$refine_St$, a total function that allows a client to strengthen
the postcondition of an effectful function \ls$f$ to additionally
record that the returned value of \ls$f$ on any argument and input state
\ls$h0$ corresponds to the computational behavior of the (ghostly)
reification of \ls$f$. This allows a client to relate \ls$f$ to its
reification while remaining in a computationally relevant context.

\begin{lstlisting}
    let refine_St (f :(x:a -> St b (pre x) (post x)))
      : Tot (x:a -> St b (pre x) (fun h0 z h1 -> post x h0 z h1 /\
                                ghost_reify f x h0 == z, h1)
      = fun x -> STATE.reflect (reify (f x))
\end{lstlisting}

Reasoning using \ls$ghost_reify$ instead of \ls$reify$, clients can
still prove
\ls$incr_increases$ as in \autoref{sec:ex-reify}, making use of
\ls$incr$'s intrinsic specification to show that if the
reference \ls$r$ is allocated before calling \ls$incr$ it will still
be allocated afterwards.
\vfill

\begin{lstlisting}
let incr_increases (r:ref int) (h0:heap{r $\in$ h0}) : Ghost unit =
    let Some x0 = h0.mem r in
    let _, h1 = ghost_reify incr r h0 in
    let Some x1 = h1.mem r in
    assert (x1 = x0 + 1)
\end{lstlisting}

Further, in computationally relevant (non-ghost) code, \ls$refine_St$
allows us to reason using the concrete definition of \ls$incr$:

\begin{lstlisting}
     let r = ST.alloc 42 in
     let n0 = !r in
     refine_St incr r;
     let n1 = !r in
     assert(n1 == n0 + 1)
\end{lstlisting}

The intrinsic specification of \ls$incr$ does not constrain the final
value of \ls$r$, so calling \ls$incr$ directly here would not be
enough for proving the final assertion. By tagging the call-site
with \ls$refine_St$, we strengthen the specification of \ls$incr$
extrinsically, allowing the proof to complete as
in \ls$incr_increases$.







\subsection{Information Flow Control}
\label{sec:ex-ifc}

Information-flow control~\citep{SabelfeldMyers06IFC} is
a paradigm in which a program is deemed secure when one can prove that
its behavior observable to an adversary is independent of the secrets
the program may manipulate, i.e., it is \emph{non-interferent}.
Monadic reification allows us to prove
non-interference properties directly, by relating multiple runs of an
effectful program~\citep{benton04relational}. For example, take the
simple stateful program below:

\begin{lstlisting}
let ifc h = if h then (incr(); let y = get() in decr(); y) else get() + 1
\end{lstlisting}

It is easy to prove this program non-interferent via the extrinsic,
relational proof below, which states that regardless of its secret
input (\ls$h0, h1$), \ls$ifc$ when run in the same public initial
state (\ls$s0$) produces identical public outputs.
This generic extrinsic proof style is in
contrast to~\citet{BartheFGSSB14}, whose r\fstar is a custom
extension to \fstar supporting only intrinsic relational proofs.

\begin{lstlisting}
let ni_ifc = assert (forall h0 h1 s0. reify (ifc h0) s0 = reify (ifc h1) s0)
\end{lstlisting}

Aside from such relational proofs, with user-defined effects, it is
also possible to define monadic, dynamic information-flow control
monitors in \deflang, deferring non-interference checks to runtime,
and to reason about monitored programs in \fstar. Here's a
simplified example, inspired by the floating label approach of
LIO~\citep{stefan11LIO}.
For simplicity, we take the underlying monad to be \ls$exnst$, where
the state is a security label from a two-point lattice that
represents the secrecy of data that a computation may have observed so far.
\begin{lstlisting}
type label = Low | High
let difc a = label -> tau (option (a * label))
\end{lstlisting}

Once added to \fstar, we can provide two primitive actions to
interface with the outside world, where \ls$DIFC$ is the effect
corresponding to \ls$difc$. Importantly, writing to a public channel
using \ls$write Low$ when the current label is \ls$High$ causes a
dynamic failure signaling a potential \ls$Leak$ of secret information.

\begin{lstlisting}
let join l1 l2 = match l1, l2 with | _, High | High, _ -> High | _ -> Low
val read : l:label -> DIFC bool (fun l0 p -> forall b. p (Some (b, join l0 l)))
let flows l1 l2 = match l1, l2 with | High, Low -> false | _ -> true
val write : l:label -> bool -> DIFC unit (fun l0 p ->
  if flows l0 l then p (Some ((), l0)) else  p None)
\end{lstlisting}

As before, it is important
to not allow untrusted client code to reflect on \ls$DIFC$, since that
may allow it to declassify arbitrary secrets. Arguing that \ls$DIFC$
soundly enforces a form of termination-insensitive non-interference
requires a meta-level argument, much like that of~\citet{stefan11LIO}.
\iflater
\ch{We can't do an once-and-for-all extrinsic proof, right?
  Is that just because read and write are assumed, not modeled?}
\fi

We can now write programs like the one below, and rely on the dynamic
checks to ensure they are secure.

\begin{lstlisting}
let b1, b2 = read Low, read Low in write Low (b1 && b2)
let b3 = read High in write High (b1 || b3); write Low (xor b3 b3)
\end{lstlisting}

In this case, we can also prove that the program fails with
a \ls$None$ at the last \ls$write Low$. In contrast to the relational
proof sketched earlier, dynamic information-flow control is conservative: even
though the last write reveals no information on the low channel, the
monitor still raises an error.

\subsection{CPS'ing the Continuation Monad}
\label{sec:ex-cont}

As a final example before our formal presentation, we ask the
irresistible question of whether we can get a Dijkstra monad for free
for the continuation monad itself---indeed, we can.

We start by defining the standard continuation monad, \ls$cont$,
in \deflang. Being a subset of \fstar, we can prove that it is indeed
a monad, automatically.

\begin{lstlisting}
let cont a = (a -> tau ans) -> tau ans
let return x = fun k -> k x
let bind f g k = f (fun x -> g x k)
(* cont is a monad *)
let right_unit_cont (f:cont 'a) = assert (bind f return == f);
let left_unit_cont (x:'a) (f:('a -> cont 'b)) = assert (bind (return x) f == f x)
let assoc_cont (f:cont 'a) (g:('a -> cont 'b)) (h:('b -> cont 'c)) =
   assert (bind f (fun x -> bind (g x) h) == bind (bind f g) h)
\end{lstlisting}

Following our two-step recipe, we derive the Dijkstra variant
of \ls$cont$, but first we define some abbreviations to keep the
notation manageable. The type \ls$kwp a$ is the type of a predicate
transformer specifying a continuation \ls$a -> tau ans$; and \ls$kans$
is the type of a predicate transformer of the computation that yields the
final answer.

\smallskip
\begin{tabular}{lclcl}
 \ls$kwp a$    & = & \ls$a -> kans$                  & = & \ls@(a -> tau ans)$^\star$@\\
 \ls$kans$     & = & \ls$(ans -> Type) -> Type$      & = & \ls@(tau ans)$^\star$@ \\
\end{tabular}
\smallskip

Using these abbreviations, we show the \cpst of \ls$cont$, \ls$return$
and \ls$bind$. Instead of being just a predicate transformer,
$\cps{\kw{(cont\ a)}}$ is a predicate-transformer transformer.

\smallskip
\begin{tabular}{lcl}
$\cps{\kw{(cont\ a)}}$ & = & \ls$kwp a -> kans$\\
$\cps{\kw{return}}$   & = & \ls$fun (x:a) (wp_k:kwp a) -> wp_k x$ \\
$\cps{\kw{bind}}$     & = & \ls$fun f g (wp_k:kwp b) -> f (fun (x:a) -> g x wp_k)$ \\
\end{tabular}
\smallskip

For step 2, we show the elaboration of \ls$return$ and \ls$bind$
to \fstar, using the abbreviation \ls$kt a wp$ for the type of the
elaborated term $\un{k}$, where the \deflang term $k$ is a
continuation of type
\ls$a -> tau ans$ and \ls@wp=$k^\star$@.
As illustrated in \autoref{sec:ex-cps}, elaborating higher-order
functions from \deflang to \fstar introduces additional arguments
corresponding to the predicate transformers of abstracted
computations.

\newcommand\elabcont{\ensuremath{\un{\mathsf{cont}}}}
\smallskip
\begin{tabular}{lcl}
\nquad\nqquad\ls$kt a wp$              \nquad&\nquad =       \nquad&\nquad \ls$x:a -> Pure ans (wp x)$    \\
\nquad\nqquad$\un{\kw{return}}$  \nquad&\nquad :       \nquad&\nquad \ls@x:a -> wpk:kwp a -> k:kt a wpk -> Pure ans (return$^\star$ x wpk)@\\
                           \nquad&\nquad =       \nquad&\nquad \ls$fun x wpk k -> k x$\\
\nquad\nqquad$\un{\kw{bind}}$    \nquad&\nquad :       \nquad&\nquad \ls@wpf:(cont a)$^\star$@ \\
                           \nquad&\nquad \ls$->$ \nquad&\nquad \ls@f:(wpk:kwp a -> k:kt a wpk -> Pure ans (wpf wpk))@ \\
                           \nquad&\nquad \ls$->$ \nquad&\nquad \ls@wpg:(a -> (cont b)$^\star$)@ \\
                           \nquad&\nquad \ls$->$ \nquad&\nquad \ls@g:(x:a -> wpk:kwp b -> k:kt b wpk -> Pure ans (wpg x wpk))@ \\
                           \nquad&\nquad \ls$->$ \nquad&\nquad \ls@wpk:kwp b@ \\
                           \nquad&\nquad \ls$->$ \nquad&\nquad \ls@k:kt b wpk@ \\
                           \nquad&\nquad \ls$->$ \nquad&\nquad \ls@Pure ans (bind$^\star$ wpf wpg wpk)@ \\
                           \nquad&\nquad =       \nquad&\nquad \ls$fun wpf f wpg g wpk k -> f (fun x -> wpg x wpk) (fun x -> g x wpk k)$
\end{tabular}
\smallskip

\ifsooner\ch{Could leave out the long version if tight on space.}\fi

\noindent In the case of $\un{\kw{return}}$, we have one
additional argument for the predicate transformer of the
continuation \ls$k$---the type of the result shows how
$\un{\kw{return}}$ relates to $\cps{\kw{return}}$. The elaboration
$\un{\kw{bind}}$ involves many such additional parameters, but the
main point to take away is that its specification is given in terms of
$\cps{\kw{bind}}$, which is applied to the predicate transformers
\ls$wpf, wpg, wpk$, while $\un{\kw{bind}}$ was applied to the
computations \ls$f, g, k$.
In both cases, the definitions of $\un{\kw{return}}$ and
$\un{\kw{bind}}$ match their pre-images in \deflang aside from
abstracting over and passing around the additional WP arguments.

To better see the monadic structure in the types of $\un{\kw{return}}$
and $\un{\kw{bind}}$ we repeat these types, but this time writing
\ls@$\elabcont$ a wp@ for the type
\ls$wpk:kwp a -> k:kt a wpk -> Pure ans (wp wpk)$:

\smallskip
\begin{tabular}{lcl}
\nquad\nqquad$\un{\kw{return}}$  \nquad&\nquad :       \nquad&\nquad \ls@x:a -> $\elabcont$ a (return$^\star$ x)@ \\
\nquad\nqquad$\un{\kw{bind}}$    \nquad&\nquad :       \nquad&\nquad \ls@wpf:(cont a)$^\star$ -> f:$\elabcont$ a wpf@ \\
                                 \nquad&\nquad \ls$->$ \nquad&\nquad \ls@wpg:(a -> (cont b)$^\star$) -> g:(x:a -> $\elabcont$ b (wpg x))@ \\
                                 \nquad&\nquad \ls$->$ \nquad&\nquad \ls@$\elabcont$ b (bind$^\star$ wpf wpg)@
\end{tabular}

\ifcheckpagebudget\clearpage\fi
\section[Explicitly Monadic Fstar]{Explicitly Monadic \fstar}
\label{sec:emf}

\ch{Section 3: make it clearer what the limitations are:
- Div, Ghost, inductives (would be prerequisite for IO),
  universe polymorphism}




We begin our formal development by presenting \emf, an explicitly
typed, monadic core calculus intended to serve as a model
of \fstar. As seen above, the \fstar implementation includes an
inference algorithm \cite{mumon} so that source programs may omit all
explicit uses of the monadic \ls$return$, \ls$bind$ and \ls$lift$
operators. We do not revisit that inference algorithm here.
Furthermore, \emf lacks \fstar's support for divergent and ghost
computations, fixed points and their termination check, inductive
types, and universe polymorphism. We leave extending \emf to
accommodate all these features as future work, together with a formal
proof that after inference, \fstar terms can be elaborated into \emf
(along the lines of the elaboration of \citet{swamy11coco}).

\newcommand\refine[2]{\ensuremath{#1\{#2\}}}
\renewcommand\lam[2]{\ensuremath{\lambda #1. #2}}
\newcommand\product[2]{\ensuremath{#1 -> #2}}
\newcommand\casefull[5]{\ensuremath{\kw{case}_{#5} (#1~\kw{as}~#2)~#3~#4}}
\newcommand\run{\ensuremath{\kw{run}}}
\newcommand\reify{\ensuremath{\kw{reify}}}
\newcommand\freflect{\ensuremath{\kw{reflect}}}
\newcommand\mreturn[1][M]{\ensuremath{#1.\kw{return}}}
\newcommand\freturn[1][F]{\ensuremath{#1.\kw{return}}}
\newcommand\mbind[1][M]{\ensuremath{#1.\kw{bind}}}
\newcommand\fbind[1][F]{\ensuremath{#1.\kw{bind}}}
\newcommand\mlift[2][M]{\ensuremath{#1.\kw{lift}_{#2}}}
\newcommand\fact[1][F]{\ensuremath{#1.\kw{act}}}
\newcommand\mTot{\ensuremath{\kw{Tot}}}
\newcommand\mPure{\ensuremath{\kw{Pure}}}
\newcommand\mCont{\ensuremath{\kw{Cont}}}
\newcommand\purereturn[1][\mPure]{\ensuremath{#1.\kw{return}}}
\newcommand\purebinddmf[1][\mPure]{\ensuremath{#1.\kw{bind}}}

\newcommand{\pureretex}[2]{\purereturn~{#1}~{#2}}
\newcommand{\pureretin}[2]{\purereturn~{#2}}
\newcommand{\purebindex}[6]{\purebinddmf~{#1}~{#2}~{#3}~{#4}~{#5}~{#6}}
\newcommand{\purebindin}[6]{\purebinddmf~{#5}~{#6}}
\newcommand{\purerett}{\mathrm{\bf Pure.return}}
\newcommand{\purebindt}{\mathrm{\bf Pure.bind}}

\let\purebind\purebindex
\let\pureret\pureretex
%
\newcommand\Pure[2]{\mPure~{#1}\;(#2)}
\newcommand\caseimp[3]{\ensuremath{\kw{case} (#1)~{#2}~{#3}}} 
\newcommand\minl[1]{\mInl~#1}
\newcommand\minr[1]{\mInr~#1}
\newcommand\mfst[1]{\mFst~#1}
\newcommand\msnd[1]{\mSnd~#1}

\newcommand\mInl{\ensuremath{\kw{inl}}}
\newcommand\mInr{\ensuremath{\kw{inr}}}
\newcommand\mFst{\ensuremath{\kw{fst}}}
\newcommand\mSnd{\ensuremath{\kw{snd}}}

\newcommand\typez{\ensuremath{\kw{Type}_{0}}}
\newcommand\typei[1]{\ensuremath{\kw{Type}_{#1}}}
\newcommand\steps{\ensuremath{\longrightarrow}}
\newcommand\frepr[3][F]{\ii{F.\repr}~#2~#3}

\renewcommand*{\wp}[0]{\ii{wp}}
\newcommand*{\repr}[0]{\ii{repr}}

\newcommand\mliftE{\ensuremath{\un{\ii{M.lift}}_{\ii{M}'}}}
\newcommand\mliftWP{\ensuremath{\cps{\ii{M.lift}}_{\ii{M}'}}}
\begin{figure}
\[\begin{array}{lcl}
\multicolumn{3}{l}{\nquad\mbox{\textit{Terms}}}\\
 e, t, \wp, \phi & ::= & x
                  \mid  T
                  \mid  \refine{x@t}{\phi}
                  \mid  \lam{x@t}{e}
                  \mid  \product{x@t}{c}
                  \mid   e_1~e_2 \\ &
                  \mid & \casefull{e}{y}{x.e_1}{x.e_2}{t} 
                  \mid   \run~e
                  \mid   \reify~e \\&
                  \mid & \freflect~e
                  \mid   \mlift{M'}~t~\wp~e
                  \mid   \fact~\bar{e}\\&
                  \mid & \mreturn~t~e
                  \mid   \mbind~t_1~t_2~\wp_1~e_1~\wp_2~{x.e_2}
\\[1pt]
\multicolumn{3}{l}{\nquad\mbox{\textit{Computation types}}}\\
 c              & ::=  & \mTot~t
                 \mid    M~t~\wp~~\text{where}~M \in \{\mPure,~F\} \\
\end{array}
\]
\[\begin{array}{lcllll}
\multicolumn{3}{l}{\nquad\mbox{\textit{Signatures of monadic effects and lifts}}}\\
S               &\nquad::=\nquad& D \mid S, D \mid S, L\\
D               &\nquad::=\nquad& F \left\{\begin{array}{lllclll}
                    \repr              & = & t & ; &  \ii{wp\_type}      & = & t  \\
                    \un{\ii{return}}   & = & e & ; &  \cps{\ii{return}}  & = & \wp \\
                    \un{\ii{bind}}     & = & e & ; &  \cps{\ii{bind}}    & = & \wp \\
                    \un{\ii{act}}_j    & = & e & ; &  \cps{\ii{act}_j}   & = & \product{\overline{x_j@t_j}}{c_j}\\
                    \end{array}\right\}\\
L               &\nquad::=\nquad& \{~~\mliftE\;= e ; ~\mliftWP = wp~~\}
\end{array}\]
\caption{Syntax of \emf}
\label{fig:emf-syntax}
\end{figure}

\subsection{Syntax}
\label{sec:emf-syntax}

\autoref{fig:emf-syntax} shows the \emf syntax. We highlight several
key features.

\paragraph*{Expressions, types, WPs, and formulae} are all represented
uniformly as terms; however, to evoke their different uses, we
often write $e$ for expressions, $t$ for types, $\wp$ for WPs, and
$\phi$ for logical formulae.  Terms include variables ($x, y,
a, b, w$ etc.); refinement types $\refine{x@t}{\phi}$;
$\lambda$ abstractions;
dependent products with computation-type co-domains, $\product{x@t}c$
(with the sugar described in \S\ref{sec:examples}); and
applications. Constants $T$ include
\typei{i}, the $i$th level from a countable hierarchy of predicative
universes.\footnote{We have yet to model \fstar{}'s universe polymorphism,
making the universes in \emf less useful than the ones in \fstar. Lacking
universe polymorphism, we restrict computation to have results
in \ls@Type$_0$@. A simple remediation would be replicate the monad
definitions across the universe levels.} We also include constants for
non-dependent pairs and disjoint unions; the former are eliminated using
\mFst~and~\mSnd~(also constants), while the latter are eliminated
using \casefull{e}{y}{x.e_1}{x.e_2}{t}, which is standard dependent pattern
matching with an explicit return type $t$ and a name for the scrutinee
$y$, provided only when the dependency is necessary.

\paragraph*{Computation types $(c)$} include 
$\mTot~t$, the type of total $t$-returning terms, and
$M~t~\wp$, the type of a
computation with effect $M$, return type $t$, and behavior specified by the
predicate transformer $\wp$. Let $M$ range over the $\mPure$ effect as
well as user-defined effects $F$. 

\paragraph{Explicit monadic returns, binds, actions, lifts, reify, and reflect.}
$\mreturn$ and $\mbind$ are the monad operations for the effect $M$,
with explicit arguments for the types and predicate transformers.
$\mlift{M'}~t~\wp~e$ lifts the $e:M~t~\wp$ to $M'$.
A fully applied $F$ action is written $\fact~{\bar{e}}$.
The \ls$reify$ and \ls$reflect$ operators are for monadic reflection,
and \ls$run$ coerces a \ls$Pure$ computation to \ls$Tot$.

\newcommand\frepR[1][F]{\ensuremath{\ii{S.#1.repr}}}
\newcommand\fretE[1][F]{\ensuremath{\ii{S.#1.\un{return}}}}
\newcommand\fretWP[1][F]{\ensuremath{\cps{\ii{S.#1.return}}}}
\newcommand\fbindE[1][F]{\ensuremath{\ii{S.#1.\un{bind}}}}
\newcommand\fbindWP[1][F]{\ensuremath{\cps{\ii{S.#1.bind}}}}
\newcommand\mretWP[1][M]{\ensuremath{\cps{\ii{S.#1.return}}}}
\newcommand\mbindWP[1][M]{\ensuremath{\cps{\ii{S.#1.bind}}}}

\paragraph{Signatures for user-defined effects}
\emf is parameterized by a signature $S$. 
A user-defined effect $F~t~\wp$ is specified using $D$, the
result of translating a \deflang monad.
A definition $D$ is a record containing several fields:
\repr~is the type of an $F$ computation reified as a pure term,
\ii{wp\_type} is the type of the $\wp$ argument to $F$;
\un{\ii{return}}, \un{\ii{bind}}, and \un{\ii{act}}$_j$ are \emf
expressions, and \cps{\ii{return}}, \cps{\ii{bind}},
and \cps{\ii{act}_j} are \emf WPs (\ii{act}$_{j}$ is the $j^{th}$
action of F). We use \fretE~to denote the lookup
of the \un{\ii{return}} field from $F$'s definition in the signature
$S$, and similar notation for the other fields.

\iffull
For example, for the \ls|ST| monad from \autoref{sec:ex-cps}, we
have{\footnote{We use $\mathsf{sans~serif}$ font for the actual field values.}}:
\[\begin{array}{llll}
\ii{ST} \{
   & \ii{wp\_type}      & = & \lam{a}{s -> (a * s -> \kw{Type}_0) -> \kw{Type}_0}   \\
   & \repr              & = & \lam{a~w}{\product{s_0@s}{\kw{Pure}~(a * s)~(w~s_0)}} \\
   & \un{\ii{return}}   & = & \lam{a}{\un{\mathsf{return}}}\\
   & \cps{\ii{return}}  & = & \lam{a}{\cps{\mathsf{return}}} \\
   & \un{\ii{bind}}     & = & \lam{a~b}{\un{\mathsf{bind}}}\\
   & \cps{\ii{bind}}    & = & \lam{a~b}{\cps{\mathsf{bind}}} \\
   & \un{\ii{get}}      & = & {\un{\mathsf{get}}}\\
   & \cps{\ii{get}}     & = & {\cps{\mathsf{get}}}\\
   & \un{\ii{put}}      & = & {\un{\mathsf{put}}}\\
   & \cps{\ii{put}}     & = & {\cps{\mathsf{put}}} \qquad\}\\
\end{array}\]
where (as described in \autoref{sec:ex-cps})
$\un{\kw{return}} : \product{a}{\product{x@a}{\ii{repr}~a~(\cps{\ii{return}}~a~x)}}$;
and similarly for $\un{\kw{bind}}$, $\un{\kw{get}}$, and $\un{\kw{put}}$.
\fi
%

%
%
%

In addition to the monad definitions $D$, the
signature $S$ contains the definitions of lifts that contain
an \emf expression and an \emf WP. We use notations
$S.\mliftE$~and~$S.\mliftWP$ to look these up in $S$.
Finally, the signature always includes a fixed partial definition for
the $\mPure$ monad, only containing the following definitions:
\[
\begin{array}{l@{\hspace{3pt}}c@{\hspace{3pt}}l}
\ii{Pure} \,\{\, \ii{wp\_type} & = & \lambda a{:}\typez.~ (a -> \typez) -> \typez\\
~~~~~~~~~~~\ii{return}^\star & = & \lambda a{:}\typez.~
                       \lambda x{:}a.~
                       \lambda p{:}(a -> \typez).~ p~x\\
~~~~~~~~~~~\ii{bind}^\star 
      & = & \lambda a.~
            \lambda b.~
            \lambda w_1.~
            \lambda w_2.~
            \lambda p.~ w_1~(\lambda x.~(w_2~x)~p) \quad \}
\end{array}
\]
The other fields are not defined, since $\mPure$ is handled
primitively in the \emf dynamic semantics (\autoref{sec:emf-dynamic-semantics}).

The well-formedness conditions on the signature $S$ (shown in the
auxiliary material) check that the fields in definitions $D$
and the lifts in $L$ are well-typed as per their corresponding WPs.
In addition, each effect definition can make use of the previously
defined effects, enabling a form of layering. However, in this paper,
we mainly focus on combining effects using the lift operations.






\begin{figure}

\[\setnamespace{0pt}\setpremisesend{5pt}
\nqquad\begin{array}{c}















\inferrule*[lab=T-Return]
{
  S; \Gamma |- e : \mTot~t
}
{
  S; \Gamma |- \mreturn~t~e : M~t~(\mretWP~t~e)
}

\qquad

\inferrule*[lab=T-Refine]
{
  S; \Gamma |- t : Type_i \\\\
  S; \Gamma, x@t |- \phi : Type_j
}
{
  S; \Gamma |- \refine{x@t}{\phi} : Type_i
}

\\\\

\inferrule*[lab=T-Bind] 
{
  S; \Gamma |- t_2 : \typez\qquad
  S; \Gamma |- \wp_2 : \product{x@t_1}{S.M.\ii{wp\_type}~t_2} \\
  S; \Gamma |- e_1 : M~t_1~\wp_1 \quad
  S; \Gamma, x@t_1 |- e_2 : M~t_2~(\wp_2~x)
}
{
  S; \Gamma |- \mbind~t_1~t_2~\wp_1~e_1~\wp_2~{x.e_2}
  : M~t_2~(\mbindWP~t_1~t_2~\wp_1~\wp_2)
}

\\\\

\inferrule*[lab={\hspace{0.4cm}T-Lift}]
{
  S; \Gamma |- e : M~t~\wp
}
{
  S; \Gamma |- \mlift{M'}~t~\wp~e : M'~t~(S.\mliftWP~\wp)
}

\hspace{0.1cm}

\inferrule*[lab=T-Act]
{
  S.F.\cps{\ii{act}} = \product{\overline{x@t}}{c} \\\\
  \forall i.~S; \Gamma |- e_i : t_i
}
{
  S; \Gamma |- \fact~\bar{e} : c[\bar{e}/\bar{x}]
}

\\\\

\inferrule*[lab={\hspace{0.2cm}T-Reify}]
{
  S; \Gamma |- e : F~t~\wp
}
{
  S; \Gamma |- \reify~e : \mTot~(S.F.\repr~t~\wp)
}

\quad

\inferrule*[lab=T-Reflect]
{
  S; \Gamma |- e : \mTot~(S.F.\repr~t~\wp)
}
{
  S; \Gamma |- \freflect~e : F~t~\wp
}




\\\\

\inferrule*[lab=T-Run]
{
  S; \Gamma |- e : \mPure~t~\wp \quad
  S; \Gamma |= \exists p. \wp~p
}
{
  S; \Gamma |- \run~e : \mTot~t
}

\hspace{0.1cm}

\inferrule*[lab=T-Sub]
{
  S; \Gamma |- e : c' \quad
  S; \Gamma |- c' <: c
}
{
  S; \Gamma |- e : c
}

\\\\

\inferrule*[lab=C-Pure]
{
  S; \Gamma |- t : \typez\\\\
  S; \Gamma |- \wp : \product{(\product{t}{\typez})}{\typez}
}
{
  S; \Gamma |- \kw{Pure}~t~\wp : \typez
}

\qquad

\inferrule*[lab=C-F]
{
  S; \Gamma |- S.F.\repr~t~\wp : \typez
}
{
  S; \Gamma |- F~t~\wp : \typez
}

\end{array}
\]
\caption{Selected typing rules for \emf}
\label{fig:emf-typing}
\end{figure}

\begin{figure}

\[\setnamespace{0pt}\setpremisesend{5pt}
\nqquad\begin{array}{c}

\inferrule*[lab=S-Tot]
{
  S; \Gamma |- t' <: t
}
{
  S; \Gamma |- \mTot~t' <: \mTot~t
}

\hspace{0.3cm}

\inferrule*[lab=S-Pure]
{
  S; \Gamma |- t' <: t \hspace{0.2cm}
  S; \Gamma |= \forall p. \wp~p \Rightarrow \wp'~p
}
{
  S; \Gamma |- \mPure~t'~\wp' <: \mPure~t~\wp
}

\\\\

\inferrule*[lab=S-F]
{
  S; \Gamma |- S.F.\repr~t'~\wp' <: S.F.\repr~t~\wp
}
{
  S; \Gamma |- F~t'~\wp' <: F~t~\wp
}

\hspace{0.4cm}

\inferrule*[lab=S-Prod]
{
  S; \Gamma |- t <: t' \\\\
  S; \Gamma, x:t |- c' <: c
}
{
  S; \Gamma |- \product{x@t'}{c'} <: \product{x@t}{c}
}

\\\\

\inferrule*[lab=S-RefineL]
{
}
{
  S; \Gamma |- \refine{x@t}{\phi} <: t
}

\hspace{0.1cm}

\inferrule*[lab=S-RefineR]
{
  S; \Gamma, x:t |= \phi
}
{
  S; \Gamma |- t <: \refine{x@t}{\phi}
}

\hspace{0.1cm}

\inferrule*[lab=S-Conv]
{
  S |- t' \steps^{\ast} t~\vee~S |- t \steps^{\ast} t'
}
{
  S; \Gamma |- t' <: t
}

\end{array}
\]
\caption{Selected subtyping rules for \emf}
\label{fig:emf-subtyping}
\end{figure}

\subsection{Static Semantics}



The expression typing judgment in \emf has the form $S; \Gamma |- e : c$,
where $\Gamma$ is the list of bindings $x:t$ as usual. Selected rules for
the judgment are shown in \autoref{fig:emf-typing}. In the rules, we sometimes
write $S; \Gamma |- e : t$ as an abbreviation for $S; \Gamma |- e : \mTot~t$.


\paragraph{Monadic returns, binds, lifts, and actions.}
Rules~{\sc{T-Return}}, {\sc{T-Bind}}, and {\sc{T-Lift}} simply
use the corresponding $\wp$ specification from the signature for $M$
to compute the final $\wp$. For example, in the case of the \ls|ST| monad
from \autoref{sec:ex-cps}, $S.ST.\cps{\ii{return}}~t
= \lam{x@t}{\lam{s_0@s}{\lam{post}{post~(x, s_0)}}}$.
Rule~{\sc{T-Act}} is similar; it looks up the type of the action from
the signature, and then behaves like the standard function application
rule.

\paragraph{Monadic reflection and reification.}
Rules~{\sc{T-Reify}} and {\sc{T-Reflect}} are dual, coercing between a
computation type and its underlying pure representation.
Rule~{\sc{T-Run}} coerces $e$ from type $\mPure~t~\wp$ to
$\mTot~t$. However, since the $\mTot$ type is unconditionally total,
the second premise of the rule checks that the $\wp$ is satisfiable.

\paragraph{Refinements, computations types, and proof irrelevance.}
\emf's refinement and computation types include a form of proof irrelevance.
In {\sc{T-Refine}}, the universe of $\refine{x@t}{\phi}$ is
determined by the universe of $t$ alone, since a witness for the
proposition $\phi$ is never materialized.
Refinement formulas $\phi$ and $\wp$s are manipulated using an
entailment relation, $S;\Gamma |= \phi$, for a proof-irrelevant,
classical logic where all the connectives are
``squashed''~\citep{Nogin02}, e.g.,
\ls$p /\ q$ and \ls$p ==> q$ from \S\ref{sec:examples}, are encoded
as \ls$x:unit{p * q}$ and \ls$x:unit{p -> q}$, and reside in $\typez$.
Similar to {\sc{T-Refine}}, in {\sc{C-Pure}}, the universe of a
computation type is determined only by the result type. Since the
$\wp$ is proof irrelevant, the use of $\typez$ in the type of $\wp$ is
quite natural, because its proof content is always squashed. For
user-defined monads $F$, the rule {\sc{C-F}} delegates to their
underlying representation $S.F.\repr$.


\paragraph{Subsumption and subtyping judgment.}
{\sc{T-Sub}} is a subsumption rule for computations, which makes use
of the two judgments $S; \Gamma |- c <: c'$ and
$S; \Gamma |- t <: t'$, shown (selectively) in \autoref{fig:emf-subtyping}.
Rule~{\sc{S-Pure}} checks that $t' <: t$, and
makes use of the $S;\Gamma |= \phi$ relation to check that
$\wp$ is stronger than $\wp'$, i.e. for all postconditions, the
precondition computed by $\wp$ implies the precondition computed
by $\wp'$.

Similar to {\sc{C-F}}, the rule {\sc{S-F}} delegates the check to the
underlying representation of $F$. Rule~{\sc{S-Prod}} is
the standard dependent function subtyping. Rule~{\sc{S-RefineL}} permits
dropping the refinement from the subtype, and rule~{\sc{S-RefineR}} allows
subtyping to a refinement type, if we can prove the formula $\phi$ for
an arbitrary $x$. 
Finally, rule~{\sc{S-Conv}} states
that the beta-convertible types are subtypes of each other ($S |- t \steps t'$
is the small-step evaluation judgment, introduced in the next section).



\begin{figure}

\[\setnamespace{0pt}\setpremisesend{5pt}
\nqquad\begin{array}{c}

\inferrule*[lab=R-App]
{
}
{
  S |- (\lam{x@t}{e})~e' \steps e[e'/x]
}

\hspace{0.3cm}



\inferrule*[lab=R-Run]
{
}
{
  S |- \run~(\purereturn~t~e) \steps e
}

\\\\

\inferrule*[lab=R-PureBind]
{
}
{
  S |- \purebinddmf~t_1~t_2~\wp_1~ (\purereturn~t~e_1)~\wp_2~{x.e_2} \steps e_2[e_1/x]
}

\\\\

\inferrule*[lab=R-ReifyRet]
{
}
{
  S |- \reify~(\freturn~t~e) \steps \fretE~t~e
}

\inferrule*[lab=R-ReifyReflect]
{
}
{
  S |- \reify~(\freflect~e) \steps e
}

\\\\

\inferrule*[lab=R-ReifyBind]
{
  e' = \fbindE~t_1~t_2~\wp_1~(\reify~e_1)~\wp_2~{x.(\reify~e_2)}
}
{
  S |- \reify~(\fbind~t_1~t_2~\wp_1~e_1~\wp_2~{x.e_2}) \steps e'
}

\\\\

\inferrule*[lab=R-ReifyAct]
{
}
{
  S |- \reify (\fact~\bar{e}) \steps S.F.\un{\ii{act}}~\bar{e}
}

\\\\

\inferrule*[lab=R-ReifyLift]
{
}
{
  S |- \reify (\mlift{M'}~t~\wp~e) \steps S.\mliftE~t~\wp~(\reify~e)
}

\end{array}
\]
\caption{Dynamic semantics of \emf (selected reduction rules)}
\label{fig:emf-dynamic-semantics}
\end{figure}

\subsection{\emf Dynamic Semantics}
\label{sec:emf-dynamic-semantics}

We now turn to the dynamic semantics of \emf, which is formalized as 
a strong small-step reduction relation.
Evaluation context are defined as
follows:
\[
\begin{array}{lcl}
E & ::=  & \bullet \mid \lam{x@t}{E} \mid E~e \mid e~E \mid \run~{E} \mid \reify~E \mid \freflect~E\\
  & \mid & \mbind~t_1~t_2~\wp_1~E~\wp_2~{x.e_2} \mid \mreturn~t~{E}\\
  & \mid & \mlift{M'}~t~\wp~E \mid \fact~{\bar{e}~E~\bar{e}'} \mid \casefull{E}{\_}{x.e_1}{x.e_2}{t}\\
  & \mid &  \casefull{e}{\_}{x.E_1}{x.e_2}{t} \mid \casefull{e}{\_}{x.e_1}{x.E_2}{t}
\end{array}
\]

The judgment has the form $S |- e \steps e'$. We show some selected rules
in \autoref{fig:emf-dynamic-semantics}. The main ideas of the judgment
are: (a) the $\mTot$ terms reduce primitively in using a strong
reduction semantics, (b) $\purebinddmf$ is also given a primitive
semantics, however (c) to $\beta$-reduce other monadic operations
(binds, returns, actions, and lifts), they need to be reified first,
which then makes progress using their underlying implementation in the
signature.

\paragraph{Order of evaluation.} Since the effectful terms reduce via
reification, the semantics does not impose any evaluation order on the
effects---reification yields $\mTot$ terms ({\sc{T-Reify}}), that
reduce using the strong reduction semantics. However, the more
familiar sequencing semantics of effects can be recovered by
controlled uses of reify that do not break the abstraction of effects
arbitrarily. Indeed, we formalize this notion in
Section~\ref{sec:stateful}, and prove that by sequencing the effects
as usual using \kw{bind}, and then reifying and reducing the entire
effectful term, one gets the expected strict evaluation semantics
(Theorem~\ref{thm:simulation}).

\paragraph{Semantics for $\mPure$ terms.} Rule~{\sc{R-PureBind}} reduces
similarly to the usual $\beta$-reduction. For $\run~e$, the semantics first evaluates $e$
to $\purereturn~t~e'$, and then $\run$ removes the $\purereturn$ and steps
to the underlying total computation $e'$ via {\sc{R-Run}}.

\paragraph{Semantics for monadic returns and binds.} Rule~{\sc{R-ReifyBind}} looks
up the underlying
implementation $\fbindE$
in the signature, and applies it to $e_1$ and $e_2$ but
after reifying them so that their effects are handled properly.
In a similar manner, rule~{\sc{R-ReifyRet}} looks up the underlying
implementation $\fretE$ and applies it to $e$. Note that in this case,
we don't need to reify $e$ (as we did in bind), because $e$ is already
a $\mTot$ term.

\paragraph{Semantics for monadic lifts and actions.} Rules {\sc{R-ReifyAct}} and
{\sc{R-ReifyLift}} also lookup the underlying implementations of the lifts and
actions in the signature and use them. Rule {\sc{R-ReifyLift}} in addition
reifies the computation $e$. For lifts, the arguments $\bar{e}$ are already
$\mTot$.

\subsection{\emf Metatheory}

We prove several metatheoretical results for \emf. First, we prove
strong normalization for \emf via a translation to
the calculus of inductive constructions (CiC) \cite{cic}.

\begin{theorem}[Strong normalization]
If $S; \Gamma |- e : c$ and CiC is strongly normalizing, then $e$ is
strongly normalizing.
\begin{proof}(sketch)
The proof proceeds by defining a translation from \emf to CiC, erasing
refinements and WPs, inlining the pure implementations of each monad,
and removing the reify and reflect operators. We show that this
translation is a type-preserving, forward simulation. If CiC is strongly
normalizing, then \emf must also be, since otherwise an infinite
reduction sequence in \emf could not be matched by CiC, contradicting
the forward simulation.
\end{proof}
\end{theorem}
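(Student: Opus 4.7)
The plan is to construct a type-preserving erasure translation $\lfloor \cdot \rfloor$ from \emf to CiC and show it is a forward simulation, so that any infinite reduction in \emf would lift to an infinite reduction in CiC, contradicting CiC's strong normalization.

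First, I would define $\lfloor \cdot \rfloor$ by recursion on terms. The structural cases (variables, universes, $\lambda$, application, products, case analysis) are mapped compositionally. Specifications are erased: $\lfloor \refine{x@t}{\phi} \rfloor = \lfloor t \rfloor$, and a computation type $M~t~\wp$ is collapsed to its underlying representation $\lfloor S.M.\repr~t~\wp \rfloor$ (with $\mPure~t~\wp$ translating to $\lfloor t \rfloor$, exploiting proof-irrelevance). For the distinguished monadic forms I would inline the pure implementations drawn from the signature: $\lfloor \mreturn~t~e \rfloor = \lfloor \fretE~t~e \rfloor$, $\lfloor \mbind~t_1~t_2~\wp_1~e_1~\wp_2~x.e_2 \rfloor = \lfloor \fbindE~t_1~t_2~\wp_1~e_1~\wp_2~x.e_2 \rfloor$, with analogous clauses for actions, lifts, and $\run$. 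The reflection operators become essentially identities: $\lfloor \reify~e \rfloor = \lfloor e \rfloor$ and $\lfloor \freflect~e \rfloor = \lfloor e \rfloor$, since after erasure both sides live in the representation type. Crucially, the translation must commute with substitution, i.e.\ $\lfloor e[e'/x] \rfloor = \lfloor e \rfloor[\lfloor e' \rfloor/x]$; this is a routine induction provided the clauses above are defined compositionally.

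Second, I would establish type preservation: if $S; \Gamma \vdash e : c$ then $\lfloor \Gamma \rfloor \vdash_{\mathrm{CiC}} \lfloor e \rfloor : \lfloor c \rfloor$, by induction on the typing derivation. Each typing rule is matched by the corresponding CiC rule after erasure; the monadic rules (\textsc{T-Return}, \textsc{T-Bind}, \textsc{T-Act}, \textsc{T-Lift}, \textsc{T-Reify}, \textsc{T-Reflect}) are discharged using the well-formedness of the signature $S$, which already requires each $\un{\ii{return}}$, $\un{\ii{bind}}$, $\un{\ii{act}}_j$, and $\mliftE$ to inhabit the appropriate $\repr$-indexed type. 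Subtyping involving refinements collapses to identity after erasure, and \textsc{S-Conv} is absorbed by CiC conversion, thanks to the commutation of $\lfloor \cdot \rfloor$ with reduction.

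Third, I would prove forward simulation: whenever $S \vdash e \steps e'$, either $\lfloor e \rfloor \steps^{+} \lfloor e' \rfloor$ in CiC, or $\lfloor e \rfloor = \lfloor e' \rfloor$ with a strictly decreasing syntactic measure on $e$. The genuinely computational rules (\textsc{R-App}, \textsc{R-PureBind}, \textsc{R-Run}) map to $\beta$/$\iota$-reductions in CiC, and the ``reify-pushing'' rules (\textsc{R-ReifyRet}, \textsc{R-ReifyBind}, \textsc{R-ReifyAct}, \textsc{R-ReifyLift}) become identities after erasure because the translation already inlines the pure implementation; however the number of $\reify$ nodes, and the size of the term under them, strictly decreases.

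Finally, the two conditions combine to transfer strong normalization: any hypothetical infinite reduction sequence in \emf would contain only finitely many consecutive measure-decreasing silent steps between genuine CiC steps (by well-foundedness of the measure), hence induce an infinite CiC reduction sequence starting at $\lfloor e \rfloor$, contradicting the assumed strong normalization of CiC. The main obstacle, and the part requiring the most care, is precisely this silent-step issue: naively, \textsc{R-ReifyReflect} and the reify-commutation rules can be erased to no CiC step at all, so the simulation must be paired with a carefully chosen well-founded measure (for instance, lexicographic on the multiset of sizes of subterms immediately under a $\reify$) to rule out silent divergence. A secondary subtlety is ensuring the inlined monad implementations themselves are CiC-typeable, which forces the signature-well-formedness side conditions to carry through the erasure.
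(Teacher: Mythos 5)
Your proposal follows essentially the same route as the paper's own (sketched) argument: an erasure translation into CiC that drops refinements and WPs, inlines the pure monad implementations, removes \reify{} and \freflect{}, and is shown to be a type-preserving forward simulation, from which strong normalization transfers by contradiction. You are in fact somewhat more careful than the paper's sketch in flagging that the reify-pushing rules become silent after erasure and therefore require a well-founded measure to exclude infinite sequences of erased steps; this is a genuine refinement of, not a departure from, the paper's argument.
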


\begin{theorem}[Subject Reduction]
If~$S; \Gamma |- e : c$ and $S |- e \steps e'$, then $S; \Gamma |- e' : c$.
\end{theorem}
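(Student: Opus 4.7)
The plan is to proceed by induction on the derivation of $S \vdash e \steps e'$, with a case analysis on the last reduction rule. The congruence cases (where reduction happens inside an evaluation context $E$) are handled routinely by an auxiliary ``replacement'' lemma: if $E[e_1] : c$ and $e_1 : c_1$ and $e_2 : c_1$, then $E[e_2] : c$. This lemma follows from inversion on the typing of each evaluation context former, together with the fact that the holes do not interact with binding except in the $\lambda$ and $\mathsf{case}$ cases, where the typing depends only on the preserved type.

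For the base reduction rules, the key workhorse is a standard substitution lemma: if $S; \Gamma, x@t, \Gamma' \vdash e : c$ and $S; \Gamma \vdash e' : t$, then $S; \Gamma, \Gamma'[e'/x] \vdash e[e'/x] : c[e'/x]$, proved by induction on the typing derivation. Using this, \textsc{R-App} is immediate: inversion of the application typing yields $\lambda x@t.e : \product{x@t}{c'}$ and $e' : t$, so $e[e'/x] : c'[e'/x]$; the final type is recovered by \textsc{T-Sub} (the surrounding application rule gives us the required subtyping at the instantiated co-domain). The \textsc{R-PureBind} and \textsc{R-Run} cases are similar but require relating the post-reduction WP to the pre-reduction WP. For instance, after reducing $\purebinddmf~t_1~t_2~\wp_1~(\purereturn~t~e_1)~\wp_2~x.e_2$, the substituted term $e_2[e_1/x]$ has type $\mPure~t_2~(\wp_2~e_1)$, while the original had type $\mPure~t_2~(\purebinddmf^\star~t_1~t_2~(\purereturn^\star~t~e_1)~\wp_2)$; unfolding the fixed definitions of $\purereturn^\star$ and $\purebinddmf^\star$ in the $\mPure$ signature, this latter WP $\beta$-reduces precisely to $\wp_2~e_1$, so \textsc{S-Conv} together with \textsc{S-Pure} gives the needed subtyping. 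The case \textsc{R-Run} is analogous.

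For the reification rules, we rely on the well-formedness of the signature $S$, which guarantees that $\fretE$, $\fbindE$, $S.\mliftE$, and $S.F.\un{\ii{act}}$ have pure types matching the \repr~of their effectful counterparts (as specified in the signature well-formedness conditions sketched in \S\ref{sec:emf-syntax}). For \textsc{R-ReifyRet}, the left-hand side has type $\mTot~(S.F.\repr~t~(\fretWP~t~e))$, and the right-hand side $\fretE~t~e$ has exactly this type by the well-formedness of the \un{\ii{return}} field. \textsc{R-ReifyBind}, \textsc{R-ReifyAct}, and \textsc{R-ReifyLift} are handled analogously, reducing the typing of the reified form to the specified pure type of the underlying implementation, with the compositional WP matching by the signature invariants. \textsc{R-ReifyReflect} is the simplest: inversion of \textsc{T-Reflect} inside \textsc{T-Reify} directly yields $e : \mTot~(S.F.\repr~t~\wp)$, which is the type of $\reify~(\freflect~e)$.

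The main obstacle is the bookkeeping around WPs and refinement types under substitution and reduction: in particular, one must show that reducing a WP (possibly deep inside a computation type) preserves typing, which needs a careful congruence argument for \textsc{S-Conv} reaching under \textsc{C-Pure} and \textsc{C-F}, and that the entailment judgment $S;\Gamma \models \phi$ is preserved under $\beta$-convertibility so that refinement-based subtyping (\textsc{S-RefineR}) continues to hold. These are routine but require careful statement; they follow from the fact that $S;\Gamma \models \phi$ is defined modulo the same reduction relation used by \textsc{S-Conv}. With these auxiliaries in place, the induction goes through uniformly across all reduction rules.
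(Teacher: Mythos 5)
Your proposal follows the standard syntactic route --- induction on the reduction derivation, a substitution lemma for the $\beta$-like contractions, a context-replacement lemma for the congruence cases, and the well-formedness conditions on the signature $S$ to type the right-hand sides of the \textsc{R-Reify*} rules --- and this is essentially the approach the paper takes (its proof appears only in the external auxiliary materials, not in the body or the included appendix, but the theorem is stated with no additional machinery, so there is no alternative decomposition in play). Two points deserve more care than your sketch gives them. First, in the \textsc{R-PureBind} case the composed WP $\purebinddmf^\star~t_1~t_2~(\purerett^\star~t_1~e_1)~\wp_2$ reduces to $\lambda p.\,(\wp_2~e_1)~p$, which is an $\eta$-expansion of $\wp_2~e_1$ rather than a $\beta$-reduct of it; \textsc{S-Conv} alone (which is stated only up to $\steps^\ast$) does not close that gap, so you must go through \textsc{S-Pure} and the entailment $\forall p.\,\wp~p \Rightarrow \wp'~p$, which does hold pointwise --- your phrase ``\textsc{S-Conv} together with \textsc{S-Pure}'' is right but the burden falls entirely on the latter. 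Second, because the semantics is a strong reduction and evaluation contexts descend under $\lambda$ and into case branches, your replacement lemma must be stated relative to the extended typing environment at the hole, and the congruence case for $e~E$ additionally changes the dependent result type from $c[e_2/x]$ to $c[e_2'/x]$, which requires the compatibility of conversion with the subtyping and entailment judgments (and, implicitly, something like confluence to make \textsc{S-Conv} usable as an equivalence); you flag this as ``routine but requiring careful statement,'' which is fair, but it is the one place where the induction does not go through ``uniformly.''
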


This allows us to derive a total correctness property for
the \mPure~monad saying that \run-ing a \mPure~computation produces a
value which satisfies all the postconditions that are consistent with
the \wp~of the \mPure~computation.

\begin{corollary}[Total Correctness of $\mPure$]\label{thm:pure-total-correctness}
If~$S; \cdot |- e : \mPure~t~\wp$, then $\forall p.~S; \cdot |- p : t \rightarrow \typez$ and
$S; \cdot |= \wp~p$, we have $S |- \run~e \steps^{\ast} v$ such that $S; \cdot |= p~v$.
\end{corollary}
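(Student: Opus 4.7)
The plan is to combine strong normalization, subject reduction, and a canonical forms argument for the $\mPure$ monad. In outline: given $S; \cdot \vdash e : \mPure~t~\wp$, strong normalization guarantees that $e$ has a normal form, subject reduction keeps that normal form well-typed at $\mPure~t~\wp$, and a canonical forms lemma pins it down to the shape $\purereturn~t~v$. Once we have this shape, a short calculation with the specification $\wp$ yields $p~v$, and \textsc{R-Run} produces the required reduction of $\run~e$.

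More concretely, I would proceed in the following steps. First, apply the strong normalization theorem to obtain a normal form $e_0$ with $S \vdash e \steps^{\ast} e_0$, and apply subject reduction repeatedly to get $S; \cdot \vdash e_0 : \mPure~t~\wp$. Second, prove a canonical forms lemma: any closed normal form of type $\mPure~t'~\wp'$ has the shape $\purereturn~t'~v$ for some value $v$. This is where I expect the main work to be: one has to rule out every other head constructor that could be typed at $\mPure$ in the empty context. Abstractions and products are excluded by their typing rules, applications and case-analyses in normal form must be stuck on a neutral variable (impossible in the empty context), $\mbind$ and $\run$ would fire by \textsc{R-PureBind} or \textsc{R-Run} on their already-normal $\purereturn$ subterms, and reify/reflect/lift/action forms are typed at user-defined $F~t~\wp$, not $\mPure$. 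The only surviving case is $\purereturn~t~v$ (possibly modulo \textsc{T-Sub}).

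Third, from $S; \cdot \vdash \purereturn~t~v : \mPure~t~\wp$ combine \textsc{T-Return} (which assigns the computed WP $\mretWP~t~v = \lambda q.\, q~v$) with \textsc{T-Sub} and \textsc{S-Pure} to conclude $S; \cdot \models \forall q.\, \wp~q \Rightarrow q~v$. Instantiating at the given $p$ and using the hypothesis $S; \cdot \models \wp~p$, we derive $S; \cdot \models p~v$. Finally, by a standard congruence argument on the evaluation context $\run~\bullet$, we have $S \vdash \run~e \steps^{\ast} \run~(\purereturn~t~v)$, and one application of \textsc{R-Run} yields $\run~e \steps^{\ast} v$, completing the proof.

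The most delicate ingredient is the canonical forms lemma: it hinges on the fact that the definition of $S.\mPure$ does not include a $\repr$ (so no \textsc{T-Reify} or \textsc{T-Reflect} introduces a $\mPure$ type), and that the only rules producing a $\mPure$ computation type are \textsc{T-Return}, \textsc{T-Bind}, \textsc{T-Act}, \textsc{T-Lift} and \textsc{T-Sub}. A careful case analysis, together with the strong reduction rules of \autoref{fig:emf-dynamic-semantics}, shows that among these only \textsc{T-Return} survives in a closed normal form. Everything else in the argument is essentially boilerplate once strong normalization and subject reduction are in hand.
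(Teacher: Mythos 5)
Your proposal is correct and follows the same route the paper intends: the corollary is derived from strong normalization and subject reduction, with the only substantive extra ingredient being the canonical-forms argument that a closed normal term of type $\mPure~t~\wp$ must be a $\mPure.\kw{return}$, after which \textsc{T-Sub}/\textsc{S-Pure} give $\wp~p \Rightarrow p~v$ and \textsc{R-Run} finishes the reduction. The case analysis you sketch for the canonical-forms lemma (ruling out bind, run, case, application, and the reify/reflect/lift/action forms in the empty context) is exactly the part the paper leaves implicit, and your treatment of it is sound.
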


\iflater
\gm{Currently we can assert termination even without any satisfiability,
no? What was the final word on this?}
\aseem{Yes, in the current calculus the premise is a no-op as far as the
theorems are concerned.}
\fi

For the user-defined monads $F$, we can derive their total correctness
property by appealing to the total correctness of
the \mPure~monad. For instance, for the \ls|ST| monad from
\autoref{sec:ex-cps}, we can derive the following corollary simply by
using the typing of \reify~and \autoref{thm:pure-total-correctness}.

\begin{corollary}[Total Correctness of $ST$]
If~$S; \cdot |- e : ST~t~wp$, then $\forall p, s_0.~S; \cdot |- s_0 :
s,~S; \cdot |- p : t \times s \rightarrow \typez$
and $S; \cdot |= wp~s_0~p$, then $S |- \run~((\reify~e)~s_0) \steps^{\ast} v$ such that
$S; \cdot |= p~v$.
\end{corollary}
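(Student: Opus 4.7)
The plan is to derive this corollary as a straightforward consequence of Corollary~\ref{thm:pure-total-correctness} (Total Correctness of $\mPure$), using the typing rule T-Reify and the shape of the $ST$ monad's representation from \autoref{sec:ex-cps}. The key observation is that reification exposes an $ST$ computation as a pure function from an initial state to a pure state-transforming computation; once we are in the $\mPure$ fragment, the hard work is already done.

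First, I would apply T-Reify to the hypothesis $S; \cdot |- e : ST~t~wp$ to obtain $S; \cdot |- \reify~e : \mTot~(S.ST.\repr~t~wp)$. By the $ST$ signature given in \autoref{sec:ex-cps}, $S.ST.\repr~t~wp$ is definitionally equal to $\product{s_0@s}{\mPure~(t \times s)~(wp~s_0)}$. Using S-Conv to handle the beta-reduction of $\repr~t~wp$ and T-Sub to retype, I can view $\reify~e$ at this unfolded product type. Then, applying the standard dependent application rule to the hypothesis $S; \cdot |- s_0 : s$ yields $S; \cdot |- (\reify~e)~s_0 : \mPure~(t \times s)~(wp~s_0)$.

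At this point, the computation $(\reify~e)~s_0$ is a $\mPure$ computation of result type $t \times s$ specified by the WP $wp~s_0$. Invoking Corollary~\ref{thm:pure-total-correctness} with this computation, the postcondition $p$, and the hypothesis $S; \cdot |= wp~s_0~p$ immediately yields $S |- \run~((\reify~e)~s_0) \steps^{\ast} v$ together with $S; \cdot |= p~v$, which is exactly the desired conclusion.

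The main (modest) obstacle is making the definitional unfolding of $S.ST.\repr~t~wp$ rigorous: one must check that the convertibility rule S-Conv reaches through the \mTot-wrapper to the underlying $\Pi$-type, so that standard function application is available and the $wp~s_0$ threaded into the $\mPure$ index matches the precondition expected by Corollary~\ref{thm:pure-total-correctness}. This is purely mechanical given the $ST$ signature and the reduction rules of \emf, and the same pattern will clearly generalize to any user-defined effect $F$ whose $\repr$ ultimately yields a \mPure~computation.
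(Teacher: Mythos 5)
Your proposal is correct and follows exactly the route the paper indicates: the paper derives this corollary ``simply by using the typing of \reify~and \autoref{thm:pure-total-correctness}'', i.e., applying {\sc T-Reify}, unfolding $S.ST.\repr~t~wp$ to $\product{s_0@s}{\mPure~(t\times s)~(wp~s_0)}$, applying to $s_0$, and invoking the total correctness of $\mPure$. You have merely spelled out the mechanical details (conversion and application) that the paper leaves implicit.
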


\subsection[Implementation in Fstar]{Implementation in \fstar{}}
\label{sec:emf-impl}

The implementation of \fstar was relatively easy to adapt to \emf. In
fact, \emf and \deflang and the translation between them were designed
to match \fstar's existing type system, as much as possible. We
describe the main changes that were made.

\paragraph*{User-defined non-primitive effects} are, of course,
the main new feature. Effect configurations closely match the $D$ form
from~\autoref{fig:emf-syntax}, the main delta being that non-primitive
effects include pure implementations or $M.\un{\ii{bind}}$,
$M.\un{\ii{return}}$, $M.\un{\ii{lift}}_{M'}$ etc.

\paragraph*{Handling \ls$reify$ and \ls$reflect$} in the type-checker
involved implementing the two relatively simple rules for them 
in~\autoref{fig:emf-typing}. A more significant change was made
to \fstar's normalization machinery, extending it to support rules
that trigger evaluation for reified, effectful programs. In contrast,
before our changes, \fstar would never reduce effectful terms. The
change to the normalizer is exploited by \fstar's encoding of proof
obligations to an SMT solver---it now encodes the semantics of
effectful terms to the solver, after using the normalizer to partially
evaluate a reified effectful term to its pure form.

\ch{Commented this out until we implement extraction
\paragraph*{Running programs with user-defined effects} is achieved by
extracting it to OCaml, as is usual for \fstar, except we now inline
the definitions of the underlying pure terms.\ch{This seems misleading,
  we haven't actually implemented extraction of non-primitive effects
  yet: \url{https://github.com/FStarLang/FStar/issues/753}}
Effects marked as primitive are extracted as before while making use
of primitive effects in OCaml---this is formally justified
in~\autoref{sec:stateful}.
}

\ifcheckpagebudget\clearpage\fi
\section{Dijkstra Monads for Free}
\label{sec:dm4f}


This section formally presents \deflang, a language for defining
effects by giving monads with their actions and lifts between them.
Via a pair of translations, we export such definitions to \emf as
effect configurations. The first translation of a term $e$, a CPS,
written $\cps{e}$ produces a predicate-transformer from \deflang term;
the second one is an {\em elaboration}, $\un{e}$, which produces
an \emf implementation of a \deflang term.
The main result shows that for any \deflang term the result of
the \cpst is in a suitable logical relation to the
elaboration of the term, and thus a valid specification for this
elaboration.
We also show that the \cpst always produces monotonic and
conjunctive predicates, properties that should always hold for WPs.
Finally, we show that the \cpst preserves all equalities in
\deflang, and thus translates \deflang monads into \emf Dijkstra
monads.



\subsection{Source: \deflang Effect Definition Language}
\label{sec:deflang}


The source language \deflang is a simply-typed lambda calculus
augmented with an abstract monad \teff, as in \autoref{sec:ex-cps}.
The language is essentially that of \citet{Filinski94} with certain restrictions
on allowed types to ensure the correctness of elaboration.

%
There are two effect symbols: $\neu$ (non-effectful) and $\teff$.
The typing judgment is split accordingly, and $\epsilon$ ranges over both of them.
%
Every monadic term needs to be bound via \textbf{bind$_{\teff}$} to be
used.\footnote{
In this formalization, \textbf{bind} and \textbf{return} appear explicitly in
source programs. When using our implementation, however, the user need not call
\textbf{bind} and \textbf{return}; rather, they write programs in a direct
style, and \textbf{let}-bindings are turned into \textbf{bind}s as needed.
\autoref{sec:dm4f-impl} provides some details on the interpretation and
elaboration of concrete \fstar terms as \deflang terms.}
Functions can only take non-effectful terms as arguments, but may return
a monadic result.

%


The set of \deflang types is divided into \emph{$A$ types}, \emph{$H$
  types}, and \emph{$C$ types}, ranged over by $A$, $C$, and $H$,
respectively. They are given by the grammar:
\[\begin{array}{lll}
    \vphantom{\tarr\narr}
    A & ::= & X
                \mid b
                \mid A \narr A
                \mid A + A
                \mid A \times A \\
    \vphantom{\tarr\narr}
    H & ::= & A \mid C \\
    \vphantom{\tarr\narr}
    C & ::= & H \tarr A
                \mid H \narr C
                \mid C \times C \\
\end{array}\]
\ifsooner
\gdp{If there is time it would be good to have here (and
  correspondingly everywhere else), base types first, then products
  and sums, then function types and, in terms, introduction terms
  before elimination terms, and then, in definitions to keep the
  orders consistent.}
\fi

Here $X$ ranges over type variables (needed to define monads) and $b$ are base types. The $\teff$-arrows
represent functions with a monadic result, and our translations will
provide WPs for these arrows.
$A$ types are referred to as ``$\teff$-free'', since they contain no monadic
operations. $C$ types are inherently computational in the sense that
they cannot be eliminated into an $A$ type: every possible elimination
will lead to a monadic term. They are referred to as ``computational
types''. $H$ types are the union of both, and are called ``hypothesis''
types, as they represent the types of possible functional arguments.
%
%
As an example, the state monad is represented as the type $\smash{S
\tarr (X \times S)}$, where $X$ is a type variable and $S$ is some type
representing the state. We will exemplify our main results for terms of
this type, thus covering every stateful computation definable in \deflang.

\deflang types do not include ``mixed'' $A
\times C$ pairs, computational sums $C + H$, functions of type $\smash{C \narr
A}$, or types with right-nested $\teff$-arrows. We do allow nesting
$\teff$-arrows to the left, providing the generality needed for the
continuation monad, and others.  These restrictions are crafted to
carefully match \emf. Without them, our translations, would generate
ill-typed or logically unrelated \emf terms, and these restrictions
do not appear to be severe in practice, as evidenced by the examples
in~\autoref{sec:examples}.
%
%
%
%
%

The syntax for terms is ($\kappa$ standing for constants):
\[
\begin{array}{lrl}
        e & ::=  & x \mid e~e \mid \lambda x@H.~e \mid \kappa(e,\ldots,e) \\
          & \mid & (e,e) \mid \fst{e} \mid \snd{e} \\
          & \mid & \inl{e} \mid \inr{e} \mid \mycases{e}{x@A.~e}{y@A.~e} \\
          & \mid & \returnT{e} \mid \bindT{e}{x}{e} \\
\end{array}
\]
Typing judgments have the forms $\DmG \vdash e : H \bang \neu$ and
$\DmG \vdash e : A \bang \teff$, where $\Delta$ is a finite sequence of type variables and
$\Gamma$ is a normal typing context, whose types only use type variables from $\Delta$. Here are some example rules:
\[\small
    \infer{\DmG \vdash \lambda x@H.~e : H \earr H' ! \neu}
          {\DmG, x@H \vdash e : H' ! \epsilon\vphantom{\earr}}
\quad
    \infer{\DmG \vdash fe : H' ! \epsilon \vphantom{\earr}}
          {\DmG \vdash f : H \earr H' ! \neu & \DmG \vdash e : H ! \neu}
\]
\[\small
    \infer{\DmG \vdash \returnT{e}: A ! \teff}
          {\DmG \vdash e : A ! \neu}
\qquad
    \infer{\DmG \vdash \bindT{e_1}{x}{e_2} : A' ! \teff}
          {\DmG \vdash e_1 : A ! \teff & \DmG, x:A \vdash e_2 : A' ! \teff}
\]
In these rules we implicitly assume that all appearing types are
well-formed with respect to the grammar, e.g., one cannot form a
function of type $\smash{C \narr A}$ by the abstraction rule.

As an example, 
$\mathrm{return}_\mathrm{ST} =
\lambda x@X.~\lambda s@S.~\returnT{(x,s)}$ has type $\smash{X \narr S \tarr (X
\times S)}$, using these rules. 

When defining effects and actions, one deals (at a top level) with
non-effectful $C$ types ($C\bang\neu$).

\subsection[The CPS Translation]{The \cpst}
\label{sec:cps}

The essence of the \cpst is to translate $\returnT{e}$ and
$\bindT{e_1}{x}{e_2}$ to the returns and binds of the continuation
monad.
%
%
%
%
We begin by defining a translation $\cps{H}$, that translates any $H$ type to
the type of its predicates by CPS'ing the $\teff$-arrows.
First, for any $\tau$-free type $A$, $\cps{A}$ is essentially the
identity, except we replace every arrow $\smash{\narr}$ by a
$\rightarrow$. Then, for computation types, we define:
\[
\begin{array}{lll@{\hspace{5em}}lll}
    \vphantom{\narr\tarr}   \cps{(H \narr C)}  &=& \cps{H} -> \cps{C} \\
    \vphantom{\narr\tarr}   \cps{(C \times C')} &=& \cps{C} \times \cps{C'} \\
    \cps{(H \tarr A)}   &=& \multicolumn{4}{l}{\cps{H} -> (\cps{A} -> \typez) -> \typez} \\
\end{array}
\]
%
%
Note that all arrows on the right hand side have a
\ls$Tot$ codomain, as per our notational convention.

In essence, the codomains of $\teff$-arrows are CPS'd into a WP, which
takes as argument a predicate on the result and produces a predicate
representing the ``precondition''. All other constructs are just
translated recursively: the real work is for the $\teff$-arrows.

For example, for the state monad $\smash{S \tarr (X \times S)}$, the \cpst{}
produces the \emf type $S -> (X \times S -> \typez) -> \typez$. It is
the type of predicates that map an initial state and a postcondition (on
both result and state) into a proposition. Modulo isomorphism (of the
order of the arguments and currying)\footnotemark{} this is exactly the
type of WPs in current \fstar's state monad (cf. \autoref{sec:intro}, \autoref{sec:ex-cps}).
\footnotetext{One can tweak our translation to generate WPs that have
  the usual postcondition to precondition shape. However we found
  the current shape to be generally easier to work with.}


The two main cases for the \cpst for well-typed \deflang terms are
shown below; every other case is simply a homomorphic application of
$\star$ on the sub-terms.
\[\nqquad\begin{array}{lcl}
    \cps{(\returnT{e})} \nqquad&\!\!=\!\!&\nqquad \lambda p@(\cps{A} -> \typez).~p~\cps{e} ~~
                        \mbox{\tiny{when}}~\DmG \vdash e : A!\neu \\
    \cps{(\bindT{e_1}{x}{e_2})} \nqquad&\!\!=\!\!&\nqquad \lambda p@(\cps{A'}->\typez).~\cps{e_1}~(\lambda x@A.~\cps{e_2}~p) \\
                       \multicolumn{3}{r}{\hspace{-3cm}\mbox{{\tiny{when}} $\DmG, x:A \vdash e_2 : A'!\teff$}}\\
\end{array}\]
Formally, the \cpst and elaboration are defined over a typing
derivation, as one needs more information than what is present in the
term.
%
%
%
%
The \cpsts of terms and types are related in the following sense, where
we define the environments $\un\Delta$ as $X_1 : \typez, \ldots, X_n
: \typez$ when $\Delta = X_1,\ldots,X_n$ ; and $\cps{\Gamma}$ as $x_1
: \cps{t_1}, \ldots x_n : \cps{t_n}$ when $\Gamma = x_1 : t_1, \ldots
x_n : t_n$ (we assume that variables and type variables are also \emf
variables).
%

\begin{theorem}[well-typing of \cpst]~\\
\label{thm:cps-typing}
$\Delta \mid \Gamma \vdash e : C\bang\neu$ implies
$\un\Delta, \cps{\Gamma} \vdash \cps{e} : \cps{C}$.
\end{theorem}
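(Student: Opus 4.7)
The proof will proceed by induction on the typing derivation of $e$. Since the judgment $\DmG \vdash e : A \bang \teff$ appears as a premise in rules such as the one for $\mathbf{bind}_\teff$, but the theorem as stated only covers $\bang\neu$, I would first strengthen the statement to a mutual induction that also handles effectful terms. Concretely, I would also prove: if $\DmG \vdash e : A \bang \teff$, then $\un\Delta, \cps{\Gamma} \vdash \cps{e} : (\cps{A} -> \typez) -> \typez$. This matches the intuition that effectful terms are CPS'd into predicate transformers on their result type, consistent with the $\teff$-arrow clause of $\cps{(\cdot)}$.

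The plan is then to case-split on the last rule of the derivation. The purely structural cases (variables, constants, pairing/projection, injections, case, non-effectful abstractions and applications) are handled by the homomorphic clauses of $\cps{(\cdot)}$: each case unfolds to a standard \emf typing derivation, using the fact that $\cps{(\cdot)}$ commutes with the type constructors $\narr$, $\times$, $+$ on $A$-types (where the translation is essentially the identity modulo replacing $\narr$ by $\rightarrow$). For the $\teff$-abstraction rule $\lambda x@H.~e$ at type $H \earr H' \bang \neu$, the induction hypothesis on the body (possibly in the $\teff$ variant) gives exactly a term of type $\cps{H'}$ or $(\cps{A} -> \typez) -> \typez$, matching $\cps{(H \earr H')}$. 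The two genuinely non-trivial cases are $\returnT{e}$ and $\bindT{e_1}{x}{e_2}$: for the former, if $\DmG \vdash e : A \bang \neu$ then $\cps{e} : \cps{A}$ by IH, so $\lambda p@(\cps{A} -> \typez).\,p~\cps{e}$ has type $(\cps{A} -> \typez) -> \typez$ as required; for the latter, the IH yields $\cps{e_1} : (\cps{A} -> \typez) -> \typez$ and, under the extended context, $\cps{e_2} : (\cps{A'} -> \typez) -> \typez$, so the translated bind typechecks after one $\lambda$-binding of $p$ and two applications.

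Two auxiliary observations are needed. First, a substitution-on-types / context-weakening lemma for \emf to feed the bind case cleanly—this is routine since $\cps{\Gamma}$ extends by $x : \cps{A}$ exactly in the spot where $\mathbf{bind}_\teff$ extends $\Gamma$ by $x:A$. Second, I would verify that the grammatical restrictions on \deflang types (no $A \times C$, no $C + H$, no $C \narr A$, no right-nested $\teff$-arrows) are exactly what is needed for every \emf subterm generated by $\cps{(\cdot)}$ to fall into a well-formed type: in particular, the absence of right-nested $\teff$-arrows ensures that the codomain of a CPS'd arrow is always the WP type $(\cps{A} -> \typez) -> \typez$ rather than something requiring a further predicate-transformer-transformer, and the absence of mixed $A \times C$ pairs means $\cps{(\cdot)}$ on $\times$ lands in a product of two \emf types of the same "kind" (values vs.\ WPs).

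The main obstacle I anticipate is bureaucratic rather than conceptual: threading the mutual induction cleanly through the rules while checking, at each syntactic form, that the homomorphic translation actually lands in the right fragment of \emf types dictated by the grammar of $\cps{H}$. The bind case in particular forces one to use the $\teff$-variant of the IH on $e_1$ and on $e_2$ under the extended context, and to see that the $\cps{A'}$ appearing in the postcondition of the whole term matches the $\cps{A'}$ produced by the IH on $e_2$—this is where the restriction forbidding $\teff$-arrows from returning computation types pays off, since it guarantees $A'$ is $\teff$-free and thus $\cps{A'}$ is itself a value type.
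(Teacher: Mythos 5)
Your proposal is correct and takes essentially the same approach as the paper: the appendix restates the theorem as a joint lemma covering both $\DmG \vdash e : H \bang \neu$ (giving $\cps{e} : \cps{H}$) and $\DmG \vdash e : A \bang \teff$ (giving $\cps{e} : (\cps{A} \to \typez) \to \typez$), proved by induction on the typing derivation. The one detail to make explicit is that the non-effectful clause must be generalized from $C$ types to all $H$ types (not just the $\teff$ case added), which you already rely on implicitly when invoking the induction hypothesis at $A \bang \neu$ in the return and application cases.
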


After translating a closed term $e$, one can abstract over the variables
in $\un\Delta$ to introduce the needed polymorphism in \emf. This will
also be the case for elaboration.

As an example, for the previous definition of
$\mathrm{return}_\mathrm{ST}$ we get the translation $\lambda
x{:}X.~\lambda s{:}S.~\lambda p{:}(X \times S -> \typez).~ p (x,s)$, which
has the required transformer type: $X -> S -> (X \times S -> \typez) ->
\typez$ (both with $X$ as a free type variable). It is
what one would expect: to prove a postcondition $p$ about
the result of running $\mathrm{return}_\mathrm{ST}~x$, one needs to
prove $p(x,s)$ where $s$ is the initial state.

\subsection{Elaboration}
\label{sec:elab}

\newcommand{\mc}[2]{\multicolumn{#1}{l}{#2}}

\begin{figure*}[!t]
\[\begin{array}{llclllcl}
(1) & \un{x} & = & x &
(5) & \un{\fst{e}} & = & \mfst{\un{e}} \\
(2) & \un{\kappa(e_1,\ldots,e_n)} & = &  \kappa~\un{e_1}~\ldots~\un{e_n}  &
(6) & \un{\snd{e}} & = & \msnd{\un{e}} \\
(3) & \un{\lambda x@A.\, e} & = &  \lambda x@\un{A}.\, \un{e} &
(7) & \un{\inl{e}} & = & \minl{\un{e}} \\
(4) & \un{\lambda x@C.\, e} & = &  \lambda x^w@\cps{C}.\, \lambda x@\F{C}{x^w}.\, \un{e} &
(8) & \un{\inr{e}} & = & \minr{\un{e}}\\

(9) & \un{e_1e_2} & = &  \mc{3}{\un{e_1}~\un{e_2}} & \mc{2}{(\Delta\mid\Gamma \vdash  e_2: A \bang \neu)}\\

(10) & \un{e_1e_2} & = &  \mc{3}{\un{e_1}~(\cps{e_2}~\sG)~\un{e_2}} & \mc{2}{(\Delta\mid\Gamma \vdash  e_2: C \bang \neu)}\\

(11) & \un{(e_1,e_2)} & = & (\un{e_1},\un{e_2}) \\

(12) & \un{\mycases{e} {x@A_1.\, e_1}  {y@ A_2.\, e_2} } & = &
        \mc{3}{\caseimp{\un{e}}
            {x.\un{e_1}}{y.\un{e_2}}}
                & \mc{2}{(\Delta\mid\Gamma,x@A_1 \vdash e_1 : A\bang\varepsilon)}\\

(13) & \un{\mycases{e} {x@A_1.\, e_1}  {y@ A_2.\, e_2} } & = &
        \mc{3}{\casefull{\un{e}}{z}
            {x.\un{e_1}} {y.\un{e_2}}
            {\F{C}{\caseimp{z}{x.(\cps{e_1}~\sG)}{y.(\cps{e_2}~\sG)}}}}
                & \mc{2}{(\Delta\mid\Gamma,x@A_1 \vdash e_1 : C\bang\neu)}\\

(14) & \un{\returnT{e}} & = & \mc{3}{\pureret{\un{A}}{\un{e}}} & \mc{2}{(\Delta\mid\Gamma \vdash e: A\bang\teff)} \\

(15) & \un{\bindT{e_1}{x@A}{e_2}} & = & \mc{3}{\purebind{\un{A}}{\un{A'}}{(\cps{e_1}~\sG)}{\un{e_1}}{(\lambda x@\cps{A}.\, \cps{e_2}~\sG)}{x.\un{e_2}}}
            & \mc{2}{(\Delta\mid\Gamma,x:A \vdash e_2: A'\bang\teff)}\\
\end{array}\]
\caption{The elaboration of \deflang terms to \emf}
\label{fig:elab}
\end{figure*}

Elaboration is merely a massaging of the source term to make it
properly typed in \emf.
During elaboration, monadic operations are translated to those of the
identity monad in \emf, namely \ls$Pure$.

\paragraph*{Elaboration of types}
We define two elaboration translations for \deflang types, which
produce the \emf types of the elaborated expression-level terms.
The first translation $\un{A}$ maps an $A$ type to a simple \emf type, while
the second one $\F{C}{\wp}$ maps a $C$ type and a {\em specification}
$\wp$ of type $\cps{C}$ into an \emf computation type containing
\ls$Tot$ and \ls$Pure$ arrows.
The $\un{A}$ translation is the same as the CPS one, \IE $\un{A} = \cps{A}$.
%
%
%

%
The $\F{C}{\wp}$ (where $\wp : \cps{C}$) translation is defined by:
\[
\begin{array}{llll}
(1) & \vphantom{\earr \F{\earr}{}}    \F{C \times C'}{\wp} &\eqdef& \F{C}{(\mfst{\wp})} \times \F{C'}{(\msnd{\wp})} \\
(2) & \vphantom{\earr \F{\earr}{}}    \F{C \earr H}{\wp}   &\eqdef& \product{w'@\cps{C}}{\F{C}{w'} -> \G{\epsilon}{H}{\wp~w'}} \\
(3) & \vphantom{\earr \F{\earr}{}}    \F{A \earr H}{\wp}   &\eqdef& \product{x@\un{A}}{\G{\epsilon}{H}{\wp~x}} \\
\end{array}
\]
%
%
Here we define $\G{\neu}{C}{\wp} = \F{C}{\wp}$ and
$\G{\teff}{A}{\wp} = \kw{Pure}~\un{A}~\wp$.
%

The main idea is that if an \emf term $e$ has type $\F{C}{\wp}$, then $\wp$
is a proper specification of the final result. Putting pairs aside
for a moment, this means that if one applies enough arguments
$e_i$ to $e$ in order to eliminate it into a \ls$Pure$ computation,
then $e~\bar{e_i} : \Pure{A}{\wp~\bar{s_i}}$, where each $s_i$ is the
specification for each $e_i$. This naturally extends to pairs,
for which the specification is a pair of proper
specifications, as shown by case (1) above.

In case (2), the $w':\cps{C}$ arguments introduced by F are relevant for the
higher-order cases, and serve the following purpose, as illustrated
in \S\ref{sec:ex-cps} (for the translation of \ls$bind$ for
the \ls$ST$ monad) and
\S\ref{sec:ex-cont} (for the continuation monad): when taking
computations as arguments, we first require their specification in
order to be able to reason about them at the type level.
Taking these specification arguments is also the only way for being
WP-polymorphic in \emf. Note that, according to the dependencies, only the
$\cps{C}$ argument is used in the specifications, while we shall
see in the elaboration of terms
that only the $\F{C}{\wp}$ argument is used in terms.
When elaborating terms, we pass this specification as an extra
argument where needed.

In case (3), when elaborating functions taking an argument of $A$ type there is
no need to take a specification, since the argument is completely
non-effectful and can be used at both the expression and the type levels.
Informally, a non-effectful term is its own specification.

Returning to our state monad example, the result of $\F{S \tarr (X
\times S)}{\wp}$ is $s@S -> \Pure{(X \times S)}{\wp~s}$, i.e., the
type of a function $f$ such that for any postcondition $p$ and
states $s$ for which one can prove the precondition $\wp~s~p$, we
have that $f~s$ satisfies $p$.

\paragraph*{Elaboration of terms} is defined in \autoref{fig:elab}
and is, as expected, mostly determined by the translation of
types. The translation is formally defined over typing derivations,
however, for brevity, we present each translation rule simply on
the terms, with the important side-conditions we rely on from the
derivation shown in parenthesis. We describe only the most interesting
cases.

\paragraph*{Computational abstractions and applications (cases 4 and 10)}
Case (4) translates a function with a computational argument $x@C$ to
a function that expects two arguments, a specification $x^w@\cps{C}$ and $x$
itself, related to $x^w$ at a suitably translated type. We track the
association between $x$ and $x^w$ using a substitution $\sG$, which
maps every computational hypothesis $x:C$ in $\Gamma$ to $x^w$ (of
type $\cps{C}$) in $\un\Gamma$, In case (10), when passing a
computation argument $e_2$, we need to eliminate the double
abstraction introduced in case (4), passing both $\cps{e_2}\ s_\Gamma$,
\emph{i.e.} the specification of $e_2$
where we substitute the free computation variables, and $\un{e_2}$ itself.

\paragraph*{Return and bind (cases 14 and 15)} The last two rules show
the translation of return and bind for $\tau$ to return and bind
for \ls$Pure$ in \emf. This is one of the key points: in the
elaboration, we interpret the $\tau$ as the identity monad in \emf,
whereas in the \cpst, we interpret $\tau$ as the continuation
monad. \autoref{thm:elab-typing}, our main theorem, shows that
\emf's WP computation in the \ls$Pure$ monad for $\un{e}$ produces
a WP that is logically related to the \cpst of $e$, i.e., WPs and
the CPS coincide formally, at arbitrary order.

\begin{theorem}[Logical relations lemma]
\label{thm:elab-typing}
    \[\begin{array}{llll}
    \mbox{1.} & \Delta \mid \Gamma \vdash e : C\bang\neu &
        ==> & \un\Delta, \un\Gamma \vdash \un{e} : \F{C}{(\cps{e} \sG)} \\

    \mbox{2.} & \Delta \mid \Gamma \vdash e : A\bang\teff &
        ==> & \un\Delta, \un\Gamma \vdash \un{e} : \Pure{\un{A}}{\cps{e} \sG}
    \end{array}\]
\end{theorem}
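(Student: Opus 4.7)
The plan is to prove both parts of the theorem simultaneously by induction on the \deflang typing derivation of $e$, with one case per syntactic form appearing in Figure~\ref{fig:elab}. Each case matches up the elaboration rule with a corresponding \emf typing rule from Figure~\ref{fig:emf-typing}, using the induction hypothesis on subterms and Theorem~\ref{thm:cps-typing} to discharge the WP-typing side conditions that arise in \emf (for example the well-typedness of the $\wp$ argument in {\sc T-Bind} or in the $\kw{Pure}~t~\wp$ computation type). Before starting the induction, I would establish two auxiliary substitution lemmas: (i) a CPS substitution lemma stating that $\cps{e[e'/x]}$ equals $\cps{e}[\cps{e'}/x]$ when $x$ ranges over $A$-typed variables, and $\cps{e}[\cps{e'}/x^w, \un{e'}/x]$ (only the $x^w$ replacement being semantically visible) when $x$ ranges over $C$-typed variables; and (ii) an analogous elaboration substitution lemma for $\un{\cdot}$. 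These are proved by straightforward structural induction on $e$, following the recursive definitions.

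The easy cases are variables, constants, pairs, sums, and $A$-typed abstractions/applications (rules 1--3, 5--9, 11): the elaboration is homomorphic, the type translation commutes with the term translation by construction, and the relevant \emf typing rule applies directly. The \ls$return$ case (14) reduces to checking that $\un{\returnT{e}} = \pureret{\un{A}}{\un{e}}$ has type $\Pure{\un{A}}{\mretWP[\mPure]~\un{A}~\un{e}}$ by {\sc T-Return}, and then observing that $\mretWP[\mPure]~\un{A}~\un{e}$ is definitionally equal (by the fixed $\mPure$ signature) to $\cps{(\returnT{e})} = \lambda p.~p~\un{e}$, so that {\sc T-Sub} with {\sc S-Conv} closes the case. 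The \ls$bind$ case (15) is analogous but the convertibility check compares $\mbindWP[\mPure]$ instantiated at the CPS of the subterms with the explicit $\lambda p.~\cps{e_1}~(\lambda x.~\cps{e_2}~p)$ produced by the \cpst, which again follows by unfolding $\mPure$'s $\ii{bind}^\star$.

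The delicate cases are the higher-order ones, namely the $C$-typed abstraction (rule 4), $C$-typed application (rule 10), and the case-expression with a $C$-typed branch (rule 13). In case (4), by IH the body $e$ elaborates to have type $\F{C'}{(\cps{e}\,\sG\cup\{x^w/x\})}$ in the extended context, and the elaborated lambda $\lambda x^w@\cps{C}.\,\lambda x@\F{C}{x^w}.\,\un{e}$ must be given the type $\F{C \earr H}{\cps{(\lambda x@C.\,e)}\,\sG}$. Unfolding the definition of $\F{\cdot}{\cdot}$ on an arrow domain and unfolding $\cps{(\lambda x@C.\,e)} = \lambda x^w.\,\cps{e}$ on the specification side, the two types coincide, so the \emf abstraction rule applies. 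Case (10) is dual: the elaborated application $\un{e_1}~(\cps{e_2}\,\sG)~\un{e_2}$ feeds first the specification, then the term, matching the two-argument shape forced by $\F{\cdot}{\cdot}$; here we critically use that by IH, $\un{e_2} : \F{C}{(\cps{e_2}\,\sG)}$, which is exactly what the domain of the elaborated function demands. Case (13) combines both ideas inside the annotated return type of \emf's dependent match, where the explicit return annotation $\F{C}{\caseimp{z}{x.(\cps{e_1}\,\sG)}{y.(\cps{e_2}\,\sG)}}$ is exactly what is needed to make the two branch-IHs agree after the scrutinee-as-$z$ binding.

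The main obstacle I anticipate is bookkeeping around the substitution $\sG$ and the two parallel variable copies $x$ and $x^w$ used for $C$-typed hypotheses: one must verify that the CPS-level specification appearing in the elaborated type always agrees, up to the convertibility relation used in {\sc S-Conv}, with the one produced by the induction hypothesis after the substitution lemmas are applied. The cleanest way to manage this is to strengthen the statement so that it is stated in a ``closed-under-substitution'' form and to insist that $\sG$ is built uniformly from $\Gamma$ throughout the induction; then every case reduces to a local convertibility check between $\F{C}{\cdot}$ unfolded one layer and the corresponding $\cps{\cdot}$ unfolded one layer, with no long chains of SMT-style reasoning required.
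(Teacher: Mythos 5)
Your overall strategy---a joint structural induction on the two typing judgments, matching each elaboration rule of \autoref{fig:elab} against the corresponding \emf typing rule and unfolding $\F{\cdot}{\cdot}$ and the fixed $\mPure$ signature where needed---is exactly the paper's strategy, and your treatment of the higher-order cases (4), (10) and (13) is essentially the paper's. However, there is a genuine gap in how you handle the interface between the term level and the specification level for non-effectful subterms. In the application case with an $A$-typed argument (rule 9), in the return case (14), and in both case-expression rules, the \emf typing rule produces a computation type whose WP index mentions the elaborated term $\un{e}$ (e.g.\ \textsc{T-Return} gives $\Pure{\un{A}}{\lambda p.\,p~\un{e}}$, and dependent application substitutes $\un{e_2}$ into the codomain), whereas the goal mentions $\cps{e}\,\sG$. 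You bridge this by asserting that ``the type translation commutes with the term translation by construction,'' but that is not by construction: it is a separate statement---that $\un{e} = \cps{e}\,\sG$ \emph{syntactically} whenever $\Delta\mid\Gamma\vdash e : A\bang\neu$---which the paper isolates as \autoref{pure-relation} and proves by its own induction (it holds only for $A$-typed, non-effectful terms; for $C$-typed terms the two translations genuinely diverge). Without this lemma, or something playing its role, the cases above do not close. Your proposed substitution lemmas about $\cps{e[e'/x]}$ and $\un{e[e'/x]}$ do not fill this hole: the \deflang typing rules never substitute terms into terms, so those lemmas are not needed for the induction, and they say nothing about the relationship between $\un{e}$ and $\cps{e}\,\sG$.

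A second, smaller omission: when the specifications agree only up to provable equality or reduction (as happens after $\beta$-reducing the WP inside a branch of a case, or when re-associating $\sG$), you appeal to \textsc{S-Conv}. But $\F{C}{\cdot}$ is a meta-level type former, not an \emf function, so convertibility of $w_1$ and $w_2$ does not immediately yield convertibility of $\F{C}{w_1}$ and $\F{C}{w_2}$; the paper proves a dedicated invariancy lemma ($\Gamma \vDash w_1 = w_2$ implies $\Gamma \vdash \F{C}{w_1} <: \F{C}{w_2}$) by induction on the structure of $C$, threading the equality through the $\mPure$ and product cases via \textsc{S-Pure} and \textsc{S-Prod}. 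Your ``local convertibility check, one layer at a time'' gestures at this, but you should state and prove it as a standalone lemma, since it is used in nearly every case involving a $C$ type.
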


Where $\un\Gamma$ is defined by mapping any ``$x : A$'' binding in
$\Gamma$ to ``$x : \un{A}$'' and any ``$y : C$'' binding to ``$y^w :
\cps{C}, y : \F{C}{y^w}$''.
Instantiating (1) for an empty $\Gamma$, we get as corollary that
$\un\Delta \vdash \un{e} : \F{C}{\cps{e}}$, representing the fact that
$\cps{e}$ is a proper specification for $\un{e}$.
Following the \ls$ST$ monad example, this implies that for any source
term $e$ such that
$\smash{X \mid \cdot \vdash e : S \tarr (X \times S)}$
holds, then
$X : \typez \vdash \un{e} : \product{s_0@S}{\Pure{(X \times S)}{\cps{e}~s_0}}$,
will hold in \emf, as intuitively expected.

\iflater
\gdp{If we had elaborations of types we could say what the domain and codomain of the logical relation were, as well as explaining in what type-theoretic sense we had a relation. I know there are difficulties but it would be great if there were some way of rescuing them. In the meantime the reader may be puzzled by the assertion that there is a relation when there is no evident co-domain for the relation. Is there anything helpful we can say, to at least ensure that the reader knows we are aware of the issue?}
\gm{Will think about this}
\fi



%


\subsection{Monotonicity and Conjunctivity}
\label{sec:mon-conj}

A key property of WPs is monotonicity: weaker postconditions should map
to weaker preconditions. This is also an important \fstar invariant
that allows for logical optimizations of WPs.
Similarly, WPs are
conjunctive: they distribute over conjunction and universal quantification
in the postcondition.
We show that any \emf term obtained from the \cpst{} is monotonic and
conjunctive, for higher-order generalizations of the usual definitions
of these properties \cite{Dijkstra97}.

\newcommand{\eqtwo}[0]{==}

We first introduce a hereditarily-defined relation between \emf terms
$t_1 \stg_t t_2$, read ``$t_1$ stronger than $t_2$ at type $t$''
and producing an \emf formula in $\typez$, by recursion on the
structure of $t$:
\[\small
\arraycolsep=1pt
\begin{array}{lll}
    x \stg_{\typez} y &\eqdef& x => y \\
    x \stg_{b} y      &\eqdef& x \eqtwo y \\
    x \stg_{X} y      &\eqdef& x \eqtwo y \\
    f \stg_{t_1 -> t_2} g  &\eqdef&
            \forall x,y:t_1.\,
                x \stg_{t_1} x \land
                x \stg_{t_1} y \land
                y \stg_{t_1} y =>
                f~x \stg_{t_2} g~y \\
    x \stg_{t_1 \times t_2} y &\eqdef& \mfst{x} \stg_{t_1} \mfst{y} \land \msnd{x} \stg_{t_2} \msnd{y} \\
    x \stg_{t_1 + t_2} y &\eqdef& (\exists v_1,v_2:t_1,\, x \eqtwo \minl{v_1} \land y \eqtwo \minl{v_2} \land v_1 \stg_{t_1} v_2)~\lor \\
                        &      & (\exists v_1,v_2:t_2,\, x \eqtwo \minr{v_1} \land y \eqtwo \minr{v_2} \land v_1 \stg_{t_2} v_2) \\
  \end{array}
\]
where $b$ represents any \emf base type (\IE a type constant in $\typez$)
 and $X$ any type variable\footnotemark.
The symbol $\eqtwo$ represents \emf's squashed propositional equality.
%
%
\footnotetext{We can get a stronger result if we don't restrict
the relation on type variables to equality and treat it abstractly instead.
For our purposes this is not needed as we plan to instantiate type variables
with predicate-free types.}
%
%
The $\stg$ relation is only defined for the subset of \emf types that are
all-$\mTot$ and non-dependent. All types resulting from the \cpst are
in this subset, so this not a limitation for our purposes.
A type $t$ in this subset is called \emph{predicate-free} when it does
not mention $\typez$.
For any predicate-free type $t$ the relation $\stg_t$ reduces
to extensional equality.

The $\stg$ relation is not reflexive.
We say that an \emf term $e$ of type $t$ is {\em monotonic} when $e
\stg_{t} e$.
Note that monotonicity is preserved by application.
For first-order WPs this coincides with the standard definitions,
and for higher-order predicates it gives a reasonable extension.
Since the relation reduces to equality on predicate-free types,
every term of such a type is trivially monotonic.
The reader can also check that every term of a type $t = d_1 ->
\cdots -> d_n -> \typez$ (where each $d_i$ is predicate-free) is monotonic;
it is only at higher-order that monotonicity becomes interesting.

For a first-order example, let's take the type of WPs for
programs in the \ls$ST$ monad: $S -> (X \times S -> \typez) -> \typez$,
making use of the previous simplification:
%
\[\arraycolsep=1pt
 \begin{array}{cl}
    & f \stg_{S -> (X \times S -> \typez) -> \typez} f \\
 \equiv & \forall s_1, s_2.\, s_1 = s_1 \land s_1 = s_2 \land s_2 = s_2 => f~s_1 \stg f~s_2\\
 \iff & \forall s.\, f~s \stg_{(X \times S -> \typez) -> \typez} f~s\\
 \equiv & \forall s, p_1, p_2.\,
                  p_1 \stg p_2 => f~s~p_1 \stg_{\typez} f~s~p_2 \\
 \iff & \forall s, p_1, p_2.\,
             (\forall x,s'.\, p_1~(x,s') => p_2~(x,s')) => (f~s~p_1 => f~s~p_2)
\end{array}
\]

%
This is exactly the usual notion of monotonicity for imperative
programs \cite{Dijkstra97}: ``if $p_2$ is weaker
than $p_1$, then $f~s~p_2$ is weaker than $f~s~p_1$ for any $s$''.

Now, for a higher-order example, consider the continuation monad in
\deflang: $\smash{\mCont~X = (X \tarr R) \tarr R}$, where $X$ is the type
variable and $R$ some other variable representing the end result of the
computation. The type of WPs for this type is
\[\mCont_{\ii{wp}}~X = (X {->} (R {->} \typez) {->} \typez) -> (R {->} \typez) {->} \typez\]
Modulo argument swapping, this maps a postcondition on $R$ to a precondition
on the specification of the continuation function.
The condition $\ii{wp} \stg_{\mCont_{\ii{wp}}~X} \ii{wp}$ reduces and simplifies to:
\[
\begin{array}{c}
       \ii{kw}_1 \stg \ii{kw}_1 \;\land\;
       \ii{kw}_1 \stg \ii{kw}_2 \;\land\;
       \ii{kw}_2 \stg \ii{kw}_2 \;\land\;
       p_1 \stg p_2 \\
\implies \ii{wp}~\ii{kw}_1~p_1 \implies \ii{wp}~\ii{kw}_2~p_2 \\
\end{array}
\]
for any $\ii{kw}_1, \ii{kw}_2, p_1, p_2$ of appropriate types. Intuitively,
this means that $\ii{wp}$ behaves monotonically on both arguments, but
requiring that the first one is monotonic.
In particular, this implies that for any monotonic $\ii{kw}$, $\ii{wp}~\ii{kw}$ is
monotonic at type $(R -> \typez) -> \typez$.

%
%
%

We proved that the \cpst of any well-typed source term $e :
C\bang\neu$ gives a monotonic $\cps{e}$ at the type $\cps{C}$. This
result is more general than it appears at a first glance: not only
does it mean that the WPs of \emph{any} defined return and bind
are monotonic, but also those of any action or function are. Also, lifts
between monads and other higher-level computations will preserve this
monotonicity. Furthermore, the relation $\vDash$ in the conclusion of
the theorem below is \emf's validity judgment, i.e., we show that these
properties are actually provable within \fstar without relying
on meta-level reasoning.

\begin{theorem}[Monotonicity of \cpst]~\\
For any $e$ and $C$, $\Delta \mid \cdot \vdash e :
C \bang \neu$ implies $\un\Delta \vDash \cps{e} \le_{\cps{C}} \cps{e}$.
\end{theorem}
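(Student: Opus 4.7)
The plan is to prove this by induction on the typing derivation of $e$, but strengthened to a standard ``fundamental lemma'' of logical relations that handles open terms. Concretely, we prove simultaneously for the two \deflang typing judgments: if $\Delta \mid \Gamma \vdash e : C \bang \neu$ (resp. $A \bang \teff$) and $\gamma_1, \gamma_2$ are two substitutions mapping each $x : T$ in $\Gamma$ to \emf terms with $\gamma_1(x) \stg_{\cps{T}} \gamma_2(x)$ (taking $\cps{A} = \un{A}$ on $\tau$-free types and identity on type variables), then $\cps{e}[\gamma_1] \stg_{\cps{C}} \cps{e}[\gamma_2]$ (resp.\ the analogous statement at the CPS'd $\teff$-type). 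The theorem follows by specializing $\gamma_1 = \gamma_2 = \mathrm{id}$: for every variable $x : T$, reflexivity at $\cps{T}$ holds because, by Theorem~\ref{thm:cps-typing}, $\cps{T}$ is always a type of the form $d_1 \to \cdots \to d_n \to \typez$ with predicate-free $d_i$, and, as noted in the text, every term of such a type is trivially monotonic.

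The inductive cases are largely mechanical once the strengthened statement is in place. Variables follow directly from the assumption on $\gamma$. Pairs, injections, projections, and constants unfold the definition of $\stg$ at the relevant structural type and appeal to the IH componentwise. For abstractions $\lambda x : H.\, e$ we must show $\cps{e}[\gamma_1, x \mapsto u_1] \stg \cps{e}[\gamma_2, x \mapsto u_2]$ whenever $u_1 \stg u_1$, $u_1 \stg u_2$, $u_2 \stg u_2$; this is exactly an instance of the IH at the extended substitution. For applications $e_1\,e_2$, the IH on $e_1$ at arrow type provides exactly the implication we need, and its three side-conditions ($u \stg u$, $u \stg v$, $v \stg v$ on the argument) are discharged by applying the IH to $e_2$ under $\gamma_1$, under $(\gamma_1, \gamma_2)$, and under $\gamma_2$ respectively. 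For $\returnT{e}$, whose CPS is $\lambda p.\, p\,\cps{e}$, monotonicity follows because $p_1 \stg p_2$ at $\cps{A} \to \typez$ and $\cps{e}[\gamma_1] \stg \cps{e}[\gamma_2]$ at $\cps{A}$ (by IH) imply $p_1\,\cps{e}[\gamma_1] \Rightarrow p_2\,\cps{e}[\gamma_2]$ by the very definition of $\stg$ at arrow type. For $\bindT{e_1}{x}{e_2}$, whose CPS is $\lambda p.\, \cps{e_1}\,(\lambda x.\, \cps{e_2}\,p)$, two nested applications of the IH (and of the lambda case) suffice; crucially, the $\lambda x.\, \cps{e_2}\,p$ continuation must itself be shown to satisfy the three self/cross monotonicity side-conditions at $\cps{A} \to \typez$, which is again a direct instance of the IH on $e_2$ paired with the IH on $p_1 \stg p_2$.

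The main obstacle is orchestrating the three-way side-conditions on arguments demanded by the arrow case of $\stg$. The formulation of $\stg$ at function types is deliberately asymmetric: it requires both self-monotonicity of each argument and the cross relation between them, so each time we apply the IH at an arrow type we must produce three different derivations using possibly different pairs of substitutions among $\{\gamma_1, \gamma_2\}$. Keeping the induction statement symmetric in $\gamma_1, \gamma_2$ and allowing both to be the identity where needed makes this bookkeeping routine, and it is also what makes higher-order CPS'd terms (e.g.\ the continuation monad's $\cps{\mathsf{bind}}$) work out. Once this pattern is set up, every \deflang term clause matches it cleanly.

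Finally, since every step above is either a $\beta$-reduction, a destructuring of $\stg_\typez$ into implication, or an instance of the IH, all reasoning is internal to \emf's validity judgment $\vDash$, justifying the claim in the statement that monotonicity is established within \fstar rather than only at the meta-level.
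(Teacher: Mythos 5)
Your overall strategy---a binary ``fundamental lemma'' proved by induction on the typing derivation and then specialized to recover $\cps{e}\stg\cps{e}$---is the same as the paper's, which states the strengthened claim with a doubled context $\Gamma^{12}$ (each $x{:}T$ becomes $x^1,x^2$ plus the assumption $x^1\stg x^1\land x^1\stg x^2\land x^2\stg x^2$) rather than with your closing substitutions. However, as stated your induction hypothesis is too weak to carry out the very steps you describe. You assume only the cross condition $\gamma_1(x)\stg_{\cps{T}}\gamma_2(x)$, yet in the application, return, and bind cases you instantiate the IH at the pairs $(\gamma_1,\gamma_1)$ and $(\gamma_2,\gamma_2)$ to discharge the self conditions $u\stg u$ and $v\stg v$ demanded by $\stg$ at arrow types. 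Those instances require $\gamma_1(x)\stg\gamma_1(x)$ and $\gamma_2(x)\stg\gamma_2(x)$ as hypotheses, and these do \emph{not} follow from $\gamma_1(x)\stg\gamma_2(x)$: the paper points out that $\stg$ is not reflexive, and at arrow types $f\stg g$ does not entail $f\stg f$. The fix is exactly what the paper does: build all three assumptions into the context (or into the hypothesis on $\gamma_1,\gamma_2$) and prove a renaming lemma (the $[2\to 1]$ substitution lemma in the appendix) showing that the symmetric context lets you turn the cross conclusion into the two self conclusions. Without that strengthening, the arrow-elimination cases of your induction do not close.

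A second, smaller inaccuracy: you justify the specialization $\gamma_1=\gamma_2=\mathrm{id}$ by claiming that every variable type $T$ in $\Gamma$ has $\cps{T}$ of the form $d_1\to\cdots\to d_n\to\typez$ with predicate-free $d_i$, hence trivially self-related. That is false for computational variables of higher-order $C$ type: a variable of the continuation-monad type $(X\tarr R)\tarr R$ translates to $(X\to(R\to\typez)\to\typez)\to(R\to\typez)\to\typez$, whose first argument mentions $\typez$, and $\stg$ is not reflexive there. This is harmless for the theorem itself, where $\Gamma=\cdot$ and the substitutions are vacuous, but it blocks using identity substitutions on open terms in the way you suggest; the paper avoids the issue entirely by never substituting, keeping the (duplicated) free variables in the \emf context together with their monotonicity facts as explicit hypotheses.
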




We give a similar higher-order definition of conjunctivity, and prove
similar results ensuring the \cpst produces conjunctive WPs. The definition
for conjunctivity is given below, where $a$ describes the
predicate-free types (including variables).
\newcommand{\C}[2]{\mathbb{C}_{#1}(#2)}
\[\small
\arraycolsep=1pt
\begin{array}{lll}
    \C{\dneg{a}}{w}        &\eqdef& \forall p_1, p_2.\, w~p_1 \land w~p_2 = w~(\lambda x. p_1~x \land p_2~x)  \\
    \C{a}{x}               &\eqdef& \True \\
    \C{t_1 -> t_2}{f}      &\eqdef& \forall x:t_1.\, \C{t_1}{x} => \C{t_2}{fx} \\
    \C{t_1 \times t_2}{p}  &\eqdef& \C{t_1}{\mfst{p}} \land \C{t_2}{\msnd{p}} \\
  \end{array}
\]

Again, the relation is not defined on all types, but it does include
the image of the type-level \cpst, so it is enough for our purposes. This
relation is trivially preserved by application, which allows us to prove
the following theorem:

\begin{theorem}[Conjunctivity of \cpst]~\\
For any $e$ and $C$, $\Delta \mid \cdot \vdash e :
C \bang \neu$ implies $\un\Delta \vDash \C{\cps{C}}{\cps{e}}$
\end{theorem}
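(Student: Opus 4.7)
The plan is to prove this by induction on the typing derivation of $e$, mirroring the structure of the monotonicity proof sketched just above. As stated, the theorem only concerns closed terms, but for the induction to go through we will need to strengthen it to open terms of both effect judgments, in the style of a standard fundamental lemma for logical relations. Concretely, I would prove: if $\Delta \mid \Gamma \vdash e : C \bang \neu$ then $\un\Delta, \un\Gamma \vDash \C{\cps{C}}{\cps{e}\,\sG}$ provided every free variable $x_i$ of $\Gamma$ is assumed conjunctive at its respective translated type (i.e.\ the hypothesis $\C{\cps{C_i}}{x_i^w}$ is added to the context for each computational hypothesis, and nothing is needed for $A$-typed hypotheses since $\C{\un{A}}{-}$ is trivially $\ttrue$ when $\un{A}$ is predicate-free). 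A parallel clause is needed for terms of type $A \bang \teff$, stating that $\cps{e}\,\sG$ is conjunctive at $\dneg{\un{A}}$.

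The easy cases are the ones where $\cps{-}$ acts homomorphically: variables follow from the context assumption, constants and base-type constructors fall under the $\C{a}{x} = \ttrue$ clause, pairs and injections reduce immediately to the corresponding subterm IHs, and $\lambda$-abstraction matches the function clause of $\C{t_1 \to t_2}{f}$ verbatim. Applications use the fact that $\C{-}{-}$ is preserved by applying a conjunctive function to a conjunctive argument, exactly as monotonicity was preserved by application. The two interesting cases are \textbf{return} and \textbf{bind}. For $\cps{(\returnT{e})} = \lambda p.\,p\,\cps{e}$, the goal unfolds to $\forall p_1, p_2.\ p_1\,\cps{e} \wedge p_2\,\cps{e} = (\lambda x.\,p_1\,x \wedge p_2\,x)\,\cps{e}$, which holds by $\beta$-reduction. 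For $\cps{(\bindT{e_1}{x}{e_2})} = \lambda p.\,\cps{e_1}\,(\lambda x.\,\cps{e_2}\,p)$, applying the IH to $\cps{e_1}$ at type $\dneg{\un{A}}$ reduces the conjunction of two bind-WPs to $\cps{e_1}$ applied to the pointwise conjunction of $\lambda x.\,\cps{e_2}\,p_1$ and $\lambda x.\,\cps{e_2}\,p_2$; the IH on $\cps{e_2}$ then rewrites this pointwise conjunction into $\lambda x.\,\cps{e_2}\,(\lambda y.\,p_1\,y \wedge p_2\,y)$, which is the required right-hand side.

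The case that I expect to be the main obstacle is \textbf{case-analysis} ($\mycases{e}{x.e_1}{y.e_2}$): the result has a sum type in the scrutinee, but the conjunctivity relation is defined only on the $A$-fragment plus the image of $\cps{-}$, so one has to argue conjunctivity of each branch under the assumption that the bound $x$ (resp.\ $y$) is conjunctive at a predicate-free $A$-type (trivially true), and then combine them under the dependent eliminator. Concretely, one distributes $\cps{-}$ through the $\mathrm{\mathbf{case}}$ and uses that $\C{t}{\mathrm{\mathbf{case}}\,v\,\ldots}$ follows from conjunctivity of each branch, which holds by IH; this requires a small auxiliary lemma that the conjunctivity relation respects the case eliminator at the types in the image of $\cps{-}$. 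Similarly, when the term is a computational application of the form $e_1\,e_2$ with $e_2 : C$, one must carefully track that $\cps{e_2}$ is itself conjunctive (by IH) and use the function clause of $\C{-}{-}$ at type $\cps{C} \to \cps{C'}$ to discharge the obligation.

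Finally, the statement of the theorem follows by instantiating the strengthened lemma with an empty term context (so the side conditions about context variables are vacuous) and recalling that $\un\Delta$ survives unchanged. As with monotonicity, validity is witnessed inside \emf itself rather than at the meta level, so the induction must construct each obligation as an \emf proof term (or equivalently, as an SMT-discharged proposition under $\vDash$); because all the algebraic rewriting above is $\beta\iota$-equational and conjunctive manipulation, this should be within reach of \fstar's SMT encoding and requires no classical or extensional reasoning beyond what monotonicity already used.
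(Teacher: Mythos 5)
Your proposal is correct and follows essentially the same route as the paper: a strengthened statement over environments carrying conjunctivity assumptions for each variable (trivial at predicate-free types), induction on the typing derivation, with \textbf{return} discharged by $\beta$-reduction, \textbf{bind} by the two IHs plus pointwise rewriting under the postcondition, application by preservation of $\mathbb{C}$ under application, and case-analysis by the sum-induction principle. The only difference is one of emphasis — the paper treats the case-eliminator as trivial via its \textsc{V-SumInd} rule rather than as a main obstacle — so no substantive gap remains.
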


For the \ls$ST$ monad, this implies that for any $e$ such that
$\smash{e : S \tarr X \times S}$
we know, again within \emf, that
$\cps{e}~s~p_1 \land \cps{e}~s~p_2 = \cps{e}~s~(\lambda x. p_1~x \land p_2~x)$
for any $s, p_1, p_2$. This is the usual notion of conjunctivity for
WPs of this type.


\subsection[The CPS Translation Preserves Equality and Monad Laws]{The \cpst Preserves Equality and Monad Laws}
\label{sec:eqs}

We define an equality judgment on \deflang terms that is basically
$\beta\eta$-equivalence, augmented with the monad laws for the abstract
$\tau$ monad. We show that the \cpst{} preserves this equality.

\begin{theorem}[Preservation of equality by CPS]\label{eq-star}~\\
If $\Delta\mid\cdot \vdash e_1 = e_2 : H \bang\varepsilon$ then
$\un\Delta \vDash \cps{e_1} \eqtwo \cps{e_2}$ .
\end{theorem}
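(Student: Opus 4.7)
The plan is to proceed by induction on the derivation of $\Delta\mid\cdot \vdash e_1 = e_2 : H\bang\varepsilon$, showing in each case that the two CPS translations $\cps{e_1}$ and $\cps{e_2}$ are $\beta\eta$-equivalent in \emf, which is sufficient for \emf's validity judgment $\vDash$. The reflexivity, symmetry and transitivity cases are immediate. The congruence rules (under $\lambda$, application, pair formation, projections, injections, case, $\returnT{}$ and $\bindT{}{}{}$) follow directly from the inductive hypothesis together with the fact that $\cps{\cdot}$ is defined compositionally on syntax, so congruence of $\eqtwo$ in \emf transports the equalities through each translated constructor.

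Before handling the primitive $\beta$, $\eta$ and monad-law cases, I would establish a substitution lemma: for any well-typed \deflang terms $e$ and $e'$, the equation $\cps{e[e'/x]} \eqtwo \cps{e}[\cps{e'}/x]$ holds in \emf (provably, not merely judgmentally). This is proved by induction on $e$; the only non-trivial cases are the $\returnT{}$ and $\bindT{}{}{}$ cases, and in both of them the substitution commutes straightforwardly because $\cps{\cdot}$ only inserts fresh continuation binders $p$ around $\cps{e}$. Typing (\autoref{thm:cps-typing}) is needed to make sure the substituted terms land at compatible \emf types.

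For the $\beta$ and $\eta$ equalities of \deflang, the key observation is that $\cps{\cdot}$ is the identity on $\lambda$, application, pairs, and sums (up to $\un{\cdot}$ on type annotations, which does not affect the computational content), so a source $\beta$-step translates to a source $\beta$-step in \emf and is discharged by \emf's conversion. The $\eta$ cases are similar. The interesting cases are the three $\tau$-monad laws. By direct calculation, after using the definitions $\cps{\returnT{e}} = \lambda p.\,p\,\cps{e}$ and $\cps{\bindT{e_1}{x}{e_2}} = \lambda p.\,\cps{e_1}\,(\lambda x.\,\cps{e_2}\,p)$, each law reduces in \emf by $\beta\eta$ alone, invoking the substitution lemma where needed (e.g.\ the left unit law rewrites to $\lambda p.\,\cps{f}[\cps{e}/x]\,p \eqtwo_{\eta} \cps{f}[\cps{e}/x] \eqtwo \cps{f[e/x]}$). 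This is just the usual fact that the continuation monad satisfies the monad laws up to $\beta\eta$.

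The main obstacle I anticipate is the substitution lemma, and more specifically the bookkeeping around type variables and the shape restrictions on \deflang types. One has to make sure that substituting a $\tau$-effectful subterm never arises (it cannot, by the typing discipline: only $A\bang\neu$ terms appear in substitutions), so that $\cps{\cdot}$ on the substituted term is well-defined and coincides with $\un{\cdot}$ on the $A$-typed arguments. Once this is set up cleanly, the remaining cases are routine reductions in \emf, and the overall proof is the standard ``CPS preserves $\beta\eta$ and monad laws'' argument adapted to our two-effect typing.
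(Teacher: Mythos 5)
Your proposal is correct and follows essentially the same route as the paper's proof: induction on the equality derivation, with the congruence, reflexivity/symmetry/transitivity and $\beta\eta$ cases discharged by the compositionality of $\cps{\cdot}$ and \emf's own conversion, and the three $\tau$-monad laws verified by direct $\beta\eta$ calculation on the continuation-monad definitions of $\cps{(\returnT{e})}$ and $\cps{(\bindT{e_1}{x}{e_2})}$. The only caveats are minor: the paper states and proves the theorem generalized to open contexts (with the substitution $\sG$), which your induction implicitly requires anyway, and in your substitution lemma the substituted terms are $H\bang\neu$ (so possibly $C$-typed), not only $A\bang\neu$ --- though the lemma holds in that generality, so nothing breaks.
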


Since the monad laws are equalities themselves,
any source monad will be translated to a specification-level monad
of WPs. This also applies to lifts: source monad morphisms are mapped to
monad morphisms between Dijkstra monads.

\iflater
For elaborated terms, we don't prove the preservation of equality, as
\emf's internal logic cannot properly deal with equalities of arbitrary
effectful computations. However we conjecture that equal source terms
will give rise to observational equivalent elaborated terms. \gm{Right?
FIXME: revisit}\ch{Right, but I don't think it's worth bringing it up}
\fi

\ifcheckpagebudget\clearpage\fi
\subsection[Implementing the Translations in Fstar]{Implementing
  the Translations in \fstar}
\label{sec:dm4f-impl}

We devised a prototype implementation of the two translations in \fstar{}.
Users define their monadic effects as \fstar{} terms in direct style,
as done in \autoref{sec:examples}, and these definitions get
automatically rewritten into \deflang.
As explained in \autoref{sec:examples}, instead of $\teff$-arrows
(\smash{$H \tarr A$}), we use a distinguished \fstar{} effect \ls$tau$ to
indicate where the CPS should occur.
The effect \ls$tau$ is defined to be an alias for \fstar's \ls$Tot$
effect, which allows the programmer to reason extrinsically about the
definitions and prove that they satisfy various properties within
\fstar, \EG the monad laws.
Once the definitions have been type-checked in \fstar, another
minimalist type-checker kicks in, which has a twofold role.
First, it ensures that the definitions indeed belong to \deflang, \EG
distinguishing $A$ types from $C$ types.
Second, it performs bidirectional inference to distinguish monadic
computations from pure computations, starting from top-level
annotations, and uses this type information to automatically introduce
$\textbf{return}_\teff$ and $\textbf{bind}_\teff$ as needed.
For instance, in the \ls$st$ example from \autoref{sec:ex-cps}, the
type-checker rewrites \ls$x, s0$ into $\returnT {(x, s_0)}$; and
\ls$let x, s1 = f s0 in ...$ into $\bindT{f\ s_0}{x, s_1}{\dots}$; and
\ls$g x s1$ into $\returnT {(g\ x\ s_1)}$.
The elaboration maps let-bindings in \deflang to let-bindings in \fstar; the
general inference mechanism in \fstar takes care of synthesizing the WPs,
meaning that the elaboration, really, is only concerned about extra arguments
for abstractions and applications.

Once the effect definition is rewritten to \deflang, our tool uses the
\cpst and elaboration to generate the WP transformers for the Dijkstra
monad, which previously would be written by hand.
Moreover,
several other WP combinators are derived from the WP type and
used internally by the \fstar{} type-checker; again previously these
had to be written by hand.


\ifcheckpagebudget\clearpage\fi
\section[EMFstar with Primitive State]{\emf with Primitive State}
\label{sec:stateful}

As we have seen in \autoref{sec:emf}, \emf encodes all its effects using pure
functions. However, one would like to be able to run \fstar programs
efficiently using primitively implemented effects.
In this section, we show how \emf's pure monads apply to \fstar's
existing compilation strategy, which provides primitive support for
state via compilation to OCaml, which, of course, has state
natively.\footnote{\fstar also compiles exceptions natively to OCaml,
however we focus only on state here, leaving a formalization of
primitive exceptions to the future---we expect it to be similar to the
development here.} The main theorem of \autoref{sec:relatingemfST} states that
well-typed \emf programs using the state monad abstractly (i.e., not
breaking the abstraction of the state monad with arbitrary uses
of \ls$reify$ and \ls$reflect$) are related by a simulation to \emfST
programs that execute with a primitive notion of state. This result
exposes a basic tension: although very useful for proofs, \ls$reify$
and \ls$reflect$ can break the abstractions needed for efficient
compilation. However, as noted in \autoref{sec:state-with-references}, 
this restriction on the use of \ls$reify$ and \ls$reflect$ only 
applies to the \emph{executable} part of a program---fragments of 
a program that are computationally irrelevant are 
erased by the \fstar compiler and are free to use these operators.


\subsection{\emfST: A Sub-Language of \emf with Primitive State}
\label{sec:emfST}

The syntax of \emfST corresponds to \emf, except, we configure it to
just use the \ls$ST$ monad. Other effects that may be added to \emf
can be expanded into their encodings in its
primitive \ls$Pure$ monad---as such, we think of \emfST as modeling a
compiler target for \emf programs, with \ls$ST$ implemented
primitively, and other arbitrary effects implemented purely.
We thus exclude \ls$reify$ and \ls$reflect$ from \emfST, 
also dropping type and WP arguments of return, bind and lift operators,
since these are no longer relevant here.

The operational semantics of \emfST is a small-step, call-by-value
reduction relation between pairs $(s, e)$ of a state $s$ and a term
$e$. The relation includes the pure reduction steps of \emf that simply
carry the state along (we only show ST-beta), and three primitive
reduction rules for \ls$ST$, shown below. The only irreducible \ls$ST$
computation is \ls$ST.return v$. Since the state is primitive in \emfST, the 
term $\kw{ST.bind}~e~x.e'$ reduces
without needing an enclosing \ls$reify$.
%
\[\begin{array}{ll}
(s, (\lam{x@t}{e}) v) \leadsto (s, e[v/x]) & \mbox{ST-beta} \\
(s, \kw{ST.bind}~(\kw{ST.return}~v)~x.e) \leadsto (s, e[v/x])  & \mbox{ST-bind} \\
(s, \kw{ST.get}~()) \leadsto (s, \kw{ST.return}~s)   & \mbox{ST-get} \\
(s, \kw{ST.put}~s') \leadsto (s', \kw{ST.return}~()) & \mbox{ST-put}
\end{array}\]

\subsection{Relating \emf to \emfST}
\label{sec:relatingemfST}

We relate \emf to \emfST by defining 
a (partial) translation
from the former to the latter, and show that one or more steps of reduction
in \emfST are matched by one or more steps in \emf.
This result guarantees that it is
sound to verify a program in \emf and execute it in \emfST: the
verification holds for all \emf reduction sequences, and \emfST
evaluation corresponds to one such reduction.

The main intuition behind our proof is that the reduction
of \ls$reflect$-free \emf programs maintains terms in a very specific
structure---a stateful redex (an \ls$ST$ computation wrapped in \ls$reify$) reduces in a context structured like a
telescope of binds, with the state threaded sequentially as the
telescope evolves.
We describe this invariant structure as an \emf context, $K$,
parameterized by a state $s$. In the definition, $\hat{E}$ is a
single-hole, \ls$reify$-and-\ls$reflect$-free \emf context, a
refinement of the evaluation contexts of~\autoref{sec:emf}, to be
filled by a \ls$reify$-and-\ls$reflect$ free \emf term, $f$.
Additionally, we separate the $\hat{E}$ contexts by their effect into several
sorts: $\hat{E}:\kw{Tot}$ and $\hat{E}:\kw{Pure}$ are contexts which when
filled by a suitably typed term produce in \emf
a \ls$Tot$ or \ls$Pure$ term, respectively; the case $\hat{E} : \kw{Inert}$ is for an
un-reified stateful \emf term.
The last two cases are the most interesting: they represent the base and
inductive case of the telescope of a stateful term ``caught in the
act'' of reducing---we refer to them as the \ls$Active$ contexts.
We omit the sort of a context when it is irrelevant.
\[
\hspace{-0.45pc}
\begin{array}{lcl}
   K~s  & \!\!\!\!\!\!::=\!\!\!\!\!\!\! & \hat{E} : \kw{Tot}
          \mid  \hat{E} : \kw{Pure}
          \mid  \hat{E} : \kw{Inert}
          \mid  \kw{reify}~\hat{E}~s : \kw{Active} \\
         &\mid& \kw{Pure.bind}~(K~s)~p.((\lam{x}{\kw{reify}~f})~(\mFst~p)~(\mSnd~p)) : \kw{Active}
         \\
         && \hspace{12.5pc} (\text{if } K~s : \kw{Active})
\end{array}\]

\newcommand\sttrans[1]{\ensuremath{\{\!\![#1]\!\!\}}}

\noindent Next, we define a simple translation $\sttrans{\cdot}$ from
contexts $K~s$ to \emfST.
\[\nquad\begin{array}{l}
\sttrans{\hat{E}} = \hat{E} \\
\sttrans{\kw{reify}~\hat{E}~s} = \hat{E} \\
\sttrans{\kw{Pure.bind}~(K~s)~p.((\lam{x}{\kw{reify}~f})~(\mFst~p)~(\mSnd~p))} 
\\
\hspace{13pc} = \kw{ST.bind}~\sttrans{K~s}~x.f
\end{array}\]

The definition of $\sttrans{\cdot}$ further illustrates 
why we  need to structure  the \ls$Active$ contexts as a telescope---because not every stateful  
computation that can reduce in \emf is of the form \ls$reify$~$e$. For 
example, the reduction rule {\sc{R-ReifyBind}} pushes \ls$reify$ 
inside the arguments of \ls$bind$. As a result, one needs to 
perform several ``administrative" steps of reduction to get the 
resulting term back to being of the form \ls$reify$~$e$. However, 
in order to show that \emfST can indeed be used as a compiler 
target for \emf, we crucially need to relate all such intermediate 
redexes to \ls$ST$ computations in \emfST---thus the telescope-like definition of the \ls$Active$ contexts.

Finally, we prove the simulation theorem for \emf and \emfST, 
which shows that one or more steps of reduction
in \emfST are matched by one or more steps in \emf, 
in a compatible way. 

\begin{theorem}[Simulation]
\label{thm:simulation}
For all well-typed, closed, filled contexts $K~s~f$, either $K~s$ is \ls$Inert$, or one of the
following is true:

\begin{enumerate}
\item $\exists K' s' f'.$
      $(s, \sttrans{K~s}~f) \leadsto^{+} (s', \sttrans{K'~s'}~f')$\\
      and $K~s~f \steps^{+} K'~s'~f'$
      and $sort\, (K~s) = sort\, (K'~s')$\\
      and if $K'~s'$ is not \ls$Active$ then $s=s'$.
\item $K~s$ is \ls$Active$
      and $\exists v~s'.\, (s,\sttrans{K~s}~f) \leadsto^{*} (s',\kw{ST.return}~v)$\\
      and $K~s~f \steps^{+} \kw{Pure.return}~(v, s')$.
\item $K~s$ is \ls$Pure$
      and $\exists v.\, \sttrans{K~s}~ f = K~s~f = \kw{Pure.return}~v$.
\item $K~s$ is \ls$Tot$
      and $\exists v.\, \sttrans{K~s}~f = K~s~f = v$.
\end{enumerate}
%
\end{theorem}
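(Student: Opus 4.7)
The argument proceeds by case analysis on the sort of $K\,s$, and, in the Active case, by induction on the depth of the telescope together with case analysis on the head $\kw{ST}$-construct appearing at the hole of $\hat{E}$ after filling with $f$. We rely on subject reduction (so that sorts are preserved), on the fact that $f$ contains no stray $\reify$ or $\freflect$, and on the reduction rules R-ReifyRet, R-ReifyBind, R-ReifyAct, and R-ReifyLift from Figure~\ref{fig:emf-dynamic-semantics}.

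For $K\,s : \kw{Tot}$ or $\kw{Pure}$, we observe that $\sttrans{K\,s} = \hat{E}$ and that every reduction of $K\,s\,f$ is a pure \emf step (R-App, R-PureBind, R-Run, or a reduction inside $\hat{E}$) that does not consult the state. If $K\,s\,f$ is already a value or of the form $\kw{Pure.return}\,v$, we are in conclusion (4) or (3). Otherwise, the same reduction applies in \emfST carrying $s$ unchanged, yielding conclusion (1) with $s=s'$ and the sort preserved. The Inert case is permitted by the theorem and requires nothing.

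The heart of the proof is the Active case. The context $K\,s$ is either the base form $\reify\,\hat{E}\,s$ or, inductively, $\kw{Pure.bind}\,(K'\,s)\,p.((\lam{x}{\reify\,f'})\,(\mFst\,p)\,(\mSnd\,p))$. We case on the head of $\hat{E}[f]$: (i) if it is $\kw{ST.get}\,()$ or $\kw{ST.put}\,s'$, then a single \emfST primitive step produces $\kw{ST.return}\,s$ (resp.\ $\kw{ST.return}\,()$) with the appropriate state, matched in \emf by R-ReifyAct unfolding to $S.ST.\un{\kw{get}}$ (resp.\ $S.ST.\un{\kw{put}}$) followed by $\beta$-reduction to the pure pair $(s,s)$ (resp.\ $((),s')$); the $\kw{Pure.bind}$ layers of the telescope then pipe the result and the updated state into the surrounding $\reify$, landing in an Active context $K'\,s'$ and giving conclusion (1). (ii) If it is $\kw{ST.bind}\,e_1\,x.e_2$, then R-ReifyBind deepens the telescope by one layer while \emfST takes zero primitive steps on this redex; since $\sttrans{\cdot}$ maps the new layer to the same $\kw{ST.bind}$ in \emfST, the matching step is supplied by subsequent reductions inside $e_1$, handled by the induction hypothesis. (iii) If it is an $\kw{ST}$-lift, the reasoning is the analogue of (i) via R-ReifyLift and the signature's $\un{\ii{lift}}$. (iv) If it is $\kw{ST.return}\,v$, then either the telescope has a surrounding bind, in which case \emfST uses ST-bind to substitute while \emf uses R-ReifyRet followed by R-PureBind and projection on the pair $(v,s)$ to reach the inner $\reify\,e_2[v/x]$, yielding (1); or the telescope is empty, $\hat{E}=\bullet$, and both sides have produced a final value with state $s$, yielding conclusion (2) with $\kw{Pure.return}\,(v,s)$. (v) Pure $\beta$-redexes inside $\hat{E}$ reduce identically on both sides, as in the Tot/Pure case.

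\textbf{Main obstacle.} The difficulty is bookkeeping for the Active case: one must show that after each reduction the term still fits the grammar of $K\,s$, that $\sttrans{\cdot}$ is compatible with the choice of matching step count (often one \emfST step absorbs several administrative \emf steps), and that the state argument recorded syntactically inside $\reify\,\hat{E}\,s$ is exactly the state threaded through the telescope. The reify-push rule R-ReifyBind is the subtlest: the resulting term must have precisely the shape $\kw{Pure.bind}\,\cdots\,p.((\lam{x}{\reify\,f'})\,(\mFst\,p)\,(\mSnd\,p))$ specified by the $K$ grammar, which is exactly what the translation of the \ls$ST$ bind combinator produces when unfolded; checking this alignment of the concrete definition of $\un{\mathsf{bind}}$ with the schematic $K$ grammar is where most of the syntactic care is required. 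Everything else reduces to straightforward case splits using subject reduction to maintain well-typedness throughout.
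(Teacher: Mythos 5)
Your plan matches the paper's proof strategy: a case analysis on the sort of $K~s$, with the \textsf{Active} case handled by induction on the telescope of $\kw{Pure.bind}$s and a case split on the head $\kw{ST}$-redex, absorbing the administrative reductions introduced by \textsc{R-ReifyBind} into the telescope exactly as the definitions of $K~s$ and $\sttrans{\cdot}$ are designed to support. You also correctly single out the real work---checking that each reduct still inhabits the $K$ grammar and that the syntactically recorded state agrees with the threaded one---so this is essentially the paper's argument.
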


\ifsooner
\ch{Still not sure why this is the right theorem to prove}
\nik{Is it clearer now.}
\ch{This statement is still complex. Any chance we can obtain a standard
  backwards simulation corollary? (a simple diagram a la compcert).
  In any case, it seems worth explaining intuitively why this is the right theorem.}
\fi

\ifcheckpagebudget\clearpage\fi
\section{Related Work}
\label{sec:related}






We have already discussed many elements of related work throughout the
paper. Here we focus on a few themes not covered fully elsewhere.

Our work builds on the many uses of monads for programming language
semantics found in the literature. \citet{Moggi89} was the first to
use monads to give semantics to call-by-value
reduction---our \autoref{thm:simulation} makes use of the monadic
structure of \emf to show that it can safely be executed in a strict
semantics with primitive
state. \citet{Moggi89}, \citet{Wadler1990Comprehending,
Wadler92}, \citet{Filinski94,Filinski99,Filinski10}, \citet{BentonHM00}
and others, use monads to introduce effects into a functional
language---our approach of adding user-defined effects to the
pure \emf calculus follows this well-trodden path.
\citet{Moggi89}, \citet{flanagan93anf}, \citet{Wadler1994Composable}
and others, have used monads to provide a foundation on which to
understand program transformations, notably CPS---we show that weakest
precondition semantics can be formally related to CPS via our main
logical relation theorem (\autoref{thm:elab-typing}).

\paragraph*{Representing monads} Our work also draws a lot
from~\citepos{Filinski94} monadic reflection methodology, for
representing and controlling the abstraction of monads.
In particular, our \deflang monad definition language is essentially
the language of \citep{Filinski94} with some restrictions on the
allowed types.
Beyond controlling abstraction, Filinski shows how monadic reflection enables
a universal implementation of monads using composable continuations
and a single mutable cell. We do not (yet) make use of that aspect of
his work, partly because deploying this technique in practice is
challenging, since it requires compiling programs to a runtime system
that provides composable continuations.~\citepos{Filinski99} work on
representing layered monads generalizes his technique to the setting
of multiple monads. We also support multiple monads, but instead of
layering monads, we define each monad purely, and relate them via
morphisms.
This style is better suited to our purpose, since one
of our primary uses of reification is purification, i.e., revealing
the pure representation of an effectful term for reasoning
purposes. With layering, multiple steps of reification may be
necessary, which may be inconvenient for purification.
Finally,~\citet{Filinski10}
gives an operational semantics that is extensible with monadic
actions, taking the view of effects as being primitive, rather than
encoded purely. We take a related, but slightly different view:
although effects are encoded purely in \emf, we see it as language in
which to analyze and describe the semantics of a primitively effectful
object language, \emfST, relating the two via a simulation.

\paragraph*{Dependent types and effects}
Nanevski \ETAL developed Hoare type theory (HTT)~\cite{nmb08htt}
and Ynot~\cite{ynot} as a way of extending Coq with effects. The
strategy there is to provide an axiomatic extension of Coq with a
single catch-all monad in which to encapsulate imperative code. Being
axiomatic, their approach lacks the ability to reason extrinsically
about effectful terms by computation. However, their approach
accommodates effects like non-termination, which \emf currently
lacks. Interestingly, the internal semantics of HTT is given using
predicate transformers, similar in spirit to \emf's WP semantics. It
would be interesting to explore whether or not our free proofs of
monotonicity and conjunctivity simplify the proof burden on HTT's
semantics.

Zombie~\citep{zombie-popl14} is a dependently typed language with
general recursion, which supports reasoning extrinsically about
potentially divergent code---this approach may be fruitful to apply
to \emf to extend its extrinsic reasoning to divergent code.

Another point in the spectrum between extrinsic and intrinsic
reasoning is~\citepos{Chargueraud2011CF} characteristic formulae,
which provide a precise formula in higher-order logic capturing the
semantics of a term, similar in spirit to our WPs. 
However, as opposed to WPs, characteristic formulae are used
interactively to prove program properties after definition, although
not via computation, but via logical reasoning.
Interestingly enough, characteristic formulae are structured in a way
that almost gives the illusion that they are the terms themselves.
CFML is tool in Coq based on these ideas, providing special tactics to
manipulate formulas structured this way.

\citet{idris,idris2} encodes algebraic effects with pre- and postconditions
in Idris in the style of~\citepos{atkey09parameterised} parameterized
monads. Rather than speaking about the computations themselves, the pre-
and postconditions refer to some implicit state of the world, e.g.,
whether or not a file is closed. In contrast, \fstar's WPs give a full
logical characterization of a computation. Additionally, the WP style
is better suited to computing verification conditions, instead of
explicitly chaining indices in the parameterized monad.

It would be interesting, and possibly clarifying, to link up with
recent work on the denotational semantics of effectful languages with
dependent types~\cite{ahman16fossacs}; in our case one would
investigate the semantics of \emf and \emfST, which has state, but
extended with recursion (and so with nontermination).
 
\paragraph*{Continuations and predicate transformers}
We are not the first to study the connection between continuations and
predicate transformers.
For example, \citet{Jensen78} and \citet{AudebaudZ99} both derive WPs
from a continuation semantics of first-order imperative programs.
While they only consider several primitive effects, we allow arbitrary
monadic definitions of effects.
Also while their work is limited to the first-order case, we formalize
the connection between WPs and CPS also for higher-order.
The connection between WPs and the continuation monad also 
appears in~\citet{Keimel15, KeimelP}.

\section{Looking Back, Looking Ahead}
\label{sec:conclusion}

While our work has yielded the pleasant combination of both a
significant simplification and boost in expressiveness for \fstar, we
believe it can also provide a useful foundation on which to add
user-defined effects to other dependently typed languages. All that is
required is the \ls$Pure$ monad upon which everything else can be
built, mostly for free.

On the practical side, going forward, we hope to make use of the new
extrinsic proving capabilities in \fstar to simplify specifications
and proofs in several ongoing program verification efforts that
use \fstar. We are particularly interested in furthering the
relational verification style, sketched in~\autoref{sec:ex-ifc}. We
also hope to scale \emf to be a definitive semantics of all
of \fstar---the main missing ingredients are recursion and its
semantic termination check, inductive types, universe polymorphism,
and the extensional treatment of equality. Beyond the features
currently supported by \fstar{}, we would like to investigate adding
indexed effects and effect polymorphism.

We would also like to generalize the current work to divergent
computations. For this we do not plan any changes to \deflang. However, we
plan to extend \emf with general recursion and a primitive \ls$Div$
effect (for divergence), following the current \fstar implementation
\cite{mumon}. Each monad in \deflang will be elaborated in two ways: first,
to \ls$Pure$ for total correctness, as in the current paper; and
second, to \ls$Div$, for partial correctness. The reify operator for a
partial correctness effect will produce a \ls$Div$ computation, not a
\ls$Pure$ one. With the addition of \ls$Div$, the dynamic semantics of
\emf will force a strict evaluation order for \ls$Div$ computations,
rather than the non-deterministic strong reduction that we allow for
\ls$Pure$ computations.

Along another axis, we have already mentioned our plans to
investigate translations of effect handlers
(\autoref{sec:ex-combining}).
We also hope to enhance \deflang in other ways, \EG relaxing the
stratification of types and adding inductive types.
The latter would allow us to define monads for some forms of
nondeterminism and probabilities, as well as many forms of I/O,
provided we can overcome the known difficulties with CPS'ing inductive
types~\cite{BartheU02}.
Enriching \deflang further,
one could also add dependent types, reducing the gap between it and \fstar,
and bringing within reach examples like~\citepos{ahman13update}
dependently typed update monads.


\ifanon\else
\section*{Acknowledgments}
We are grateful to Cl\'ement Pit-Claudel for all his help with
the \fstar interactive mode; to Pierre-Evariste Dagand and Michael
Hicks for interesting discussions; and to the anonymous reviewers for
their helpful feedback.
  This work was, in part, supported by the
  \href{https://erc.europa.eu/}{European Research Council}
  under \href{https://secure-compilation.github.io/}{ERC
    Starting Grant SECOMP (715753).}

\fi

\iffull
\clearpage
\appendix
\newcommand{\cpsa}[1]{{\cps{#1}}^1}
\newcommand{\cpsb}[1]{{\cps{#1}}^2}
\newcommand{\rulename}[1]{\textsc{(#1)}}

\newcommand{\Ty}{\typez}

\section{Appendix}
\label{appendix}

In this appendix we provide proofs and auxiliary results for the
theorems that appear in the body of the paper. We also show the full
type system for the source language.


\subsection{The Definitional Language \deflang}
\label{def}

In the typing judgment, the metavariable $\Delta$ represents a set
of type variables that remains fixed throughout typing. It is used to
introduce top-level let-polymorphism on all CPS'd/elaborated terms. A
type is well-formed in the context $\Delta$ if all of its
variables are in $\Delta$.
In rigor, all judgments from here onwards are subject to that
constraint, which we do not write down. A context $\Gamma$ is
well-formed if both (1) all of its types are well-formed according
to $\Delta$ (2) no variable names are repeated. This last condition
simplifies reasoning about substitution and does not limit the
language in any way.

We assume that every base type in \deflang is also a base type in \emf
(or that there exists a mapping from them, formally), and that source
constants are also present and with the same type (formally, also a
mapping for constants that respects the previous one).

\begin{figure*}[!t]
\[\begin{array}{ccc}
    \inferrule*[lab=ST-Var]
    {x:H \in \Gamma}
    {\Delta\mid \Gamma \vdash x: H \bang \neu}
    &

    \inferrule*[lab=ST-Const]
    {\Delta\mid \Gamma \vdash e_i:b_i \bang \neu \qquad \kappa:b_1,\ldots,b_n \rightarrow b}
    {\Delta\mid \Gamma \vdash \kappa(e_1, \ldots, e_n):b\bang \neu}

    &

    \inferrule*[lab=ST-Abs]
    {\Delta\mid \Gamma, x:H \vdash e:H' \bang \varepsilon}
    {\Delta\mid \Gamma \vdash \lambda x:H.\, e: H \earr H' \bang\neu}
    \\\\

    \inferrule*[lab=ST-App]
    {\Delta\mid \Gamma \vdash e: H \earr H' \bang \neu
     \qquad  \Delta\mid \Gamma \vdash e': H \bang \neu}
    {\Delta\mid \Gamma \vdash ee': H'  \bang \varepsilon}

    &

    \inferrule*[lab=ST-Pair]
    {\Delta\mid \Gamma \vdash e: H \bang \neu
     \qquad \Delta\mid \Gamma \vdash e': H' \bang \neu}
    {\Delta\mid \Gamma \vdash (e,e'): H \times H' \bang \neu}

    &

    \inferrule*[lab=ST-Fst]
    {\Delta\mid \Gamma \vdash e: H \times H' \bang \neu}
    {\Delta\mid \Gamma \vdash \fst{e}: H \bang \neu}
    \\\\

    \inferrule*[lab=ST-Inl]
    {\Delta\mid \Gamma \vdash e: A \bang \neu}
    {\Delta\mid \Gamma \vdash \inl{e}: A + A' \bang \neu}
    &
    \multicolumn{2}{c}{
        \inferrule*[lab=ST-Case]
        {\Delta\mid \Gamma \vdash e : A + A' \bang \neu
         \qquad \Delta\mid \Gamma, x:A  \vdash e_1 : H \bang \varepsilon
         \qquad \Delta\mid \Gamma, x:A' \vdash e_2 : H \bang \varepsilon}
        {\Delta\mid \Gamma \vdash \mycases{e} {x:A.\, e_1}  {y:A'.\, e_2} : H \bang \varepsilon}
    }
    \\\\

    \inferrule*[lab=ST-Ret]
    {\Delta\mid \Gamma \vdash e : A \bang \neu}
    {\Delta\mid \Gamma \vdash  \returnT{e} : A \bang \teff}
    &
    \multicolumn{2}{c}{
        \inferrule*[lab=ST-Bind]
        {\Delta\mid \Gamma \vdash e : A \bang \teff
         \qquad \Delta\mid \Gamma, x:A \vdash e :A' \bang \teff}
        {\Delta\mid \Gamma \vdash \bindT{e}{x:A}{e'} : A' \bang \teff}
    }

\end{array}\]
\caption{Typing rules of \deflang}
\label{fig:def-typing}
\end{figure*}

The typing judgment for \deflang is given in \autoref{fig:def-typing}.
We assume that the types appearing in the rules are
well-formed. For example, in the \rulename{ST-Pair} rule, either both $H$ and
$H'$ are in $A$ or both are in $C$ \ETC

\subsection{CPS Translation (WP Generation)}
\label{app:cps}

\begin{figure*}[!t]
\[
\begin{array}{lcl@{\hspace{7em}}lcl}
    \cps{x} &=& x &
    \cps{K(e_1,\ldots,e_n)} &=& K~\cps{e_1}~\ldots~\cps{e_n} \\
    \cps{(f~e)} &=& \cps{f}~\cps{e} &
    \cps{(\lambda x:H.~e)} &=& \lambda x:\cps{H}.~\cps{e} \\
    \cps{\fst{e}}       &=& \mfst{\cps{e}}             &
    \cps{\snd{e}}       &=& \msnd{\cps{e}}             \\
    \cps{\inl{e}}       &=& \minl{\cps{e}}             &
    \cps{\inr{e}}       &=& \minr{\cps{e}}             \\
    \cps{(e_1,e_2)}     &=& (\cps{e_1},\cps{e_2})     &

                       \cps{(\mycases{e_0}{x:A.~e_1}{y:A'.~e_2})} &=&
            \caseimp{\cps{e_0}}{x.\cps{e_1}}{y.\cps{e_2}} \\\\

    \cps{(\returnT{e})} &=& \multicolumn{2}{l}{\lambda p:\cps{A} -> \typez.~p~\cps{e}}
                       &\multicolumn{2}{l}{\hspace{-3cm}\mbox{(when $\DmG \vdash e : A!\neu$)}} \\
    \cps{(\bindT{e_1}{x}{e_2})} &=& \multicolumn{2}{l}{\lambda p:\cps{A'}->\typez.~\cps{e_1}~(\lambda x:A.~\cps{e_2}~p)}
                       &\multicolumn{2}{l}{\hspace{-3cm}\mbox{(when $\DmG, x:A \vdash e_2 : A'!\teff$)}}\\

\end{array}
\]
\caption{Definition of the \cpst{} for \deflang terms}
\label{fig:cpsdef}
\end{figure*}

The full \cpst for \deflang expressions is given in
\autoref{fig:cpsdef}. The one for types was previously defined. We
define translation on environments in the following way:

\[
    \frac{\Delta = X_1,  \dots , X_m}{\cps{\Delta} = X_1:\typez, \ldots, X_m:\typez}
\qquad
    \frac{\Gamma = x_1:H_1,\ldots, x_n: H_1}{\cps{\Gamma} = x_1:\cps{H_1},\ldots, x_n: \cps{H_1}}
\]

One can then prove the following:
\begin{lemma}[Well-typing of \cpst{}]
For any $\Gamma$, $e$, $A$ and $H$:
\[\begin{array}{ll}
    \Delta \mid \Gamma \vdash e: H \bang \neu &==> \cps{\Delta}, \cps{\Gamma} \vdash \cps{e}: \cps{H} \\
    \Delta \mid \Gamma \vdash e: A \bang \teff &==> \cps{\Delta}, \cps{\Gamma} \vdash \cps{e}: (\cps{A} \rightarrow \Ty) \rightarrow \Ty \\
\end{array}
\]
\label{lemma:cps-well-typed}
\end{lemma}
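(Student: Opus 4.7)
The plan is to prove both statements simultaneously by mutual induction on the structure of the typing derivation $\Delta \mid \Gamma \vdash e : \_ \bang \_$. Before starting, I would verify that the translation $\cps{H}$ on types is well-defined on every type that can appear in a well-typed source judgment, which follows from the grammar restrictions on $\deflang$ (e.g., $\tau$-arrows only appear at the root of $C$-types, and codomains of $\tau$-arrows are in $A$). I would also record the easy bookkeeping fact that translation commutes with context extension: $\cps{\Gamma, x:H} = \cps{\Gamma}, x:\cps{H}$.

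For the variable, constant, pair projection/injection, and pair cases, the goal reduces to checking that $\cps{}$ distributes over the constructors in the expected way, which is immediate by inspection of \autoref{fig:cpsdef}. For $\rulename{ST-Abs}$, the induction hypothesis gives either $\cps{\Delta},\cps{\Gamma},x:\cps{H} \vdash \cps{e} : \cps{H'}$ (case $\varepsilon = \neu$) or $\cps{\Delta},\cps{\Gamma},x:\cps{H} \vdash \cps{e} : (\cps{A} -> \typez) -> \typez$ (case $\varepsilon = \teff$, where $H' = A$); in both cases, abstracting over $x$ yields exactly $\cps{H \earr H'}$ by the definition of $\cps{(H \narr C)}$ and $\cps{(H \tarr A)}$, respectively. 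Application is dual: $\cps{e}$ at type $\cps{H \earr H'}$ applied to $\cps{e'}$ at $\cps{H}$ produces a term whose type is precisely what the statement demands for the conclusion effect $\varepsilon$.

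The interesting cases are $\rulename{ST-Ret}$ and $\rulename{ST-Bind}$. For $\returnT{e}$ with $e : A \bang \neu$, the IH gives $\cps{e} : \cps{A}$, so $\lambda p{:}\cps{A} -> \typez.\, p\,\cps{e}$ visibly inhabits $(\cps{A} -> \typez) -> \typez$. For $\bindT{e_1}{x}{e_2}$ with $e_1 : A \bang \teff$ and $e_2 : A' \bang \teff$, the two IHs give $\cps{e_1} : (\cps{A} -> \typez) -> \typez$ and $\cps{\Delta},\cps{\Gamma},x:\cps{A} \vdash \cps{e_2} : (\cps{A'} -> \typez) -> \typez$. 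Given $p : \cps{A'} -> \typez$, one checks $\cps{e_2}\,p : \typez$, hence $\lambda x:\cps{A}.\,\cps{e_2}\,p : \cps{A} -> \typez$, and then $\cps{e_1}$ applied to that lies in $\typez$, giving the desired type for the outer $\lambda p$ abstraction. The case rule is similarly direct, using that the branches share a common return type $H$ that does not depend on the scrutinee, so a non-dependent $\caseimp$ suffices.

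The main obstacle I anticipate is not any single case but the bookkeeping needed to keep track of which sort ($A$, $H$, or $C$) every subterm's type belongs to, so that the right clause of the type translation applies and the right IH is invoked; a careful statement of the grammar-level invariants (e.g., ``if $\DmG \vdash e : H \bang \teff$ then $H$ is an $A$-type'') up front makes each inductive step mechanical. A minor but worth-flagging subtlety is a typo in the figure where the inner binder in the bind case reads $\lambda x:A$ rather than $\lambda x:\cps{A}$; the typing check only works with $\cps{A}$, which is the intended reading since $A$-types are fixed by $\cps{}$ up to renaming of arrows.
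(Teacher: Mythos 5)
Your proposal matches the paper's proof, which is stated simply as ``by induction on the typing derivation''; your case analysis fills in exactly the routine details the paper leaves implicit, and your observation about the binder annotation $\lambda x{:}A$ versus $\lambda x{:}\cps{A}$ in the bind clause is a fair reading (harmless since $\cps{A}$ only renames arrows). Nothing further is needed.
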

\begin{proof}
By induction on the typing derivation.
\end{proof}

In this lemma statement, and in those that follow, when writing
$\cps{e}$ we refer to the translation of $e$ using the typing derivation
from the premise.

\subsection{Elaboration}

The definitions of $\un{A}$ and the F relation were previously given.
For elaboration, we also translate environments, in the following manner:
\[
    \infer{\un{\Delta} =  X_1:\typez, \ldots, X_m:\typez}{\Delta = X_1,  \dots , X_m}
\]

\[\small
    \infer{\un{x:A} = x:\un{A}}{}
\qquad
    \infer{\un{x:C} = x^w:\cps{C}, x:\F{C}{x^w}}{}
\qquad
    \infer{\un{\Gamma} = \un{x_1:H_1},\ldots, \un{x_n:H_n}}{\Gamma = x_1:H_1,\ldots, x_n: H_n}
\]
Note that for any computational variable in the context, we introduce
two variables: one for its WP and one for its actual expression. The
$x^w$ variable, which is assumed to be fresh, is used only at the WP
level.
Also note that $\un{\Delta} = \cps{\Delta}$.

For any $\Gamma$, we define the substitution $\sG$ as
$[x_{i_1}^w/x_{i_1}, \ldots, x_{i_k}^w/x_{i_k}]$, for the computational
variables $x_{i_1},\ldots,x_{i_k} \in \Gamma$.

Similarly to \autoref{lemma:cps-well-typed} we show that:
\begin{lemma}[Well-typing of \cpst{} --- elaboration contexts]
\label{lemma:cps-well-typed-elab}
For any $\Gamma$, $e$, $A$ and $C$ we have:
\[\begin{array}{lcl}
    \Delta \mid \Gamma \vdash e: H \bang \neu  &==>& \un{\Delta}, \un{\Gamma} \vdash \cps{e}\sG: \cps{H} \\
    \Delta \mid \Gamma \vdash e: A \bang \teff &==>& \un{\Delta}, \un{\Gamma} \vdash \cps{e}\sG: (\cps{A} \rightarrow \Ty) \rightarrow \Ty \\
\end{array}\]
\end{lemma}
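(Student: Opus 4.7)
The plan is to derive this statement directly from the previously proven Lemma (Well-typing of CPS translation) by a simple renaming-plus-weakening argument, avoiding a fresh induction on the typing derivation of $e$.

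First I would compare the two contexts side by side. For every non-computational binding $x@A$, both $\cps{\Gamma}$ and $\un{\Gamma}$ give $x : \cps{A} = \un{A}$, so they agree. For every computational binding $x@C$, the old context $\cps{\Gamma}$ records $x : \cps{C}$, while $\un{\Gamma}$ records the pair $x^w : \cps{C}, x : \F{C}{x^w}$. Moreover, $\sG$ is precisely the renaming $x \mapsto x^w$ on every computational variable. So $\un{\Gamma}$ is obtained from $\cps{\Gamma}$ in two mechanical steps: rename each computational $x$ to a fresh $x^w$ (keeping its type), and then weaken with the extra binding $x : \F{C}{x^w}$.

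Then I would chain the following three standard moves. (1) Apply the previous lemma to $\Delta\mid\Gamma \vdash e : H\bang\neu$ (resp.\ $A\bang\teff$) to get $\cps{\Delta},\cps{\Gamma} \vdash \cps{e} : \cps{H}$ (resp.\ $(\cps{A}\rightarrow \Ty)\rightarrow\Ty$). (2) Apply a $\emf$-level $\alpha$-renaming lemma: if $\Sigma, x : t, \Sigma' \vdash u : t'$ and $x^w$ is fresh, then $\Sigma, x^w : t, \Sigma'[x^w/x] \vdash u[x^w/x] : t'[x^w/x]$. Iterating this renaming over every computational variable in $\Gamma$ turns $\cps{\Gamma}$ into a context $\Gamma_w$ consisting of $x : \un{A}$ for the $A$-bindings and $x^w : \cps{C}$ for the $C$-bindings, and turns $\cps{e}$ into $\cps{e}\sG$. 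Since $\cps{H}$ contains no free term variables from $\Gamma$, it is unchanged. (3) Apply $\emf$ weakening to insert, for every computational binding, the extra assumption $x : \F{C}{x^w}$. This is legal precisely because $\F{C}{x^w}$ is well-formed in the intermediate context: $x^w : \cps{C}$ is already in scope, and the definition of $\F{C}{-}$ in \S\ref{sec:elab} produces a well-formed type whenever its specification argument has type $\cps{C}$. The resulting context is exactly $\un{\Delta}, \un{\Gamma}$, which gives the desired conclusion.

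The main obstacle I foresee is bookkeeping rather than conceptual: making sure that the order of the renamings and the subsequent weakenings is consistent with the dependencies inside $\un{\Gamma}$ (each $x : \F{C}{x^w}$ must be inserted after its associated $x^w : \cps{C}$), and checking that freshness of the $x^w$ variables can be maintained uniformly---this is why the side condition that no variable name is repeated in $\Gamma$, noted at the start of Appendix~\ref{def}, is important. Standard weakening and renaming lemmas for $\emf$ (which are themselves straightforward inductions on typing derivations) supply everything needed, so no new substantive argument about $\cps{\cdot}$ is required.
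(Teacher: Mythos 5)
Your proposal is correct, but it takes a genuinely different route from the paper. The paper introduces this lemma with ``Similarly to Lemma~\ref{lemma:cps-well-typed} we show that\ldots'', i.e., it (implicitly) re-runs the induction on the source typing derivation, this time in the elaboration context $\un{\Gamma}$ with the substitution $\sG$ threaded through each case. You instead derive the statement as a corollary of Lemma~\ref{lemma:cps-well-typed}: observe that $\un{\Delta}=\cps{\Delta}$, that $\un{\Gamma}$ is obtained from $\cps{\Gamma}$ by renaming each computational $x$ to $x^w$ (which is exactly what $\sG$ does to $\cps{e}$, and which leaves the remaining context entries and the result type untouched, since source types mention no term variables) and then weakening with $x:\F{C}{x^w}$, whose well-formedness is guaranteed because $x^w:\cps{C}$ is already in scope. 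This buys economy --- no case analysis on $\cps{\cdot}$ is repeated, and the argument makes explicit \emph{why} the two context translations agree up to renaming and weakening --- at the cost of relying on $\alpha$-renaming and mid-context weakening metatheorems for \emf and on the freshness/non-repetition conventions, all of which you correctly identify. The paper's redundant induction avoids invoking those structural lemmas but offers no additional insight. Either proof is acceptable; yours is arguably the cleaner presentation.
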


For expression elaboration we aim to show that:
\[
    \infer{\un{\Delta}, \un{\Gamma} \vdash \un{e}: \un{A}}{\Delta \mid \Gamma \vdash e:A \bang \neu}
\qquad
    \infer{\un{\Delta}, \un{\Gamma} \vdash \un{e}: \F{C}{(\cps{e}\sG)}}{\Delta \mid \Gamma \vdash e:C \bang \neu}
\qquad
    \infer{\un{\Delta}, \un{\Gamma} \vdash \un{e}: \Pure{\un{A}}{\cps{e}\sG}}{\Delta \mid \Gamma \vdash e:A \bang \teff}
\]


\subsection{Proof of the Logical Relation Lemma}
\label{sec:lrel-proofs}

We shall prove some intermediate lemmas before.

\begin{theorem}\label{pure-relation}
For any $A$, $\Delta \mid \Gamma \vdash e : A \bang \neu \implies \un{e}
= \cps{e}~\sG$. (That is, syntactic equality).

\end{theorem}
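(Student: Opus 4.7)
The plan is to proceed by structural induction on the typing derivation $\Delta \mid \Gamma \vdash e : A \bang \neu$, leveraging a purity invariant about $A \bang \neu$-typed terms that sharply restricts which elaboration rules can fire and which substitutions in $\sG$ are active.

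First, I would establish the following invariant by inspection of the typing rules (Figure~\ref{fig:def-typing}) and the type grammar of \deflang: every sub-derivation of such a derivation again concludes in some $A' \bang \neu$, and $e$ contains no free variable of $C$ type from $\Gamma$. The argument is that $A$ types admit only $\narr$ arrows, so $\tarr$-applications and $\lambda x{:}C.\,e'$-abstractions (which would require arrows $H \tarr A$ or $H \narr C$ that live in $C$) are ruled out; likewise $\returnT{\cdot}$ and $\bindT{}{}{}$ produce $\teff$-effect judgments, so they cannot appear either. A $C$-typed variable $x \in \Gamma$ is itself a term at $C \bang \neu$, and no elimination from $C$ (applications, projections, case branches) can reach $A \bang \neu$, so no such $x$ can be free in $e$.

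Given this invariant, only elaboration cases (1), (2), (3), (5)--(9), (11), and (12) from Figure~\ref{fig:elab} ever apply; the computational-argument case (10), the computational-abstraction case (4), the computational-branch case (13), and cases (14)--(15) for $\returnT{\cdot}$ and $\bindT{}{}{}$ are all excluded. In each applicable case, the elaboration rule and the corresponding CPS clause have identical syntactic shape, using $\un{A} = \cps{A}$ on binder annotations. Moreover, since $\sG$ substitutes $x^w$ only for $C$-typed variables of $\Gamma$, none of which occur in $\cps{e}$ by the invariant, we get $\cps{e}\,\sG = \cps{e}$. The induction hypothesis applied pointwise to each sub-derivation then gives $\un{e} = \cps{e} = \cps{e}\,\sG$ by a rule-by-rule check (e.g., in case (9), $\un{e_1\,e_2} = \un{e_1}\,\un{e_2}$ matches $\cps{e_1\,e_2}\,\sG = (\cps{e_1}\,\sG)(\cps{e_2}\,\sG)$, and similarly for pairs, projections, injections, and case).

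The main obstacle will be the careful statement and verification of the purity invariant, in particular confirming that case-expression branches and $\lambda$-abstraction bodies inside an $A \bang \neu$-typed term are themselves forced into $A' \bang \neu$, given that the typing rules \rulename{ST-Abs} and \rulename{ST-Case} allow an arbitrary effect $\epsilon$ in the premise. Once the interaction between the type grammar and those rules pins down $\epsilon = \neu$, the induction itself reduces to a mechanical congruence argument.
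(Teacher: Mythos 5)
Your proposal is correct and follows essentially the same route as the paper: a structural induction on the typing derivation in which the type restrictions of \deflang force every sub-derivation of an $A\bang\neu$ judgment back into $A'\bang\neu$ (excluding the computational abstraction/application/case rules and \rulename{ST-Ret}/\rulename{ST-Bind}), render $\sG$ inert, and reduce each remaining case to a homomorphic congruence check. The only difference is presentational: you factor the purity observations into an up-front invariant, whereas the paper invokes them inline in each case (e.g.\ noting in \rulename{ST-App} that the argument cannot have a $C$ type and in \rulename{ST-Var}/\rulename{ST-Abs}/\rulename{ST-Case} that $A$-typed binders escape the substitution).
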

\begin{proof}
By induction on the typing derivation. The cases for \rulename{ST-Ret} and \rulename{ST-Bind}
do not apply.

\begin{enumerate}
\item \rulename{ST-Var}

Our goal is to show $x = x~\sG$. Since the type of $x$ is $A$ the
substitution does not affect $x$, thus they're trivially both $x$.

\item \rulename{ST-Const}

Say $\Delta \mid \Gamma \vdash \kappa(b_1,\ldots,b_n) : b \bang n$. By the
induction hypothesis we know that $\un{b_i} = \cps{b_i}\sG$ for each $i$. We
thus trivially get our goal by substitution of the arguments.

\item \rulename{ST-Abs}

Say we concluded $\Delta \mid \Gamma \vdash \lambda x:A.\, e : A \narr
A' \bang \neu$. Our premise is (note the substitution from the IH
does not affect $x$, as it has an $A$-type) the fact that $\un{e} =
\cps{e}~\sG$. We need to show that:
\[
    \un{\lambda x:A.~e} = \cps{(\lambda x:A.~e)}
\]
which is just
\[
    \lambda x:\un{A}.~\un{e} = \lambda x:\cps{A}.~\cps{e}
\]
which is trivial from our hypothesis and since $\un{A} \eqdef \cps{A}$.

\item \rulename{ST-App}

Say we concluded $\Delta \mid \Gamma \vdash f~e : A' \bang \neu$ by
the premises
\[
    \Delta \mid \Gamma \vdash f : A \narr A' \bang \neu
\qquad
    \Delta \mid \Gamma \vdash e : A \bang \neu
\]
(it cannot be the case that $e$ has some $C$ type, because of the type
restrictions).
Using the inductive hypotheses we have:
\[
    \begin{array}{l}
    \un{f~e} = \un{f}~\un{e} = (\cps{f}~\sG)~(\cps{e}~\sG) =
        \cps{(f~e)}~\sG
    \end{array}
\]
As required.

\item \rulename{ST-Fst}, \rulename{ST-Snd}, \rulename{ST-Pair}, \rulename{ST-Inl}, \rulename{ST-Inr}

All of these are trivial by applying the IH. For \rulename{ST-Pair} one needs to
note that the restrictions will ensure that the type of the pair will be
an $A$-type.

\item \rulename{ST-Case}

Say we concluded $\Delta \mid \Gamma \vdash \mycases{e}{x:A_0.\,
e_1}{y:A_1.\, e_2} : A_2 \bang \neu$.
As inductive hypothesis we have:
\[
      \un{e} = \cps{e}~\sG
\qquad
      \un{e_1} = \cps{e_1}~\sG
\qquad
      \un{e_2} = \cps{e_2}~\sG
\]
($e_1$ and $e_2$ are typed in $\Gamma$ extended with $x$ and $y$
respectively, however since they are $A$-typed the $\sG$ substitution is
the same)

The goal is:
\[\begin{array}{cl}
  & (\caseimp{\un{e}}{x.\un{e_1}}{y.\un{e_2}}) \\
= & (\caseimp{\cps{e}~\sG}{x.\cps{e_1}~\sG}{y.\cps{e_2}~\sG})
\end{array}
\]
We trivially get our goal from the IHs.

\end{enumerate}
\end{proof}

\begin{theorem}\label{a-elab-type}
    If $\Delta \mid \Gamma \vdash e : A \bang \neu$, then
    $\un\Delta,\un\Gamma \vdash \un{e} : \un{A}$.
\end{theorem}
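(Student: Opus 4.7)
The plan is to combine the two preceding results: Theorem~\ref{pure-relation}, which asserts that for $A$-typed non-effectful terms $\un{e} = \cps{e}\sG$ syntactically, and Lemma~\ref{lemma:cps-well-typed-elab}, which provides the well-typing $\un\Delta, \un\Gamma \vdash \cps{e}\sG : \cps{H}$ for any non-effectful term at $H$-type. Together, these immediately yield the statement.

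Concretely, assume $\Delta \mid \Gamma \vdash e : A \bang \neu$. First, I apply Theorem~\ref{pure-relation} to this derivation to obtain the syntactic equality $\un{e} = \cps{e}\sG$. Next, I apply Lemma~\ref{lemma:cps-well-typed-elab} (instantiated at $H := A$) to the same derivation to obtain $\un\Delta, \un\Gamma \vdash \cps{e}\sG : \cps{A}$. Since $A$ is an $\teff$-free type, the type-level elaboration coincides with the CPS translation, i.e.\ $\un{A} = \cps{A}$ by definition. Rewriting $\cps{e}\sG$ as $\un{e}$ and $\cps{A}$ as $\un{A}$ in the typing judgment yields $\un\Delta, \un\Gamma \vdash \un{e} : \un{A}$, as required. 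Note also that $\un\Delta = \cps{\Delta}$, so no context mismatch arises between the two invoked results.

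There is essentially no obstacle here, since all the work has been done in the preceding two results; if one were to unfold the proofs instead, the real inductive content would live in Theorem~\ref{pure-relation} and Lemma~\ref{lemma:cps-well-typed-elab}, both of which proceed by induction on the typing derivation and exhaust every introduction/elimination form of the $\teff$-free fragment of \deflang. The conceptual point the present theorem makes visible is that the $A$-typed fragment is elaborated transparently into \emf: no WP-level scaffolding (no $x^w$ binder, no specification argument, no $\purereturn$ or $\purebinddmf$) is introduced, because such terms carry no computational content that requires a specification. This is what makes the clauses of \autoref{fig:elab} for $A$-typed terms degenerate, and what lets the combination of the two cited results discharge the goal in one line.
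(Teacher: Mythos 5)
Your proof is correct, but it takes a genuinely different route from the paper's. The paper proves Theorem~\ref{a-elab-type} directly, by induction on the typing derivation, walking through each rule of the $\teff$-free fragment (\textsc{ST-Var}, \textsc{ST-Const}, \textsc{ST-Abs}, \textsc{ST-App}, the pair/sum rules, and \textsc{ST-Case}, with \textsc{ST-Ret} and \textsc{ST-Bind} not applying). You instead obtain it as an immediate corollary of Theorem~\ref{pure-relation} (the syntactic identity $\un{e} = \cps{e}\,\sG$ on $A$-typed non-effectful terms) together with Lemma~\ref{lemma:cps-well-typed-elab} instantiated at $H := A$, plus the definitional facts $\un{A} = \cps{A}$ and $\un\Delta = \cps{\Delta}$. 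The logic is sound: both cited results precede the theorem, neither depends on it, and the type restrictions of \deflang (no $C \narr A$ arrows, no mixed $A \times C$ pairs, no computational sums) guarantee that an $A \bang \neu$ derivation never strays outside the fragment where Theorem~\ref{pure-relation} applies, so $\sG$ is effectively inert on such terms. What your route buys is brevity and a conceptual point: the $A$-typed fragment elaborates transparently, with no WP scaffolding, so its well-typing is literally the well-typing of its \cpst{} image. What the paper's route buys is self-containedness: Lemma~\ref{lemma:cps-well-typed-elab} is only asserted in the appendix (``similarly to Lemma~\ref{lemma:cps-well-typed} we show\ldots''), so your one-liner quietly relocates the real inductive work into a result whose proof the paper never writes out, whereas the paper's direct induction carries that work explicitly where it is needed.
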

\begin{proof}
    By induction on the typing derivation.

\begin{enumerate}
\item \rulename{ST-Var}

We have $\Delta \mid \Gamma \vdash x:A \bang \neu$, with $x \in \Gamma$.
By the translation for environments, we have $x : \un{A}$ in $\Gamma$,
so this is trivial.

\item \rulename{ST-Const}

For any constant $\kappa : (b_1,\ldots,b_n) \rightarrow b$ say we have
$\Delta \mid \Gamma \vdash \kappa(e_1,\ldots, e_n):b \bang \neu$ by \rulename{ST-Const}
(note that $b$ and all the $b_i$ are in $A$).
This means that for every $i$ we have as inductive hypothesis:
    \[ \un{\Delta}, \un{\Gamma} \vdash \un{e_i} : b_i \]
Since $\kappa$ is also a target constant of the same type, we thus have:
    \[ \un{\Delta}, \un{\Gamma} \vdash \kappa~\un{e_1}~\ldots~\un{e_n} : b \]
Which is exactly our goal as $\un{b} = b$.

\item \rulename{ST-Fst}, \rulename{ST-Snd}, \rulename{ST-Pair}, \rulename{ST-Inl}, \rulename{ST-Inr}

Trivial by using IH.

\item \rulename{ST-Case}

Say we concluded $\Delta \mid \Gamma \vdash \mycases{e} {x:A_0.\, e_1}
{y:A_1.\, e_2} : A_2 \bang \neu$ by \rulename{ST-Case}. Our IHs give us
\[\arraycolsep=1pt
    \begin{array}{lclcl}
    \un\Delta, \un\Gamma & \vdash & \un{e} &:& \un{A_0} + \un{A_1}     \\
    \un\Delta, \un\Gamma, x:\un{A_0} & \vdash & \un{e_1} &:& \un{A_2} \\
    \un\Delta, \un\Gamma, y:\un{A_1} & \vdash & \un{e_2} &:& \un{A_2}
\end{array}\]
By a non-dependent application of T-CaseTot we get
\[
    \un\Delta, \un\Gamma \vdash \caseimp{\un{e}}{x.\un{e_1}}{y.\un{e_2}} : \un{A_2}
\]
Which is our goal.

\item \rulename{ST-Abs}, \rulename{ST-App}

Both trivial from IHs.

\end{enumerate}
\end{proof}

Before jumping into the logical relation lemma, we will require the
following auxiliary lemma, of which we make heavy use.

\begin{lemma}[Invariancy of $\F{C}{w}$]
    If $\Gamma \vDash w_1 = w_2$, then $\Gamma \vdash \F{C}{w_1} <: \F{C}{w_2}$.
\end{lemma}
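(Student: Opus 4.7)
The plan is to proceed by structural induction on the type $C$, unfolding the definition of $\F{C}{\cdot}$ in each case and reducing the subtyping goal to either the induction hypothesis or a subtyping fact about the base case $\kw{Pure}~\un{A}~w$. The key observation is that the judgment $\Gamma \vDash w_1 = w_2$ from \emf's squashed classical logic entails the two implications $\forall p.\,w_1~p \Rightarrow w_2~p$ and its converse; this is exactly what \rulename{S-Pure} consumes in its second premise.

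\textbf{Cases.} First, for $C = C_1 \times C_2$, we have $\F{C}{w_i} = \F{C_1}{\mfst{w_i}} \times \F{C_2}{\msnd{w_i}}$. Since $\Gamma \vDash w_1 = w_2$ implies $\Gamma \vDash \mfst{w_1} = \mfst{w_2}$ and similarly for $\msnd{\cdot}$, two applications of the induction hypothesis and a (routine) product-congruence for subtyping close the case. Second, for $C = A \earr H$, we have $\F{C}{w_i} = \product{x@\un{A}}{\G{\epsilon}{H}{w_i~x}}$; applying \rulename{S-Prod} (with identical domain $\un{A}$) reduces the goal to $\Gamma, x@\un{A} \vdash \G{\epsilon}{H}{w_1~x} <: \G{\epsilon}{H}{w_2~x}$, and since $\Gamma, x@\un{A} \vDash w_1~x = w_2~x$, we conclude by the induction hypothesis on $H$ if $\epsilon = \neu$ (so $\G{\neu}{H}{\cdot} = \F{H}{\cdot}$), or by the \kw{Pure} base case below if $\epsilon = \teff$. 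The case $C = C' \earr H$ is analogous, with the additional binder $w'@\cps{C'}$ threaded through via \rulename{S-Prod} before applying the IH on $H$ under the enriched context.

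\textbf{Base case for }$\kw{Pure}$. When we reach $\G{\teff}{A}{w_i~\bar{x}} = \kw{Pure}~\un{A}~(w_i~\bar{x})$, we invoke \rulename{S-Pure}: the type premise $\un{A} <: \un{A}$ is immediate (by \rulename{S-Conv} with the reflexive reduction), and the logical premise $\forall p.\,(w_1~\bar{x})~p \Rightarrow (w_2~\bar{x})~p$ follows from the extended entailment $\Gamma, \bar{x}@\ldots \vDash w_1~\bar{x} = w_2~\bar{x}$, which in turn follows from $\Gamma \vDash w_1 = w_2$ by congruence of equality under the classical entailment relation.

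\textbf{Expected obstacle.} The only non-mechanical aspect is ensuring that the entailment $\Gamma \vDash w_1 = w_2$ propagates correctly under the binders introduced by \rulename{S-Prod} (namely $x@\un{A}$ in case~(3), and $w'@\cps{C'}$ along with the hypothesis $\F{C'}{w'}$ in case~(2)); this is handled by weakening of $\vDash$, which is standard for the squashed classical logic of \emf. A minor auxiliary fact one needs is that subtyping is congruent under non-dependent products, which does not appear explicitly in \autoref{fig:emf-subtyping} but is derivable (e.g., by encoding $A \times B$ as a $\Sigma$-type or by a straightforward lemma) from the rules shown.
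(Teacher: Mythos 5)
Your proof is correct and follows essentially the same route as the paper's: induction on $C$, peeling off the binders of the elaborated arrow types with the dependent-product subtyping rule (discharging the reflexive domain obligations), invoking the induction hypothesis when the codomain elaborates to another $\mathrm{F}$ type, and closing the $\kw{Pure}$ leaves with the \textsc{S-Pure} rule, whose logical premise follows from the assumed equality of the WPs applied to the bound variables. The only differences are cosmetic: you group the arrow cases by the kind of the domain rather than enumerating all four combinations, and you make explicit the weakening and product-congruence facts that the paper leaves implicit.
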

\begin{proof}
By induction on $C$.
\begin{enumerate}
    \item $C \tarr A$

        We need to show that
        \[
            \Gamma \vdash \F{C \tarr A}{w_1} <: \F{C \tarr A}{w_2}
        \]
        Which is
        \[\arraycolsep=1pt
        \begin{array}{rcl}
            \Gamma &\vdash& x^w:\cps{C} -> \F{C}{x^w} -> \Pure{\un{A}}{w_1~x^w} \\
                        && <: x^w:\cps{C} -> \F{C}{x^w} -> \Pure{\un{A}}{w_2~x^w}
        \end{array}
        \]
        After two applications of \rulename{ST-Prod} (and some trivial reflexivity
        discharges), the required premise to show is:
        \[\small
            \Gamma, x^w:\cps{C}, \_:\F{C}{x^w} \vdash \Pure{\un{A}}{w_1~x^w} <: \Pure{\un{A}}{w_2~x^w}
        \]
        By \rulename{S-Pure} we're required to show that $\un{A}$ is a
        subtype of itself (which is trivial by reflexivity of subtyping
        \rulename{S-Conv}) and that $w_2$ is stronger than $w_1$,
        which can be easily proven as they are equal.

    \item $A \tarr A$

        Very similar to the previous case, but simpler.

    \item $C \narr C'$

        We need to show that
        \[
            \Gamma \vdash \F{C \narr C'}{w_1} <: \F{C \narr C'}{w_2}
        \]
        Which is
        \[\arraycolsep=1pt
        \begin{array}{rcl}
            \Gamma &\vdash& x^w:\cps{C} -> \F{C}{x^w} -> \F{C'}{(w_1~x^w)} \\
                      && <: x^w:\cps{C} -> \F{C}{x^w} -> \F{C'}{(w_2~x^w)}
        \end{array}
        \]
        After two applications of \rulename{ST-Prod} (and some trivial reflexivity
        discharges), the required premise to show is:
        \[\small
            \Gamma, x^w:\cps{C}, \_:\F{C}{x^w} \vdash \F{C'}{(w_1~x^w)} <: \F{C'}{(w_2~x^w)}
        \]
        As in this context we can show $w_1~x^w = w_2~x^w$ we apply our
        IH to the type $C'$ and are done.

    \item $A \narr C'$

        Also very similar to the previous case, but simpler.

    \item $C \times C'$

        Trivial by IHs and concluding that $\mfst{w_1} = \mfst{w_2}$, and similarly
        for $\mSnd$.

\end{enumerate}

\end{proof}

\paragraph{Proof of \autoref{thm:elab-typing} (\nameref{thm:elab-typing})}

\begin{proof}

The two parts are proved by a joint structural induction.

\begin{enumerate}
\item \rulename{ST-Var}

We have $\Delta \mid \Gamma \vdash x:C \bang \neu$, with $x \in
\Gamma$.
By the translation for environments, we have $x^w : \cps{C}$ and $x :
\F{C}{x^w}$ in $\un\Gamma$.
Since $x$ is covered by the substitution $\sG$, what we need to prove
is $\un\Delta,\un\Gamma \vdash x : \F{C}{x^w}$, which is exactly what
we have in the environment.

\item \rulename{ST-Pair}

Suppose we proved $(e_1,e_2) : C_1 \times C_2 \bang \neu$ by
\rulename{ST-Pair}. We want to show: $\un{\Delta}, \un{\Gamma} \vdash
(\un{e_1},\un{e_2}):\F{C_1 \times C_2}{((\cps{e_1}, \cps{e_2})~\sG)}$, i.e.,
that (after reduction inside F):
\[\un{\Delta}, \un{\Gamma} \vdash (\un{e_1}, \un{e_2}): \F{C_1}{(\cps{e_1}\sG)} \times \F{C_2}{(\cps{e_2}\sG)}\]
This is trivial by applying both IHs.

\item \rulename{ST-Fst}, \rulename{ST-Snd}

Suppose we proved $\Delta \mid \Gamma \vdash \fst{e} : C_1 \bang \neu$
by \rulename{ST-Fst}. We need to then show
\[
    \un{\Delta}, \un{\Gamma} \vdash \mfst{\un{e}}:\F{C_1}{(\mfst{\cps{e}} \sG)}
\]
By our induction hypothesis we have $\un{\Delta}, \un{\Gamma} \vdash
\un{e}:\F{C_1 \times C_2}{(\cps{e} \sG)}$, which is
\[
    \un{\Delta}, \un{\Gamma} \vdash \un{e} : \F{C_1}{(\mfst{\cps{e} \sG})} \times \F{C_2}{(\msnd{\cps{e} \sG})}
\]
It is therefore easy to see that we have our goal.

\item \rulename{ST-Abs}

There are two cases:
\begin{itemize}
\item $A \earr H$

Suppose we concluded $\Delta \mid \Gamma \vdash \lambda x: A.\, e: A \earr H
\bang \neu$. Then we have $\Delta \mid \Gamma, x: A \vdash e: H \bang
\epsilon$ and so, by the induction hypothesis, in both cases for $\epsilon$ we have
    \[ \un{\Delta}, \un{\Gamma}, x: \un{A} \vdash \un{e}: \G{\epsilon}{H}{\cps{e}~\sG} \]
And we have to show:
    \[ \un{\Delta}, \un{\Gamma} \vdash \lambda x: \un{A}.\,\un{e}: \F{A \narr H}{((\lambda x:\cps{A}. \cps{e})~\sG)}\]
Which is
    \[ \un{\Delta}, \un{\Gamma} \vdash \lambda x: \un{A}.\,\un{e}: x:\un{A} -> \G{\epsilon}{H}{(\lambda x:\cps{A}. \cps{e})~\sG~x} \]
Since the substitution does not cover $x$, the argument to G is just
$\cps{e}~\sG$, thus we use our IH to conclude this easily.

\item $C \earr H$

Suppose we concluded $\Delta \mid \Gamma \vdash \lambda x: C.\, e: C
\earr H \bang \neu$. Then we have $\Delta \mid \Gamma, x: C \vdash e:
H \bang \epsilon$ and so, by the induction hypothesis we have, in either case for $\epsilon$:
    \[ \un{\Delta}, \un{\Gamma}, x^w : \cps{C}, x : \F{C}{x^w} \vdash \un{e}: \G{\epsilon}{H}{\cps{e}~\sG~[x^w/x]} \]
And we have to show:
    \[ \un{\Delta}, \un{\Gamma} \vdash \lambda x^w: \cps{C}.\, \lambda x:\F{C}{x^w}.\, \un{e}: \F{C \earr H}{((\lambda x:\cps{C}. \cps{e})~\sG)} \]
Which is
    \[\small
      \arraycolsep=1pt
      \begin{array}{lrl}
          \un{\Delta}, \un{\Gamma} & \vdash & \lambda x^w: \cps{C}.\, \lambda x:\F{C}{x^w}.\, \un{e} \\
                                   &    :   & x^w:\cps{C} -> \F{C}{x^w} -> \G{\epsilon}{H}{(\lambda x:\cps{C}. \cps{e})~\sG~x^w}
      \end{array}
    \]
Using T-Abs twice we can conclude this via
    \[ \un{\Delta}, \un{\Gamma}, x^w: \cps{C}, x:\F{C}{x^w} \vdash \un{e} : \G{\epsilon}{H}{(\lambda x:\cps{C}. \cps{e})~\sG~x^w} \]

Since the substitution does not cover $x$, the argument to F reduces to
$\cps{e}~\sG~[x^w/x]$, thus we use our IH to conclude this easily.
\end{itemize}

\item \rulename{ST-App}

Again, There are two possible cases:
\begin{itemize}

\item $A \earr H$

We concluded $\DmG \vdash fe : G \bang \epsilon$.
Our premises are
$ \DmG \vdash f : A \earr C \bang \neu $
and
$ \DmG \vdash e : A \bang \neu $.
The IH for $f$ is, expanding F:
    \[ \un\Delta, \un\Gamma \vdash \un{f} : x:\un{A} -> \G{\epsilon}{H}{(\cps{f}~\sG)~x} \]
By T-App, and since $\un{e} : \un{A}$, this is just:
    \[ \un\Delta, \un\Gamma \vdash \un{f}~\un{e} : \G{\epsilon}{H}{(\cps{f}~\sG)~\un{e}} \]
Since from a previous theorem we know we have $\un{e} = \cps{e}~\sG$
(syntactically), we can conclude:
    \[ \un\Delta, \un\Gamma \vdash \un{f}~\un{e} : \G{\epsilon}{H}{(\cps{f}~\sG)~(\cps{e}~\sG)} \]
This is exactly:
    \[ \un\Delta, \un\Gamma \vdash \un{f~e} : \G{\epsilon}{H}{\cps{(f~e)}~\sG} \]
which is our goal, in either the $C\bang\neu$ or the $A\bang\teff$ case.

\item $C \earr H$ \\

We concluded $\DmG \vdash fe : H \bang \epsilon$.
Our premises are
$ \DmG \vdash f : C \earr H \bang \neu $
and
$ \DmG \vdash e : C \bang \neu $
The IHs are, expanding F:
\[\begin{array}{lll}
    \un\Delta, \un\Gamma &\vdash& \un{f} : x^w:\cps{C} -> \F{C}{x^w} -> \G{\epsilon}{H}{(\cps{f}~\sG)~x^w}   \\
    \un\Delta, \un\Gamma &\vdash& \un{e} : \F{C}{(\cps{e}~\sG)}     \\
\end{array}\]
Thus by two uses of T-App (noting that it's well typed by our IH for
$e$), we can conclude:
\[
    \un\Delta, \un\Gamma \vdash \un{f}~(\cps{e}~\sG)~\un{e} : \G{\epsilon}{C'}{(\cps{f}~\sG)~(\cps{e}~\sG)}
\]
This is just, syntactically:
    \[ \un\Delta, \un\Gamma \vdash \un{f~e} : \G{\epsilon}{C'}{(\cps{f}~\sG)~(\cps{e}~\sG)} \]
which is our goal, in either the $C\bang\neu$ or the $A\bang\teff$ case.

\end{itemize}

\item \rulename{ST-Case}

There are two cases depending on wether we eliminate into $C\bang\neu$
or $A\bang\teff$. Both of these cases are quite dull, and deal mostly
with the typing judgment on the target. This may be skipped without
hindering any of the main ideas.

\begin{itemize}
\item $C\bang\neu$

Suppose that
$\Delta \mid \Gamma \vdash e: A + A' \bang \neu$,
$\Delta \mid \Gamma, x:A  \vdash  e_1: C \bang \neu$, and
$\Delta \mid \Gamma, y:A' \vdash  e_2: C \bang \neu$, so that
$\Delta \mid \Gamma \vdash \mycases{e} {x:A.\, e_1}  {y:A'.\, e_2} : C \bang \neu$.
We will go into detail only for $e_1$ as the typing and reasoning for
$e_2$ is exactly analogous.
As inductive hypothesis for $e_1$ we have
\[
    \un{\Delta}, \un{\Gamma}, x:\un{A} \vdash \un{e_1} : \F{C}{(\cps{e_1}~\sG)}
\]
We wish to show:
\[\small
\arraycolsep=1pt
  \begin{array}{lll}
    \un{\Delta},\un{\Gamma} & \vdash & \casefull{\un{e}}{z} {x.\un{e_1}} {y.\un{e_2}\\&&}
                     {\F{C}{(\caseimp{z}{\cps{e_1}~\sG}{\cps{e_2}~\sG})}} \\
  & : & \F{C}{((\caseimp{\cps{e}} {x.\cps{e_1}}  {y.\cps{e_2}})~\sG)}
\end{array}\]
Which is
\[\small
\arraycolsep=1pt
  \begin{array}{lll}
    \un{\Delta},\un{\Gamma} & \vdash & \casefull{\un{e}}{z} {x:\un{A}.\, \un{e_1}} {y: \un{A'}.\, \un{e_2}\\&&}
                     {\F{C}{(\caseimp{z}{x.\cps{e_1}~\sG}{y.\cps{e_2}~\sG})}} \\
  & : & \F{C}{(\caseimp{\cps{e}~\sG} {x.\cps{e_1}~\sG}  {y.\cps{e_2}~\sG})}
\end{array}\]
Since $\un{e} = \cps{e}~\sG$ we will prove this has type
\[
    \F{C}{(\caseimp{\un{e}} {x.\cps{e_1}~\sG} {y.\cps{e_2}~\sG})}
\]
By T-CaseTot, we should show:
\[\small
\arraycolsep=1pt
\begin{array}{lll}
    \un{\Delta}, \un{\Gamma} & \vdash & \un{e} : \un{A} + \un{A'} \\
    \un{\Delta}, \un{\Gamma}, x:\un{A} & \vdash & \un{e_1} :
                     \F{C}{(\caseimp{\minl{x}} {x.\cps{e_1}~\sG}  {y.\cps{e_2}~\sG})} \\
\end{array}
\]
(And the one for $e_2$). We get the first one trivially by theorem
\ref{a-elab-type}. The second is by reduction equivalent to:
\[
    \un{\Delta}, \un{\Gamma}, x:\un{A} \vdash \un{e_1} :
                \F{C}{(\cps{e_1}~\sG)}
\]
Which is exactly our IH for $e_1$, so we're done.

\item $A\bang\teff$

Our hypotheses are:
\[\small
\arraycolsep=1pt
\begin{array}{lll}
    \un{\Delta}, \un{\Gamma} & \vdash & \un{e} : \un{A_0} + \un{A_1} \\
    \un{\Delta}, \un{\Gamma}, x:\un{A_0} & \vdash & \un{e_1} : \Pure{\un{A_2}}{\cps{e_1}~\sG} \\
    \un{\Delta}, \un{\Gamma}, y:\un{A_1} & \vdash & \un{e_2} : \Pure{\un{A_2}}{\cps{e_2}~\sG} \\
\end{array}
\]
Applying T-Case (non-dependently) we get.
\[\small
\arraycolsep=1pt
\begin{array}{ll}
    \un{\Delta}, \un{\Gamma} \vdash & \caseimp{\un{e}} {x.\un{e_1}}{y.\un{e_2}} : \\
            & \Pure{\un{A_2}}{\caseimp{\un{e}}{x.\cps{e_1}~\sG}{y.\cps{e_2}~\sG}}
\end{array}
\]
Since we know $\un{e} = \cps{e}~\sG$ and since $\un{A} \eqdef \cps{A}$ this
is exactly:
\[\small
\arraycolsep=1pt
\begin{array}{ll}
    \un{\Delta}, \un{\Gamma} \vdash & \caseimp{\un{e}} {x.\un{e_1}}{y.\un{e_2}} : \\
            & \Pure{\un{A_2}}{(\caseimp{\cps{e}}{x.\cps{e_1}}{y.\cps{e_2}})~\sG}
\end{array}
\]
Which is our goal.

\end{itemize}

\item \rulename{ST-Ret}

We have $\Delta \mid \Gamma \vdash e : A \bang \neu$.
We need to show:
\[ \un{\Delta}, \un{\Gamma} \vdash \un{\returnT{e}} : \Pure{\un{A}}{\cps{\returnT{e}}~\sG} \]
i.e.
\[ \un{\Delta}, \un{\Gamma} \vdash \pureret{\un{A}}{\un{e}} : \Pure{\un{A}}{\lambda p:\cps{A}->\Ty.~(\cps{e}~\sG)} \]
This is a trivial consequence of \autoref{pure-relation} by
using the T-Ret rule of \emf, and the fact that $\un{A} \eqdef \cps{A}$.

\item \rulename{ST-Bind}

Suppose we have $\DmG \vdash e_1: A \bang \teff$, and
$\DmG, x : A \vdash e_2 : A' \bang \teff$, and so
$\DmG \vdash \bindT{e_1}{x:A}{e_2}: A' \bang \teff$. We have to show:
\[
    \un{\Delta},\un{\Gamma} \vdash \un{\bindT{e_1}{x:A}{e_2}} \,:\, \Pure{\un{A'}}{\cps{(\bindT{e_1}{x:A}{e_2})}~\sG}
\]
that is:
\[\small
\arraycolsep=1pt
\begin{array}{ll}
    \un{\Delta}, \un{\Gamma} \vdash & \purebind{\un{A}}{\un{A'}}{(\cps{e_1}~\sG)}{\un{e_1}}{(\lambda x:\cps{A}.\, \cps{e_2}~\sG)}{(\lambda x:\un{A}.\, \un{e_2})} : \\
     & \Pure{\un{A'}}{\lambda p:\cps{A'} -> Ty.~\cps{e_1}(\lambda x:\cps{A}.\, \cps{e_2}~p)}
\end{array}
\]
By our IHs we have:
\[\small
\arraycolsep=1pt
\begin{array}{lcl}
        \un{\Delta}, \un{\Gamma} & \vdash & \un{e_1} : \Pure{\un{A}}{\cps{e_1}~\sG} \\
        \un{\Delta}, \un{\Gamma}, x:\un{A} & \vdash & \un{e_2} : \Pure{\un{A'}}{\cps{e_2}~\sG} \\
\end{array}
\]
So we get:
\[
    \un{\Delta}, \un{\Gamma} \vdash \lambda x:\un{A}.\, \un{e_2} : x:\un{A} -> \Pure{\un{A'}}{\cps{e_2}~\sG}
\]
By the T-Bind rule we can conclude:
\[\small
\arraycolsep=1pt
\begin{array}{ll}
    \un{\Delta}, \un{\Gamma} \vdash & \purebind{\un{A}}{\un{A'}}{(\cps{e_1}~\sG)}{\un{e_1}}{(\lambda x:\cps{A}.\, \cps{e_2}~\sG)}{(\lambda x:\un{A}.\, \un{e_2})} : \\
     & \Pure{\un{A'}}{\lambda p:\un{A'} -> Ty.~\cps{e_1}(\lambda x:\un{A}.\, \cps{e_2}~p)}
\end{array}
\]
Since $\un{A} = \cps{A}$ and $\un{A'} = \cps{A'}$ this is exactly our goal.
\end{enumerate}
\end{proof}

Note: in this proof, we didn't use any specific fact about Pure,
except the relation between monad operations and their WPs, so this is
all generalizable to another target monad that's already defined and
satisfies the base conditions for return and bind.

\subsection{Equality Preservation}

We want to show that any source monad will give rise to
specification-level monads in the target. This will be a consequence on
the fact that equality is preserved by the \cpst. From equality preservation we
also get the property that lifts are monad morphisms without further effort.

First we define equality for the source language. It is basically standard
$\beta\eta$-equivalence adding the monad laws for the base monad $T$. We
keep the type and effect of each equality. It is an invariant that if we
can derive an equality, both sides are well-typed at the specified type
and effect.

\begin{figure*}[!t]
\[\begin{array}{cccccc}

    \multicolumn{3}{c}{
    \inferrule*[lab=Eq-Beta]
    { \Delta \mid \Gamma, x : H \vdash e_1 : H'\bang\varepsilon
     \qquad \Delta \mid \Gamma \vdash e_2 : H\bang\neu }
    {\Delta \mid \Gamma \vdash (\lambda x:H.\, e_1) e_2 = e_1[e_2/x] : H'\bang\varepsilon }
    }

    &

    \multicolumn{3}{c}{
    \inferrule*[lab=Eq-Eta]
    {\Delta\mid\Gamma \vdash e : H\earr H'\bang\neu
     \qquad x \notin FV(e) }
    {\Delta\mid\Gamma \vdash (\lambda x:H.\, e~x) = e : H\earr H' \bang \neu}
    }

    \\\\

    \multicolumn{3}{c}{
    \inferrule*[lab=Eq-App]
    {\Delta \mid \Gamma \vdash e_1 = e_1' : H \earr H' \bang \neu
     \qquad \Delta \mid \Gamma \vdash e_2 = e_2' : H \bang \neu}
    {\Delta \mid \Gamma \vdash e_1~e_2 = e_1'~e_2' : H' \bang \varepsilon}
    }

    &

    \multicolumn{3}{c}{
    \inferrule*[lab=Eq-Abs]
    {\Delta \mid \Gamma, x : H \vdash e = e' : H' \bang \varepsilon}
    {\Delta \mid \Gamma \vdash (\lambda x:H.\, e) = (\lambda x:H.\, e') : H \earr H' \bang \neu}
    }

    \\\\

    \inferrule*[lab=Eq-Refl]
    {\Delta \mid \Gamma \vdash e : H \bang\varepsilon}
    {\Delta \mid \Gamma \vdash e = e : H \bang\varepsilon}

    &

    \inferrule*[lab=Eq-Symm]
    {\Delta \mid \Gamma \vdash e_1 = e_2 : H \bang\varepsilon}
    {\Delta \mid \Gamma \vdash e_2 = e_1 : H \bang\varepsilon}

    &

    \multicolumn{4}{c}{
    \inferrule*[lab=Eq-Trans]
    {\Delta \mid \Gamma \vdash e_1 = e_2 : H \bang\varepsilon
     \qquad \Delta \mid \Gamma \vdash e_2 = e_3 : H \bang\varepsilon}
    {\Delta \mid \Gamma \vdash e_1 = e_3 : H \bang\varepsilon}
    }

    \\\\

    \multicolumn{2}{c}{
    \inferrule*[lab=Eq-Pair]
    {\Delta\mid\Gamma \vdash e : H \times H' \bang \neu}
    {\Delta\mid\Gamma \vdash (\fst{e}, \snd{e}) = e : H \times H' \bang \neu}
    }

    &

    \multicolumn{4}{c}{
    \inferrule*[lab=Eq-Case]
    {\Delta\mid\Gamma \vdash e : A + A' \bang \neu}
    {\Delta\mid\Gamma \vdash \mycases{e}{x.\inl{x}}{x.\inr{x}} = e : A + A' \bang \neu}
    }
    \\\\

    \multicolumn{3}{c}{
    \inferrule*[lab=Eq-M1]
    {\Delta\mid\Gamma \vdash m : A\bang\teff}
    {\Delta\mid\Gamma \vdash \bindT{m}{x}{(\returnT{x})} = m : A\bang\teff}
    }

    &

    \multicolumn{3}{c}{
    \inferrule*[lab=Eq-M2]
    {\Delta\mid\Gamma \vdash e : A\bang\neu
     \qquad \Delta\mid\Gamma \vdash f : A\tarr A'\bang\neu
     \qquad x \notin FV(f)}
    {\Delta\mid\Gamma \vdash \bindT{(\returnT{e})}{x}{f~x} = f~e : A'\bang\teff}
    }

    \\\\

    \multicolumn{6}{c}{
    \inferrule*[lab=Eq-M3]
    {\Delta\mid\Gamma \vdash m : A\bang\teff
     \qquad \Delta\mid\Gamma,x:A \vdash e_1 : A'\bang\teff
     \qquad \Delta\mid\Gamma,y:A'  \vdash e_2 : A''\bang\teff
     \qquad x \notin FV(e_2) }
    {\Delta\mid\Gamma \vdash \bindT{(\bindT{m}{x}{e_1})}{y}{e_2} = \bindT{m}{x}{(\bindT{e_1}{y}{e_2})} : A'' \bang \teff}
    }

    \\\\

\end{array}
\]
\caption{Equality rules for \deflang}
\label{fig:eqrules}
\end{figure*}

The definition of the equality judgment is in \autoref{fig:eqrules}.
Besides those rules, there is a congruence rule for every source
construct and computation rules for pairs and sums, as expected.

We then prove that:
\[
    \infer{\un\Delta,\un\Gamma \vDash \cps{e_1}~\sG \eqtwo \cps{e_2}~\sG }
          { \Delta\mid\Gamma \vdash e_1 = e_2 : H \bang\varepsilon}
\]
Where by $\vDash$ it is meant the validity judgment of \emf.

\begin{theorem}[Preservation of equality by CPS]\label{eq-star-again}
If $\Delta\mid\Gamma \vdash e_1 = e_2 : H \bang\varepsilon$
for any $\Delta,\Gamma,e_1,e_2,H,\varepsilon$, then one has
$\un\Delta,\un\Gamma \vDash \cps{e_1}~\sG \eqtwo \cps{e_2}~\sG$ .
\end{theorem}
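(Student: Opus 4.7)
The plan is to prove this by induction on the derivation of the equality judgment $\Delta\mid\Gamma \vdash e_1 = e_2 : H \bang\varepsilon$. Before starting the induction, I would establish a CPS substitution lemma: for any non-effectful $e$, we have $\cps{e_1[e/x]}\,\sG = (\cps{e_1}\,\sG)[\cps{e}\,\sG/x]$, which follows by a straightforward structural induction on $e_1$ using the fact that the \cpst is defined homomorphically for all non-return/bind constructs. I would also use \autoref{thm:cps-typing} (well-typing of \cpst) to ensure that every intermediate term we write on the \emf side is well-typed, so that appeals to $\beta\eta$-conversion in \emf are legitimate.

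The structural, congruence, and equivalence-closure cases (\textsc{Eq-App}, \textsc{Eq-Abs}, \textsc{Eq-Refl}, \textsc{Eq-Symm}, \textsc{Eq-Trans}) are immediate from the inductive hypotheses together with the fact that $\vDash$ in \emf is a congruence for equality. The $\beta$ case \textsc{Eq-Beta} reduces in \emf to $(\lambda x{:}\cps{H}.\,\cps{e_1}\,\sG)\,(\cps{e_2}\,\sG) \eqtwo \cps{e_1}\,\sG[\cps{e_2}\,\sG/x]$, which is discharged by one step of $\beta$-reduction in \emf followed by the substitution lemma above; \textsc{Eq-Eta}, \textsc{Eq-Pair}, and \textsc{Eq-Case} are similar appeals to the $\eta$-rules already present in \emf's definitional equality.

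The interesting cases are the three monad laws, since the \cpst interprets $\tau$ as the continuation monad and these must be proved as identities of that monad in \emf. For \textsc{Eq-M1}, $\cps{\bindT{m}{x}{\returnT{x}}} = \lambda p.\,\cps{m}\,(\lambda x.\,(\lambda p'.\,p'\,x)\,p)$, which $\beta\eta$-reduces to $\lambda p.\,\cps{m}\,(\lambda x.\,p\,x) \equiv \lambda p.\,\cps{m}\,p \equiv \cps{m}$. For \textsc{Eq-M2}, two $\beta$-steps turn $\cps{\bindT{\returnT{e}}{x}{f\,x}}$ into $\lambda p.\,(\lambda x.\,\cps{f\,x}\,p)\,\cps{e}$ and then into $\lambda p.\,\cps{f\,e}\,p$, after which $\eta$ and the substitution lemma finish the job (noting that $\cps{f\,x}[\cps{e}/x] = \cps{f\,e}$ since $e$ is non-effectful). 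The associativity law \textsc{Eq-M3} is the richest: expanding both sides yields $\lambda p.\,\cps{m}\,(\lambda x.\,\cps{e_1}\,(\lambda y.\,\cps{e_2}\,p))$ on the right, and $\lambda p.\,(\lambda p'.\,\cps{m}\,(\lambda x.\,\cps{e_1}\,p'))\,(\lambda y.\,\cps{e_2}\,p)$ on the left, which collapse to the same term by $\beta$ plus the side condition $x \notin FV(e_2)$.

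The main obstacle I expect is bureaucratic rather than mathematical: making sure that the \cpst commutes with the substitution $\sG$ and with ordinary capture-avoiding substitution, and checking that each intermediate step has a legitimate derivation of $\vDash$ inside \emf (as opposed to only holding up to external $\beta\eta$). In particular, for \textsc{Eq-M3}, I need to verify that the side condition $x \notin FV(e_2)$ is preserved by the \cpst so that the $\beta$-step that reassociates the continuation is sound. Once these are in place, the proof is a routine combination of \emf's internal $\beta\eta$-equivalence with the two-level substitution lemma.
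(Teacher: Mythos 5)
Your proposal matches the paper's proof: both proceed by induction on the equality derivation, dispatch the structural, congruence, and closure rules by noting that \emf has corresponding equality/$\beta\eta$ rules, and isolate the three monad laws \textsc{Eq-M1}--\textsc{Eq-M3} as the only interesting cases, discharging each by $\beta\eta$-reduction of the continuation-monad encodings while using the side conditions $x \notin FV(f)$ and $x \notin FV(e_2)$ (which transfer to the CPS'd terms). The only difference is that you make explicit the CPS/substitution commutation lemma needed for \textsc{Eq-Beta}, which the paper leaves implicit under ``most of the cases are trivial.''
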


\begin{proof}
By induction on the equality derivation.
Most of the cases are trivial, since \emf has very similar rules for
equality. The interesting cases are the monadic equalities, which we
show here:

\begin{enumerate}
\item \rulename{Eq-M1}

We concluded
        \[ \DmG \vdash \bindT{m}{x}{(\returnT{x})} = m : A \bang\teff \]
Thus we need to show that
        \[ \un\Delta, \un\Gamma \vDash \cps{(\bindT{m}{x}{(\returnT{x})})}~\sG \eqtwo \cps{m}~\sG \]
That is:
\[
    \un\Delta, \un\Gamma \vDash (\lambda p.~(\cps{m}~\sG)(\lambda x.~(\lambda p'.~p'~x)~p)) \eqtwo \cps{m}~\sG
\]
This is trivially provable by $\beta\eta$-reduction.

\item \rulename{Eq-M2}

We concluded
        \[ \Delta\mid\Gamma \vdash \bindT{(\returnT{e})}{x}{f~x} = f~e : A' \bang\teff \]
thus we need to show that
        \[ \un\Delta, \un\Gamma \vDash \cps{(\bindT{(\returnT{e})}{x}{f~x})}~\sG \eqtwo \cps{(f~e)}~\sG \]
That is:
\[
    \un\Delta, \un\Gamma \vDash (\lambda p.~(\lambda p'.~p'~\cps{e})~(\lambda x.~\cps{f}~x~p))~\sG= (\cps{f}~\sG) (\cps{e}~\sG)
\]
Note that since $x \notin FV(f) \implies x \notin FV(\cps{f}~\sG)$, this is
easily shown by $\beta\eta$-reduction as well.

\item \rulename{Eq-M3}

We concluded
\[\small
\arraycolsep=1pt
\begin{array}{ll}
    \DmG \vdash & \bindT{(\bindT{m}{x}{e_1})}{y}{e_2} \\
                & = \bindT{m}{x}{(\bindT{e_1}{y}{e_2})} : A'' \bang \teff
\end{array}
\]
thus we need to show that
\[\small
\arraycolsep=1pt
\begin{array}{ll}
    \un\Delta,\un\Gamma \vDash & \cps{(\bindT{(\bindT{m}{x}{e_1})}{y}{e_2})}~\sG \\
                               & \eqtwo \cps{(\bindT{m}{x}{(\bindT{e_1}{y}{e_2})})}~\sG
\end{array}
\]
That is:
\[\small
\arraycolsep=1pt
\begin{array}{ll}
          \un\Delta,\un\Gamma \vDash
            & (\lambda p.~(\lambda p'.~(\cps{m}~\sG)~(\lambda x.~(\cps{e_1}~\sG) p'))~(\lambda y.~(\cps{e_2}~\sG)~p)) \\
            & \eqtwo (\lambda p.~(\cps{m}~\sG)~(\lambda x.~(\lambda p'.~(\cps{e_1}~\sG)~(\lambda y.~(\cps{e_2}~\sG)~p'))~p))
\end{array}
\]
Note that since $x \notin FV(e_2) \implies x \notin FV(\cps{e_2})$, this
is also easily shown by $\beta\eta$-reduction.

\end{enumerate}

The proof above is easy, and that should not be surprising, as
we are translating our abstract monadic operations into a concrete
monad (continuations), thus our source equalities should be trivially
satisfied after translation.

\gm{Go into more detail in the eq proofs?}
\end{proof}

\subsection{Monotonicity}

We're interested in the monotonicity of WPs. Firstly, we need a
higher-order definition for this property.
Throughout this section we mostly ignore \deflang's typing restrictions
and work with a larger source language.
This gives us a stronger result than strictly necessary.

The types where the translation is defined are those non-dependent
and monad-free (meaning every arrow in them is a $\mTot$-arrow). No
ocurrence of $\mPure$ is allowed. The types of specifications are always
of this shape, so this is not a limitation.

For non-empty environments the theorem states:

\begin{theorem}[Monotonicity of \cpst --- environments]
\label{theorem:mono}
For any $\Delta, \Gamma, e, H, A$ one has:
\[\small
\begin{array}{llll}
    \mbox{1.} & \Delta \mid \Gamma \vdash e : H \bang \neu  & \implies & \un\Delta, \Gamma^{12} \vDash \cpsa{e} \stg_{\cps{H}} \cpsb{e} \\
    \mbox{2.} & \Delta \mid \Gamma \vdash e : A \bang \teff & \implies & \un\Delta, \Gamma^{12} \vDash \cpsa{e} \stg_{\dneg{\cps{A}}} \cpsb{e} \\
\end{array}
\]
\end{theorem}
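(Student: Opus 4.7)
\subsection*{Proof Plan for \autoref{theorem:mono}}

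The plan is to proceed by simultaneous structural induction on the two typing derivations, where the shape of $\Gamma^{12}$ has to be crafted so that the induction goes through at higher-order types. Concretely, for each binding $x@H \in \Gamma$ I would arrange that $\Gamma^{12}$ contains two copies $x^1,x^2 : \cps{H}$ together with the three hypotheses required by $\stg_{t_1 \to t_2}$: namely $x^1 \stg_{\cps{H}} x^1$, $x^1 \stg_{\cps{H}} x^2$, and $x^2 \stg_{\cps{H}} x^2$. The terms $\cpsa{e}$ and $\cpsb{e}$ are then the CPS translations of $e$ with free variables renamed to their $\cdot^1$ and $\cdot^2$ versions respectively. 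The well-typing of both sides follows from \autoref{lemma:cps-well-typed} applied separately to the two renamings, so throughout the induction we only have to produce validity proofs of $\stg$-statements.

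Most cases are discharged by unfolding $\stg$ at the conclusion type, applying the IHs to the sub-derivations, and reassembling. For variables one reads the conclusion directly from $\Gamma^{12}$. For base constants and constructor forms such as pairs and injections, the definition of $\stg$ at the relevant type bottoms out into equality or a component-wise conjunction, handled by congruence on top of the IHs. For \rulename{ST-Fst}, \rulename{ST-Snd}, and \rulename{ST-Case} the relational definition is closed under the corresponding eliminators, so the IHs plug in directly; the \rulename{ST-Case} case uses the disjunctive shape of $\stg_{t_1+t_2}$ to split into two symmetric branches. The interesting high-level schemas are:
\begin{itemize}
\item \textbf{Abstraction (\rulename{ST-Abs}).} To prove $\cpsa{(\lambda x.e)} \stg \cpsb{(\lambda x.e)}$ at $\cps{H} -> \cps{H'}$, I introduce fresh $x^1,x^2$ together with the three monotonicity hypotheses that appear in the definition of $\stg$ at an arrow, then apply the IH in the extended environment $\Gamma^{12},(x@H)^{12}$.
\item \textbf{Application (\rulename{ST-App}).} Apply the IH to $f$ at the arrow type $\cps{H} -> \cps{H'}$; its premises (reflexivities and relatedness of the argument) are precisely the content of the IH applied to $e$.
\item \textbf{Return (\rulename{ST-Ret}).} Unfold $\cps{\returnT{e}} = \lambda p.\,p\ \cps{e}$. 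Given $p_1 \stg_{\cps{A} -> \typez} p_2$, I must derive $p_1\,\cpsa{e} => p_2\,\cpsb{e}$. Because $\cps{A}$ is predicate-free, the IH $\cpsa{e} \stg_{\cps{A}} \cpsb{e}$ reduces to equality, so applying $p_1 \stg p_2$ to the shared value closes the goal.
\item \textbf{Bind (\rulename{ST-Bind}).} Unfold the translation to $\lambda p.\ \cps{e_1}(\lambda x.\ \cps{e_2}\,p)$ and apply the IH to $e_1$ at $\dneg{\cps{A}}$; its hypothesis on the continuation is discharged by the IH for $e_2$, using that the bound $x:A$ is non-effectful so the single new hypothesis $x^1 = x^2$ suffices.
\end{itemize}

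The hard part will be keeping the higher-order bookkeeping honest in the abstraction case when the bound variable has a $C$-type, since the elaborated environment must simultaneously provide the \emph{two} fresh copies and \emph{all three} $\stg$ hypotheses needed to invoke the IH, and one has to check that the shape of $\Gamma^{12}$ is preserved by the substitution $\sG$ and by the variable-renaming that defines $\cpsa{\cdot}$ and $\cpsb{\cdot}$. Once the induction is set up in this way, the original statement proved in the paper drops out as the special case $\Gamma = \cdot$, where the two copies trivialize and $\Gamma^{12}$ is empty.
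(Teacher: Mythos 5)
Your setup is the same as the paper's: the same duplicated environment $\Gamma^{12}$ carrying the triple of $\stg$-hypotheses per binding, the same induction on the typing derivation, and the same case analysis. The gap is in how you discharge the premises of $\stg$ at arrow types in the application (and, implicitly, bind) cases. You claim that the premises of the arrow-level relation for $f$ --- ``reflexivities and relatedness of the argument'' --- are precisely the content of the IH applied to $e$. They are not. The induction hypothesis for the argument yields only the single diagonal statement ${\cps{e}}^{1} \stg_{\cps{H}} {\cps{e}}^{2}$, whereas the definition of $\stg$ at $\cps{H} -> \cps{H'}$ additionally demands the two reflexive facts ${\cps{e}}^{1} \stg {\cps{e}}^{1}$ and ${\cps{e}}^{2} \stg {\cps{e}}^{2}$. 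The relation $\stg$ is explicitly \emph{not} reflexive at types mentioning $\typez$, which is exactly the situation when $e$ has a computational type $C$ (e.g.\ when the argument being passed is itself a CPS'd computation), so these two facts are not free and do not follow from the IH as stated.

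The paper closes this hole with a small but essential renaming argument: validity in $\Gamma^{12}$ is preserved by the substitutions $[2 \to 1]$ and $[1 \to 2]$ (proved by introducing the quantifiers over $x^{1}, x^{2}$, instantiating both with the same copy, and discharging the hypothesis triple degenerately), and since ${\cps{e}}^{1}[1 \to 2] = {\cps{e}}^{2}$, the single diagonal statement yields all three conjuncts. Your proposal needs this lemma, or an equivalent strengthening of the induction hypothesis that delivers all three statements at once; without it the application case with a $C$-typed argument does not go through. Your shortcut for the return case --- using that $\cps{A}$ is predicate-free so $\stg_{\cps{A}}$ collapses to equality and reflexivity is trivial --- is fine there, but it does not generalize to the higher-order arguments where the real difficulty lies. (A minor point: the substitution $\sG$ belongs to elaboration, not to the CPS translation, so it plays no role in this theorem.)
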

Where we define
\[\arraycolsep=1pt
\begin{array}{rcl}
    \cdot^{12} &=& \cdot \\
    (\Gamma, x:t)^{12} &=& \Gamma^{12},
    x^1 : \cps{t}, x_1^2 : \cps{t},
    [x^1 \stg_{\cps{t}} x^1 \land
     x^1 \stg_{\cps{t}} x^2 \land
     x^2 \stg_{\cps{t}} x^2] \\
\end{array}
\]
which essentially duplicates each variable and asserts both monotonicity
for each of them and their ordering ($[\phi]$ is notation for $h:\phi$,
where $h$ does not appear free anywhere). We then define the $-^1$
substitution as $[x_1^1/x_1, \ldots, x_n^1/x_n]$ and similarly for
$-^2$.
This trivially implies the previous monotonicity theorem.

Before jumping into the proof, we define and prove the following lemma:
\begin{lemma}
    For any $\phi$, $\Gamma^{12} \vDash \phi$ implies $\Gamma^{12} \vDash \phi[2->1]$.
    Where $[2->1]$ is the substitution mapping $x^2$ to $x^1$ for every $x$ in
    $\Gamma$. Analogously $\Gamma^{12} \vDash \phi[1->2]$.

\end{lemma}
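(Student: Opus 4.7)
The proof proceeds by induction on the length of $\Gamma$. The base case $\Gamma = \cdot$ is immediate since $\Gamma^{12} = \cdot$ and $[2 \to 1]$ is the identity substitution. For the inductive step it suffices to show that we can eliminate occurrences of $x^2$ in favor of $x^1$ for a single variable $x{:}t$ of $\Gamma$; iterating across all variables of $\Gamma$ then yields $\Gamma^{12} \vDash \phi[2 \to 1]$.

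The main tool is the standard substitution property of the validity judgment: if $\Gamma_1, y{:}s, \Gamma_2 \vDash \phi$ and $\Gamma_1 \vdash e : s$, then $\Gamma_1, \Gamma_2[e/y] \vDash \phi[e/y]$, provided the resulting context is well-formed. We apply this with $y = x^2$, $s = \cps{t}$, and $e = x^1$, noting that $x^1 : \cps{t}$ is introduced before $x^2$ in $\Gamma^{12}$, so the typing requirement is discharged immediately. After substitution, the monotonicity hypothesis $[x^1 \stg x^1 \land x^1 \stg x^2 \land x^2 \stg x^2]$ becomes $[x^1 \stg x^1 \land x^1 \stg x^1 \land x^1 \stg x^1]$, which is derivable in $\Gamma^{12}$ directly from its first (unchanged) conjunct $x^1 \stg x^1$ by conjunction introduction. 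Hence the substituted context is entailed by $\Gamma^{12}$, and weakening yields $\Gamma^{12} \vDash \phi[x^1/x^2]$.

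The dual claim $\Gamma^{12} \vDash \phi[1 \to 2]$ is perfectly symmetric: one substitutes $x^2$ for $x^1$ instead, using the last conjunct $x^2 \stg x^2$ in place of the first. Since the shape of the monotonicity hypothesis is symmetric in $x^1$ and $x^2$ (the middle conjunct is the only mixed one, and it too trivializes under either substitution), the two directions require essentially the same argument.

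\textbf{Expected obstacle.} The only delicate bookkeeping concerns how the squashed hypothesis $[x^1 \stg x^1 \land x^1 \stg x^2 \land x^2 \stg x^2]$ interacts with substitution. Since this is a proof-irrelevant assumption in \emf's classical, squashed logic (cf.~\autoref{sec:emf}), one must check that substitution on the assumption and conjunction-introduction/elimination compose cleanly in the target judgment, and that we are free to drop the middle (``mixed'') conjunct after the substitution has collapsed the two sides. Beyond this routine verification, the proof is a straightforward appeal to substitution and weakening for $\vDash$.
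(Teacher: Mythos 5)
Your proof is correct and is essentially the paper's argument: both proceed by induction on $\Gamma$, collapse $x^2$ to $x^1$ one binding at a time, and rederive the substituted hypothesis $[x^1 \stg x^1 \land x^1 \stg x^1 \land x^1 \stg x^1]$ from the first conjunct of the original squashed assumption. The one (minor) difference is how the substitution is effected: the paper stays entirely inside the validity judgment, using $\forall$-introduction three times (over $x^1$, $x^2$, and the squashed hypothesis), applying the induction hypothesis, weakening, and then instantiating both quantifiers with $x^1$ and discharging the antecedent; you instead invoke an external substitution-plus-cut metatheorem for $\vDash$, which does the same work but is not among the paper's stated rules for the validity judgment and would itself need to be established.
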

\begin{proof}
    By induction on $\Gamma$. This is trivial for an empty gamma. For
    $\Gamma' = \Gamma, x:t$, assume $(\Gamma, x:t)^{12} \vDash \phi$
    holds. By applying \rulename{V-$\forall$i} three times, we get:
    \[
        \Gamma^{12} \vDash \forall x^1, x^2.
             [x^1 \stg x^1 \land
              x^1 \stg x^2 \land
              x^2 \stg x^2] \implies \phi
    \]
    By our IH we get:
    \[
        \Gamma^{12} \vDash \forall x^1, x^2.
             [x^1 \stg x^1 \land
              x^1 \stg x^2 \land
              x^2 \stg x^2] \implies \phi[2->1]
    \]
    By weakening (note that $x^1$ and $x^2$ are not free in the RHS)
    we get that:
    \[
        (\Gamma, x:t)^{12} \vDash \forall x^1, x^2.
             [x^1 \stg x^1 \land
              x^1 \stg x^2 \land
              x^2 \stg x^2] \implies \phi[2->1]
    \]
    We can instantiate this (via \rulename{V-$\forall$e}) with
    $x^1$ on both variables to get:
    \[
        (\Gamma, x:t)^{12} \vDash
             [x^1 \stg x^1 \land
              x^1 \stg x^1 \land
              x^1 \stg x^1] \implies \phi[2->1][x^1/x^2]
    \]
    Furthermore, it is trivial to show this antecedent in the context
    $(\Gamma, x:t)^{12}$ and so we get our goal of:
    \[
        (\Gamma, x:t)^{12} \vDash \phi[2->1][x^1/x^2]
    \]
\end{proof}

\begin{lemma}
\label{lemma:subst-12}
    For any $e$, if $\Gamma^{12} \vDash \cpsa{e} \stg \cpsb{e}$, then
    $\Gamma^{12} \vDash \cpsa{e} \stg \cpsa{e}
                  \land \cpsa{e} \stg \cpsb{e}
                  \land \cpsb{e} \stg \cpsb{e}$.
\begin{proof}
    Trivial from previous lemma by noting that $\cpsa{e}[1->2]
    = \cpsb{e}$ and likewise for $[2->1]$, and then using
    \rulename{V-AndIntro}.
\end{proof}
\end{lemma}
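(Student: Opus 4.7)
The plan is to apply the substitution lemma stated immediately above, which preserves validity in $\Gamma^{12}$ under both $[2 \to 1]$ and $[1 \to 2]$. The hypothesis $\Gamma^{12} \vDash \cpsa{e} \stg \cpsb{e}$ already delivers the middle conjunct of the goal, so I only need to derive the two ``diagonal'' conjuncts $\cpsa{e} \stg \cpsa{e}$ and $\cpsb{e} \stg \cpsb{e}$, and then combine everything via $\rulename{V-AndIntro}$.

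The key syntactic observation is that, by definition, $\cpsa{e}$ is obtained by applying the substitution $-^1 = [x_1^1/x_1, \dots, x_n^1/x_n]$ to $\cps{e}$, and $\cpsb{e}$ by applying the analogous $-^2$. Consequently $\cpsa{e}$ contains no variable of the form $x^2$, so $\cpsa{e}[2\to 1] = \cpsa{e}$; and $\cpsb{e}[2\to 1]$ rewrites every $x^2$ back to $x^1$, yielding exactly $\cps{e}[-^1] = \cpsa{e}$. Instantiating the previous lemma with $\phi \eqdef (\cpsa{e} \stg \cpsb{e})$ and the substitution $[2 \to 1]$ therefore gives $\Gamma^{12} \vDash \cpsa{e} \stg \cpsa{e}$. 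The symmetric application of the same lemma with $[1 \to 2]$ yields $\Gamma^{12} \vDash \cpsb{e} \stg \cpsb{e}$.

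Two applications of $\rulename{V-AndIntro}$ then conjoin the three validities into the statement of the lemma. There is no real obstacle here: the entire argument reduces to the bookkeeping that the renaming substitutions $-^1$, $-^2$, $[1\to 2]$, and $[2\to 1]$ compose as claimed on CPS-translated terms, which is mechanical since all four substitutions merely rename free variables without capture.
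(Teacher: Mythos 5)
Your proposal is correct and follows exactly the paper's own argument: apply the preceding substitution lemma with $[2\to 1]$ and $[1\to 2]$, use the syntactic identities $\cpsa{e}[2\to 1]=\cpsa{e}$, $\cpsb{e}[2\to 1]=\cpsa{e}$ (and symmetrically), and conclude with \rulename{V-AndIntro}. You simply spell out the renaming bookkeeping that the paper leaves implicit.
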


\paragraph{Proof of \autoref{theorem:mono} (\nameref{theorem:mono})}

\begin{proof}

We prove these two propositions by induction on the typing derivation
for $e$. Throughout the proof $\Delta$ plays no special role, so we just
drop it from the reasoning, keeping in mind that it has to be there for
having well-formed types (but nothing else).

Note that during the proof we treat $\stg_X$ abstractly, so any
instantiation with a proper type (not necessarily those where $\stg$
reduces to equality) would be OK.

Throughout this proof we sometimes skip the subindices for $\stg$ in
favor of compactness. Hopefully, they should be clear from the context.

\begin{enumerate}
\item \rulename{ST-Var}

    We need to show $\Gamma^{12} \vDash x_i^1 \stg_{\cps{t_i}} x_i^2$.
    This is trivial from the context and by using the
    \rulename{V-Assume} and \rulename{V-AndElim$i$} rules.

\item \rulename{ST-Const}

    The constants only deal with base types, so all inductive hypotheses
    for the arguments reduce to an equality, as does our goal. Our goal
    is is then trivially provable by applications of \rulename{V-EqP}.

\item \rulename{ST-Abs}

    Say we concluded $\Gamma, x:t \vdash e : s \bang \varepsilon$
    As IH we have:
        \[ (\Gamma, x:t)^{12} \vDash \cpsa{e}[x^1/x] \stg_{s'} \cpsb{e}[x^2/x] \]
    Where $s'$ is either $\cps{s}$ or $\dneg{\cps{s}}$ depending on
    $\varepsilon$. The proof is independent of this.
    What we need to prove is:
        \[ \Gamma^{12} \vDash (\lambda x:\cps{t}. \cpsa{e}) \stg_{\cps{t} -> s'} (\lambda x:\cps{t}. \cpsb{e}) \]
    Which by definition is:
        \[\arraycolsep=1pt\begin{array}{lcr}
        \Gamma^{12} &\vDash& \forall x^1, x^2:\cps{t}.\,
        x^1 \stg_{\cps{t}} x^1 \land
        x^1 \stg_{\cps{t}} x^2 \land
        x^2 \stg_{\cps{t}} x^2 \implies \\
            && (\lambda x:\cps{t}. \cpsa{e})~x^1 \stg_{s'} (\lambda x:\cps{t}. \cpsb{e})~x^2
        \end{array}
        \]
    By reduction (\rulename{V-EqRed} + \rulename{V-Eq*}), this is equivalent to:
        \[\arraycolsep=1pt\begin{array}{lcr}
        \Gamma^{12} &\vDash& \forall x^1, x^2:\cps{t}.\,
        x^1 \stg_{\cps{t}} x^1 \land
        x^1 \stg_{\cps{t}} x^2 \land
        x^2 \stg_{\cps{t}} x^2 \implies \\
            && \cpsa{e}[x^1/x] \stg_{s'} \cpsb{e}[x^2/x]
        \end{array}
        \]
    Which we can conclude from three applications of
    \rulename{V-$\forall$i}, and our IH.

\item \rulename{ST-App}

    Say $\Gamma \vdash f : t \earr s \bang \neu$ and $\Gamma \vdash e :
    a \bang \neu$. As inductive hypothesis we get:
        \[ \Gamma^{12} \vDash \cpsa{f} \stg_{\cps{t} -> s'} \cpsb{f}
        \qquad
           \Gamma^{12} \vDash \cpsa{e} \stg_{\cps{t}} \cpsb{e} \]
    Where $s'$ is either $\cps{s}$ or $\dneg{\cps{s}}$ depending on
    $\varepsilon$. Again, the proof is independent of this.
    Applying \autoref{lemma:subst-12} for our second IH we get that:
    \[
        \Gamma^{12} \vDash \cpsa{e} \stg_{\cps{t}} \cpsa{e}
                     \land \cpsa{e} \stg_{\cps{t}} \cpsb{e}
                     \land \cpsb{e} \stg_{\cps{t}} \cpsb{e}
    \]
    Expanding the definition of $\stg$ on the IH for $f$ we get:
        \[\arraycolsep=1pt\begin{array}{lcr}
        \Gamma^{12} &\vDash& \forall x^1, x^2:\cps{t}.\,
        x^1 \stg_{\cps{t}} x^1 \land
        x^1 \stg_{\cps{t}} x^2 \land
        x^2 \stg_{\cps{t}} x^2 \implies \\
         && f^{*1}~x^1 \stg_{s'} f^{*2}~x^2
        \end{array}
        \]
    We instantiate (using \rulename{V-$\forall$e}) $x^1,x^2$ with
    $\cpsa{e}, \cpsb{e}$, and apply \rulename{V-MP} with our proof about
    $\cpsa{e}$ and $\cpsb{e}$ to get:
        \[ \Gamma^{12} \vDash f^{*1}~\cpsa{e} \stg_{s'} f^{*2}~\cpsb{e} \]
    Which is exactly our goal in any ($\narr, \tarr$) case.

\item \rulename{ST-Ret}

    Say $\Gamma \vdash \returnT{e} : t \bang \teff$. Our IH gives us:
        \[\small \Gamma^{12} \vDash \cpsa{e} \stg_{\cps{t}} \cpsb{e} \]
    And we need to show that:
        \[\small \Gamma^{12} \vDash (\lambda p.\, p~\cpsa{e}) \stg_{\dneg{\cps{t}}} (\lambda p.\, p~\cpsb{e}) \]
    That is:
        \[\arraycolsep=1pt\begin{array}{lcr}
        \Gamma^{12} &\vDash& \forall p^1, p^2.\,
        p^1 \stg p^1 \land
        p^1 \stg p^2 \land
        p^2 \stg p^2 \implies \\
            && (\lambda p.\, p~\cpsa{e})~p_1 \stg_{s'} (\lambda p.\, p~\cpsb{e})~p_2
        \end{array}
        \]
    By reduction this is:
        \[\arraycolsep=1pt\begin{array}{lcr}
        \Gamma^{12} &\vDash& \forall p^1, p^2.\,
        p^1 \stg p^1 \land
        p^1 \stg p^2 \land
        p^2 \stg p^2 \implies \\
            && p_1~\cpsa{e} \stg_{s'} p_2~\cpsb{e}
        \end{array}
        \]
    By applying \autoref{lemma:subst-12} to $e$ we get:
    \[
        \small \Gamma^{12} \vDash
        \cpsa{e} \stg_{\cps{t}} \cpsa{e} \land
        \cpsa{e} \stg_{\cps{t}} \cpsb{e} \land
        \cpsb{e} \stg_{\cps{t}} \cpsb{e}
    \]
    With this, we can easily prove our goal by applying
    \rulename{V-$\forall$i}, and then the assumption of $p_1 \stg p_2$
    applied to this last proof.

\item \rulename{ST-Bind}

    Say $\Gamma \vdash m : a \bang \teff$ and $\Gamma, x:a \vdash e :
    b \bang \teff$, so we get $\Gamma \vdash \bindT{m}{x}{e} : b \bang
    \teff$. Our IHs are:
    \[\small
      \arraycolsep=1pt
      \begin{array}{rl}
          \Gamma^{12} & \vDash \cpsa{m} \stg_{\dneg{a}} \cpsb{m} \\
          (\Gamma, x:a)^{12} & \vDash \cpsa{e}~[x^1/x] \stg_{\dneg{b}} \cpsb{e}~[x^2/x]
      \end{array}
    \]
    We need to show that:
        \[\small \Gamma^{12} \vDash (\lambda p.~\cpsa{m}~(\lambda x.\, \cpsa{e}~p)) \stg_{\dneg{\cps{b}}} (\lambda p.~\cpsb{m}~(\lambda x.\, \cpsb{e}~p)) \]
    Which by expanding the definition, applying
    \rulename{V-$\forall$i}, and reducing can be simplified to:
        \[\arraycolsep=1pt\begin{array}{l}
        \Gamma^{12}, p^1, p^2,
        [ p^1 \stg p^1 \land
          p^1 \stg p^2 \land
            p^2 \stg p^2 ] \vDash \\
            \qquad \qquad \qquad
            \cpsa{m}~(\lambda x.\, \cpsa{e}~p^1)
            \stg
            \cpsb{m}~(\lambda x.\, \cpsb{e}~p^2)
        \end{array}
        \]
    By our IH we know that $\cpsa{m} \stg \cpsb{m}$, so it would be enough
    to show that:
        \[\arraycolsep=1pt\begin{array}{l}
        \Gamma^{12}, p^1, p^2,
        [ p^1 \stg p^1 \land
          p^1 \stg p^2 \land
            p^2 \stg p^2 ] \vDash \\
            \qquad \qquad \qquad \qquad \qquad
            (\lambda x.\, \cpsa{e}~p^1)
            \stg
            (\lambda x.\, \cpsb{e}~p^2)
        \end{array}
        \]
    and then use \autoref{lemma:subst-12}. Expanding the definitions,
    applying \rulename{V-$\forall$i} and reducing this can be shown by:
        \[\arraycolsep=1pt\begin{array}{l}
        \Gamma^{12},
          p^1, p^2,
        [ p^1 \stg p^1 \land
          p^1 \stg p^2 \land
          p^2 \stg p^2 ] \hphantom{\vDash}\\
          \hfill
          x^1, x^2,
        [ x^1 \stg x^1 \land
          x^1 \stg x^2 \land
          x^2 \stg x^2 ] \vDash \\
            \hfill
            \cpsa{e}[x^1/x]~p^1
            \stg
            \cpsb{e}[x^2/x]~p^2
            \hfill
        \end{array}
        \]
    Because of our assumptions for $p^1$ and $p^2$, this can be shown by
    proving $\cpsa{e}[x^1/x] \stg \cpsb{e}[x^2/x]$.
    This is trivial by our IH for $e$, weakening it into this extended
    environment that includes $p^i$.

\item \rulename{ST-Pair}, \rulename{ST-Fst}, \rulename{ST-Inl}

    All trivial from IHs.

\item \rulename{ST-Case}

    By case analysis on the IH for the sum type, and reduction.

\end{enumerate}
\end{proof}

Having this proof implies that any well-typed term will be given a
monotonic specification. And, as a consequence, functions preserve
monotonicity.

\subsection{Conjunctivity}

The definition of conjunctivity on \emf predicate types was given
previously. The full theorem which we prove is this:

\gm{NOTE: For WPs on simple types conjunctivity implies monotonicity, by
noting that if $p_1 => p_2$ then $p_1 \land p_2 = p_1$ and unfolding.
Thus a reader might wonder why we prove both. The answer is that (1) at
a higher-order that's probably not true (by our definitions) and (2) our
conjunctivity is only on the postcondition, or the final argument, not
on every argument as monotonicity}

\newcommand{\GammaC}{\Gamma_\mathbb{C}}

\begin{theorem}[Conjunctivity of \cpst --- environments]
For any $\Delta, \Gamma, e, H, A$ one has:
\[\small
\begin{array}{llll}
    \mbox{1.} & \Delta \mid \Gamma \vdash e : C \bang \neu  & \implies & \un\Delta, \GammaC \vDash \C{\cps{C}}{\cps{e}} \\
    \mbox{2.} & \Delta \mid \Gamma \vdash e : A \bang \teff & \implies & \un\Delta, \GammaC \vDash \C{\dneg{\cps{A}}}{\cps{e}} \\
\end{array}
\]

Where when $\Gamma = x_1:t_1,\ldots$, we define $\GammaC =
x_1:\cps{t_1}, [\C{\cps{t_1}}{x_1}],\ldots$. This trivially implies the
previously stated theorem by taking $\Gamma = \cdot$.
\end{theorem}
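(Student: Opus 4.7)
I will prove the two statements jointly by structural induction on the typing derivation of $e$, following exactly the pattern of the monotonicity theorem just proved. The context extension $\GammaC$ is tailored to this induction: each hypothesis $x:t$ in $\Gamma$ gives rise to $x:\cps{t}$ together with an assumption $[\C{\cps{t}}{x}]$ of conjunctivity at the translated type. The variable case (\rulename{ST-Var}) is then immediate, as the required $\C{\cps{t}}{x}$ is discharged directly by \rulename{V-Assume}. For \rulename{ST-Const} the codomain is a base type $b$, on which $\C{b}{\cdot} = \True$, so nothing is to prove. For \rulename{ST-Abs} at type $t \narr s$, unfolding $\C{\cps{t} -> \cps{s}}{\lambda x.\cps{e}}$ produces a universally quantified implication whose body $\beta$-reduces to $\C{\cps{s}}{\cps{e}}$ under the added assumption $\C{\cps{t}}{x}$; this is precisely the IH for $e$ in the context extended as $\GammaC$ prescribes. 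The cases for \rulename{ST-App}, \rulename{ST-Pair}, \rulename{ST-Fst}/\rulename{ST-Snd}, \rulename{ST-Inl}/\rulename{ST-Inr}, and \rulename{ST-Case} follow by routine use of the IHs, the structural definition of $\C{\cdot}{\cdot}$ on pairs, and (for case analysis) a split on the scrutinee that matches branches against conjunctivity of results.

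\textbf{The interesting cases.} For \rulename{ST-Ret} on $\returnT{e}$ with $e : A \bang \neu$, the goal $\C{\dneg{\cps{A}}}{\lambda p.\, p~\cps{e}}$ unfolds to $\forall p_1,p_2.\ (\lambda p.\, p~\cps{e})~p_1 \land (\lambda p.\, p~\cps{e})~p_2 \eqtwo (\lambda p.\, p~\cps{e})~(\lambda y.\, p_1~y \land p_2~y)$; all three terms $\beta$-reduce to $p_1~\cps{e} \land p_2~\cps{e}$, closing the goal via \rulename{V-EqRed}. The main obstacle, and the only case that genuinely exercises the CPS structure, is \rulename{ST-Bind} on $\bindT{e_1}{x}{e_2}$ with $e_1 : A \bang \teff$ and $e_2 : A' \bang \teff$. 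Its CPS is $\lambda p.\, \cps{e_1}~(\lambda x.\, \cps{e_2}~p)$; after unfolding $\C{\dneg{\cps{A'}}}{\cdot}$ and $\beta$-reducing, the goal becomes
\[
\cps{e_1}(\lambda x.\, \cps{e_2}~p_1) \land \cps{e_1}(\lambda x.\, \cps{e_2}~p_2) \eqtwo \cps{e_1}(\lambda x.\, \cps{e_2}(\lambda y.\, p_1~y \land p_2~y)).
\]
Applying the IH for $e_1$ (conjunctivity at $\dneg{\cps{A}}$) with continuations $k_i = \lambda x.\, \cps{e_2}~p_i$ rewrites the left-hand side to $\cps{e_1}(\lambda x.\, \cps{e_2}~p_1 \land \cps{e_2}~p_2)$. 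The IH for $e_2$, instantiated in the extended context with $x : \cps{A}$ (whose $\C{\cps{A}}{x} = \True$ assumption is trivial since $\cps{A}$ is predicate-free), supplies $\cps{e_2}~p_1 \land \cps{e_2}~p_2 \eqtwo \cps{e_2}(\lambda y.\, p_1~y \land p_2~y)$, and lifting this pointwise (in $x$) equality through the outer $\lambda x$ and under the $\cps{e_1}$ application closes the goal.

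\textbf{Main obstacle and wrap-up.} The delicate step in the bind case is justifying that $\cps{e_1}$ applied to extensionally equal continuations yields equal results. This is where I expect to need a small auxiliary congruence lemma: if two functions are provably equal at every point, then $\cps{e_1}$ applied to them produces equal outcomes. It follows either from functional extensionality in \emf's squashed proof-irrelevant logic, or, more operationally, from the fact that $\eqtwo$ is closed under congruence and the two candidate continuations $\lambda x.\, \cps{e_2}~p_1 \land \cps{e_2}~p_2$ and $\lambda x.\, \cps{e_2}(\lambda y.\, p_1~y \land p_2~y)$ are equal as functions by the IH for $e_2$. Once all cases are dispatched, instantiating with $\Gamma = \cdot$ (whence $\GammaC$ is also empty) recovers the simpler closed-term statement given in the body of the paper. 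The overall structure parallels the monotonicity proof: the assumption $[\C{\cps{t}}{x}]$ carried in $\GammaC$ plays precisely the role that the monotonicity triple $x^1 \stg x^1 \land x^1 \stg x^2 \land x^2 \stg x^2$ played there, and again only the bind case exercises the full monadic structure of the CPS translation.
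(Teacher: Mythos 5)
Your proof is correct and follows essentially the same route as the paper's: joint induction on the typing derivation, with all cases routine except \rulename{ST-Bind}, where both arguments use the IH for $e_1$ at $\dneg{\cps{A}}$ together with the IH for $e_2$ lifted through the continuation via a functional-extensionality/congruence step (the paper's \rulename{V-Ext} plus \rulename{V-Subst}, which you correctly anticipate as the one auxiliary fact needed). The only cosmetic differences are the order of rewriting in the bind case and your treatment of \rulename{ST-Const} (the paper observes the case simply does not arise for $C\bang\neu$ or $A\bang\teff$, whereas you discharge it via $\C{b}{\cdot}=\True$); neither affects correctness.
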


\begin{proof}
    By induction on the typing derivations. Once again, $\Delta$ does
    not play a big role and we omit it.

\begin{enumerate}
\item \rulename{ST-Var}

    Trivial from context, for any type.

\item \rulename{ST-Const}

    Does not apply as no constant gives a type $C\bang\neu$ nor $A\bang\teff$

\item \rulename{ST-Abs}

    Say we concluded $\Gamma, x:t \vdash e:s\bang\varepsilon$ (where
    that might be $C\bang\neu$ or $A\bang\teff$, we treat both cases
    uniformly). From the IH we get
        \[
            \GammaC, x:\cps{t}, [\C{\cps{t}}{x}] \vDash \C{s'}{e}
        \]
    Where $s'$ is $\cps{s}$ or $\dneg{\cps{s}}$ according to
    $(s,\varepsilon)$. By applying \rulename{V-$\forall$i} twice we get:
        \[
            \GammaC \vDash \forall x:\cps{t}. \C{\cps{t}}{x} => \C{s'}{e}
        \]
    Which is the same, by reduction, as:
        \[
            \GammaC \vDash \forall x:\cps{t}. \C{\cps{t}}{x} => \C{s'}{(\lambda x.~e~x)~x}
        \]
    Thus by definition of $\mathbb{C}$:
        \[
            \GammaC \vDash \C{t -> s'}{\lambda x.~e~x}
        \]
    As required for both cases.

\item \rulename{ST-App}

    Trivial by the preservation of $\mathbb{C}$ by application, in both
    cases (applies \rulename{V-MP}).

\item \rulename{ST-Ret}

    Say we concluded $\Gamma \vdash \returnT{e} : A\bang\teff$. Our goal is then:
        \[
            \GammaC \vDash \C{\dneg{\cps{A}}}{\lambda p.~p~\cps{e}}
        \]
    Which is:
        \[\arraycolsep=0pt\begin{array}{ll}
            \GammaC \vDash \forall p_1, p_2. & ~(\lambda p.~p~\cps{e})p_1 \land (\lambda p.~p~\cps{e})p_2 \\
                                             & = (\lambda p.~p~\cps{e})(\lambda x. p_1~x\land p_2~x) \\
        \end{array}\]
    By reduction that's equivalent to:
        \[
            \GammaC \vDash \forall p_1, p_2. ~p_1~\cps{e} \land p_2~\cps{e} = p_1~\cps{e} \land p_2~\cps{e} \\
        \]
    Which is trivially true (without use of any IH) by \rulename{V-Refl}.

\item \rulename{ST-Bind}

    Say we concluded $\Gamma \vdash \bindT{e_1}{x}{e_2} : A' \bang \teff$, where $e_1 : A\bang\teff$. Our IHs are:
        \[\arraycolsep=1pt\begin{array}{rl}
            \GammaC \vDash & \C{\dneg{\cps{A}}}{\cps{e_1}} \\
            \GammaC, x:\cps{A}, [\C{\cps{A}}{x}] \vDash & \C{\dneg{\cps{A'}}}{\cps{e_2}} \\
          \end{array}\]
    We need to show:
        \[
            \GammaC \vDash \C{\dneg{\cps{A'}}}{\lambda p. \cps{e_1}(\lambda x. \cps{e_2}~p)}
        \]
    Expanding the definition, this is:
        \[\arraycolsep=0pt\begin{array}{ll}
            \GammaC \vDash \forall p_1, p_2.~~ & (\lambda p. \cps{e_1}(\lambda x. \cps{e_2}~p))~p_1 \land (\lambda p. \cps{e_1}(\lambda x. \cps{e_2}~p))~p_2 \\
                                               & = (\lambda p. \cps{e_1}(\lambda x. \cps{e_2}~p))(\lambda x. p_1~x \land p_2~x) \\
        \end{array}\]
    By reduction, this is equivalent to:
        \[\arraycolsep=0pt\begin{array}{ll}
            \GammaC \vDash \forall p_1, p_2.~~ & \cps{e_1}(\lambda x. \cps{e_2}~p_1) \land \cps{e_1}(\lambda x. \cps{e_2}~p_2) \\
                                               & = \cps{e_1}(\lambda x. \cps{e_2}~(\lambda x. p_1~x \land p_2~x)) \\
        \end{array}\]
    By the IH for $e_2$ we know
    $\forall x.~\cps{e_2}~(\lambda x. p_1~x \land p_2~x) = \cps{e_2}p_1 \land \cps{e_2}p_2$.
    By reduction and \rulename{V-Ext} this means
    $(\lambda x. \cps{e_2}~(\lambda x. p_1~x \land p_2~x)) = (\lambda x. \cps{e_2}p_1 \land \cps{e_2}p_2)$
    Thus we replace on the RHS (via \rulename{V-Subst}) and get:
        \[\arraycolsep=0pt\begin{array}{ll}
            \GammaC \vDash \forall p_1, p_2.~~ & \cps{e_1}(\lambda x. \cps{e_2}~p_1) \land \cps{e_1}(\lambda x. \cps{e_2}~p_2) \\
                                               & = \cps{e_1}(\lambda x. \cps{e_2}p_1 \land \cps{e_2}p_2) \\
        \end{array}\]
    By some $\eta$-expansion and the IH for $e_1$ we can turn this to:
        \[\arraycolsep=0pt\begin{array}{ll}
            \GammaC \vDash \forall p_1, p_2.~~ & \cps{e_1}(\lambda x. \cps{e_2}~p_1) \land \cps{e_1}(\lambda x. \cps{e_2}~p_2) \\
                                               & = \cps{e_1}(\lambda x. \cps{e_2}p_1) \land \cps{e_1}(\lambda x. \cps{e_2}p_2) \\
        \end{array}\]
    Which is trivially provable by \rulename{V-Refl}.

\item \rulename{ST-Pair}, \rulename{ST-Fst}

    All trivial by IHs.

\item \rulename{ST-Inl}

    Does not apply for the cases we consider.

\item \rulename{ST-Case}

    Trivial by \rulename{V-SumInd} and the IHs.

\end{enumerate}
\end{proof}

Thus, any term obtained by the \cpst (return, bind, actions, lifts,
...) will be conjunctive in this sense, which means they also preserve
the property through application.

With a completely analogous definition and proof we get the expected
result of conjunctivity over (non-empty) universal quantification.
The non-empty requirement is not actually stressed during that proof,
but it's the wanted result as WPs (which can be taken as arguments)
might not distribute over empty universals (in particular,
non-satisfiable WPs do not).

\ch{Not clear what you mean by range}
\gm{Basically that the forall is at least over one element (you can
search the Coq proof for 'inhabited'), or the property doesn't hold (as
it should not). If you do an empty forall you always get True on one
side and the other can be just anything. For anyone familiar with the
property it should be clear, but I could put some more words}
\gm{Though of it a bit more, our WPs distribute over both but we're
interested in non-empty, as that's the case for real F*}

\fi

\bibliographystyle{abbrvnaturl}
\bibliography{fstar}

\end{document}